\newcommand\preitem{\mdseries\textbullet\space}
\newlist{desclist}{description}{3}
\setlist[desclist,1]{format=\preitem\bfseries,leftmargin=\widthof{\preitem},style=sameline}
\newcommand{\Rmnum}[1]{\expandafter\@slowromancap\romannumeral #1@}
\makeatother \providecommand{\U}[1]{\protect \rule{.1in}{.1in}}
\newtheorem{theorem}{Theorem}
\newtheorem{lemma}{Lemma}
\newtheorem{proposition}{Proposition}
\makeatother \providecommand{\U}[1]{\protect \rule{.1in}{.1in}}
\def \b#1{\overline{#1}}
\def \W#1{\widehat{#1}}
\def \t#1{\widetilde{#1}}
\title{The Fokas-Lenells equations: Bilinear approach}
\author{Shu-zhi Liu, ~ Jing Wang,~  Da-jun Zhang\footnote{Corresponding author. Email: djzhang@staff.shu.edu.cn}\\
{\small  Department of Mathematics,
 Shanghai University, Shanghai 200444,  P.R. China}}
\date{\today}
\begin{document}

\maketitle

\begin{abstract}
In this paper, the Fokas-Lenells equations are investigated via bilinear approach.
We bilinearize the unreduced Fokas-Lenells system, derive double Wronskian solutions,
and then, by means of a reduction technique we obtain variety of solutions of the reduced equations.
This enables us to have a full profile of solutions of the classical and nonlocal Fokas-Lenells equations.
Some obtained solutions are illustrated based on asymptotic analysis.
As a notable new result, we obtain solutions to the Fokas-Lenells equation,
which are related to real discrete eigenvalues and not reported before in the analytic approaches.
These solutions behave like  (multi-)periodic waves or solitary waves with algebraic decay.
In addition, we also obtain solutions to the two-dimensional massive Thirring model
from those of the Fokas-Lenells equation.

\vskip 5pt
\noindent
\textbf{Key Words:}\quad Fokas-Lenells equation, bilinear, double Wronskian, nonlocal, real eigenvalue
\end{abstract}

\section{Introduction}\label{sec-1}

The Fokas-Lenells (FL) equation,
\begin{equation}\label{FL1}
iu_{t}-\nu u_{tx}+\gamma u_{xx}+\sigma |u|^{2}(u+i\nu u_x)=0,\quad \sigma=\pm 1,
\end{equation}
as a novel generalization of the nonlinear Schr\"odinger (NLS) equation,
was first derived using bi-Hamiltonian structures of the NLS equation\cite{Fokas-PD-1995},
where $\nu$ and $\gamma$ are real parameters.
This equation is integrable and belongs to
the derivative nonlinear Schr\"odinger (DNLS) hierarchy \cite{Lenells-F-Non-2009}
that is related to the Kaup-Newell (KN) spectral problem.
It is notable that the FL equation \eqref{FL1} is equivalent to the following one \cite{Lenells-SAPM-2009}
\begin{equation}\label{FL2}
u_{xt}+u-2i\delta|u|^{2}u_{x}=0,~~\delta=\pm 1.
\end{equation}
The latter is derived for modeling   propagation of nonlinear  pulses in monomode optical fibers
where $u$ is assumed to describe the slowly varying envelope of the pulse \cite{Lenells-SAPM-2009}.
It is interesting that Eq.\eqref{FL2} is a reduced potential form of the first negative member in the KN hierarchy,
\begin{subequations} \label{CFL}
\begin{align}
& u_{xt}+u-2iuvu_{x}=0,\label{CFLu}\\
& v_{xt}+v+2iuvv_{x}=0,\label{CFLv}
\end{align}
\end{subequations}
by imposing reduction $v=\delta u^*$, where $i$ is the imaginary unit and $*$ denotes complex conjugate.
Both \eqref{FL1} and \eqref{FL2} can be called the Fokas-Lenells equation,
and in the following we call \eqref{CFL} the pKN$(-1)$ for convenience.
Note that the pKN$(-1)$ system \eqref{CFL} is also known as  the Mikhailov model
(cf.\cite{GIK-1980,GI-1982}).
It is A.V. Mikhailov in 1976 who first gave a Lax pair for the two-dimensional massive Thirring model
in laboratory coordinates.
Later it was shown that the Lax pair in light-cone coordinates is gauge equivalent to
the  KN spectral problem \cite{KN-1977} and solutions of the massive Thirring model
could be obtained by solving Eq.\eqref{FL2}, i.e. the reduced pKN$(-1)$ \cite{KN-1977,GIK-1980}.
We will explain how Eq.\eqref{FL2} and the massive Thirring model are related in Appendix \ref{app-0}.

Before Fokas and Lenells' work, Eq.\eqref{FL2} has been solved using
direct linearization approach \cite{NCQV-1983}.
More recently,
by virtue of a clear integrable background associated with the well studied KN spectral problem,
solutions of the FL equation \eqref{FL1} or \eqref{FL2} have been derived by means of
the Riemann-Hilbert method or inverse scattering transform
\cite{Lenells-F-Non-2009,Ai-Xu-AML-2019, Zhao-F-JNMP-2021},
dressing chain \cite{Lenells-JNS-2010},
algebro-geometric method \cite{Zhao-F-JNMP-2013},
Darboux transformation \cite{He-XP-JPSJ-2012,Xu-HCP-MMAS-2014,Wang-XL-AML-2020},
and a variable separation technique \cite{Wang-QGM-ND-2019}.
Note that in \cite{Wang-XL-AML-2020} Eq.\eqref{FL2} is shown to be related to
the Zakharov-Shabat and  Ablowitz-Kaup-Newell-Segur (ZS-AKNS)  spectral problem.

Without using Lax pairs, the FL equation \eqref{FL2} was bilinearized and
determinantal solutions of the bilinear FL equations were constructed in a  delicate direct way
by introducing determinants of the Cauchy matrix type \cite{Matsuno-JPA-2012a,Matsuno-JPA-2012b}.
Another direct approach was presented in \cite{Vek-Non-2011},
where a chain of B\"acklund transformations of the pKN$(-1)$ \eqref{CFL} was constructed and viewed
as  semi-discrete equations in the Merola-Ragnisco-Tu  hierarchy (cf.\cite{MRT-IP-1994})
and solutions of the FL equation \eqref{FL2} were given in term of Cauchy matrix
by using connection between  the Merola-Ragnisco-Tu  hierarchy and the Ablowitz-Ladik hierarchy.
From the bilinear form given in  \cite{Matsuno-JPA-2012a} it is easy to get 3-soliton solution
in terms of Hirota's polynomials of exponential functions \cite{LiuZLX-Opt-2020}.
However, it is difficult to give double Wronskian solutions to those  FL equations.
As a matter of fact, Eqs.\eqref{CFL} is the potential form KN$(-1)$ (i.e. the first negative member in the KN hierarchy,
see Eq.\eqref{KN-1}).
To our understanding, it is difficult to express the integral $u=\partial^{-1}_x q$ in terms of double Wronskians.

In this paper, we aim to construct double Wronskian solutions for the FL equation \eqref{FL2}.
This will allow us to have more freedom to understand possible distributions of eigenvalues,
 present different kinds of solutions
(e.g. solitons, breathers and multiple pole solutions), explore their interactions
(cf.\cite{Zha-Wro-2019,ZhaZSZ-RMP-2014}),
and as a result, give a full profile of the FL equations
from bilinear approach and double Wronskian forms.
We will start from the pKN$(-1)$ system \eqref{CFL}.
First we will bilinearize \eqref{CFL} and prove their double Wronskian solutions.
Note that the double Wronskians that we employ in the paper have different structures from those of the
AKNS hierarchy (cf.\cite{ChenDLZ-SAPM-2018}), the KN equation (cf.\cite{KakeiSS-JPSJ-1995})
and the Chen-Lee-Liu equation (cf.\cite{ZhaiC-PLA-2008}).
After that we will impose reductions on the double Wronskians using the technique
developed in \cite{ChenDLZ-SAPM-2018}.
This allows us to have solutions not only for the FL equation \eqref{FL2}
but also for its nonlocal partner
\begin{equation}\label{non-equv}
u_{xt}+u-2i\delta uu(-x,-t)u_{x}=0,~~ \delta=\pm 1,
\end{equation}
which is reduced from \eqref{CFL} via a nonlocal reduction $v(x,t)=\delta u(-x,-t)$.
Note that nonlocal integrable systems were  first systematically proposed in 2013
in \cite{AblM-PRL-2013} and have received intensive attention
(e.g.\cite{Zhou-SAPM-2018,AblFLM-SAPM,YY-SAPM-2018,Caud-SAPM-2018,AblM-JPA-2019,YY-LMP-2019,
Lou-SAPM-2019,BioW-SAPM-2019,Lou-CTP-2020,AblLM-Nonl-2020,RaoCPMH-PD-2020,GurPKZ-PLA-2020,
LiFW-SAPM-2020,RybS-JDE-2021,RybS-CMP-2021}).
The reduction also enables us to see how the distribution  of eigenvalues varies with the constraints
imposed in the local and nonlocal reductions.
It is worthy to mention that, apart from those solutions related to complex discrete eigenvalues
(cf.\cite{Lenells-F-Non-2009,Ai-Xu-AML-2019,Zhao-F-JNMP-2021}),
the FL equation \eqref{FL2} allows solutions related to real discrete eigenvalues.
These solutions exhibit (multi-)periodic behaviors,
and also provide solitary waves with algebraic decay as $|x|\to \infty$,
which are not found in the analytic approaches based on spectral analysis
(cf.\cite{Lenells-F-Non-2009,Ai-Xu-AML-2019,Zhao-F-JNMP-2021}).

The paper is organized as follows. As preliminary, in Sec.\ref{sec-2} we recall integrable backgrounds
of the pKN$(-1)$ system \eqref{CFL} and give notations of double Wronskians and some determinantal identities.
Then in Sec.\ref{sec-3} we bilinearize \eqref{CFL}, present double Wronskian solutions,
and implement the reduction technique to get solutions for \eqref{FL2} and \eqref{non-equv}.
Next, dynamics of some obtained solutions are investigated for the FL equation \eqref{FL2} in Sec.\ref{sec-4}
and for the nonlocal FL equation \eqref{non-equv} in Sec.\ref{sec-5}.
Finally, concluding remarks are given in Sec.\ref{sec-6}.
There are three appendices.
The first one introduces known results that how the massive Thirring model,
the KN spectral problem and the FL equation are connected.
The second one presents $N$-soliton solution formula of the pKN$(-1)$ system \eqref{CFL}
via Hirota's expression, and the third one consists of a detailed proof of double Wronskian solutions.

\section{Preliminary}\label{sec-2}

\subsection{Integrability of the FL equations}\label{sec-2-1}

As an integrable background let us recall the relation between the FL equation and the KN hierarchy.
The KN spectral problem reads \cite{KN-1977,KaupN-JMP-1978}
\begin{equation} \label{KN-sp}
\begin{pmatrix} \varphi_1\\ \varphi_2\end{pmatrix}_x
=\begin{pmatrix}\frac{i}{2}\lambda^2 & \lambda q \\ \lambda r &-\frac{i}{2}\lambda^2 \end{pmatrix}
\begin{pmatrix} \varphi_1\\ \varphi_2\end{pmatrix},
\end{equation}
from which one can derivative the well known KN hierarchy
\begin{equation} \label{KN-hie}
\begin{pmatrix} q\\ r\end{pmatrix}_{t_n}
=L^n \begin{pmatrix} -q \\ r \end{pmatrix},
\end{equation}
where the recursion operator $L$ is
\begin{equation*}
L=\partial_x\begin{pmatrix}-i+2q\partial^{-1}_x r  &   2q\partial^{-1}_xq  \\
 2r\partial^{-1}_x r  & i+2r\partial^{-1}_x q  \end{pmatrix},
\end{equation*}
and $\partial_x=\frac{\partial}{\partial x}$, $\partial_x^{-1}\partial_x=\partial_x\partial_x^{-1}=1$.
Here $\lambda$ is the spectral parameter, and both $q$ and $r$ are functions of $(x,t)$.
When $n=2$, the hierarchy \eqref{KN-hie} yields the second order KN system (KN(2) for short)
\begin{subequations} \label{KN2}
\begin{align}
&q_{t}+iq_{xx}-2(q^{2}r)_{x}=0, \\
&r_{t}-ir_{xx}-2(qr^{2})_{x}=0,
\end{align}
\end{subequations}
from which the DNLS equation,
\begin{equation}\label{DNLS}
q_t+iq_{xx}-2\delta(|q|^{2}q)_{x}=0, ~~ \delta=\pm 1,
\end{equation}
is obtained by imposing reduction $r=\delta q^*$.
When $n=-1$, we have KN$(-1)$ equations, i.e.
\begin{equation*}
L\begin{pmatrix} q\\ r\end{pmatrix}_{t }
=  \begin{pmatrix} -q \\ r \end{pmatrix},
\end{equation*}
which reads( with $t\rightarrow -it$)
\begin{subequations} \label{KN-1}
\begin{eqnarray}
  &&q_t+\partial^{-1}q-2iq\partial^{-1}(q\partial^{-1}r+r\partial^{-1}q)=0, \\
  &&r_t+\partial^{-1}r+2ir\partial^{-1}(q\partial^{-1}r+r\partial^{-1}q)=0.
\end{eqnarray}
\end{subequations}
Introduce potentials $(u,v)$ by
\begin{equation}
q=u_x,~~ r=v_x,
\end{equation}
then \eqref{KN-1} can easily be written into the local form \eqref{CFL}.
Thus, the latter is the potential form of the KN$(-1)$ equations \eqref{KN-1}.
As we mentioned before, the pKN$(-1)$ system \eqref{CFL} allows two reductions,
$v=\pm u^*$ and $v(x,t)=\pm u(-x,-t)$.

\subsection{Wronskians and some determinantal identities}\label{sec-2-2}

Let $\phi$ and $\psi$ be $(N+M)$-th order column vectors
\begin{equation}\label{phipsi}
\phi=(\phi_1,\phi_2,\ldots,\phi_{N+M})^T,~~\psi=(\psi_1,\psi_2,\ldots,\psi_{N+M})^T,
\end{equation}
where elements $\phi_j$ and $\psi_j$ are smooth functions of $(x,t)$.
A standard double Wronskian is a determinant of the following form
\[
W^{[N,M]}_{}(\phi; \psi)=|\phi,\partial_{x}\phi,\cdots,\partial_{x}^{N-1}\phi;
\psi,\partial_{x}\psi,\cdots,\partial_{x}^{M-1}\psi|.
\]
We introduce short-hand (cf.\cite{ChenDLZ-SAPM-2018,ChenZ-AML-2018,Deng-AMC-2018})
\[W^{[N,M]}_{}(\phi; \psi)=|\W{\phi^{[N-1]}_{}}; \W{\psi^{[M-1]}_{}}|,
\]
where by $\W{\phi^{[N-1]}_{}}$ we mean
consecutive columns $(\phi,\partial_{x}\phi,\cdots,\partial_{x}^{N-1}\phi)$.
Without making any confusion, we also employ the conventional compact notation
\[W^{[N,M]}_{}(\phi; \psi)=|0,1,\cdots,N-1; 0,1,\cdots,M-1|=|\W{N-1}; \W{M-1}|
\]
that was introduced in \cite{Nimmo-NLS-1983}.

Following the above notations, we introduce four more double Wronskians
\begin{equation}
\begin{array}{l}
|\t N; \W {M-1}|=|1,2,\cdots,N; 0,1,\cdots,M-1|,\\
|\W N; \t{M-1}|=|0,1,\cdots,N; 1,2,\cdots,M-1|,\\
|\b N; \W M|=|2,3,\cdots,N; 0,1,\cdots,M|,\\
|\t N; \t M|=|1,2,\cdots,N; 1,2,\cdots,M|,
\end{array}
\end{equation}
which will be employed in presenting solutions of the bilinear FL-equations in next section.

We also need the following identities which will be used in verifying solutions of bilinear equations.

\begin{lemma}\label{lemma 1}\cite{FreN-PLA-1983}
$$|\mathbf{M}, \mathbf{a},\mathbf{b}||\mathbf{M},\mathbf{c},\mathbf{d}|
-|\mathbf{M}, \mathbf{a},\mathbf{c}||\mathbf{M},\mathbf{b},\mathbf{d}|
+|\mathbf{M}, \mathbf{a},\mathbf{d}||\mathbf{M},\mathbf{b}, \mathbf{c}|=0,$$
where $\mathbf{M}$ is an arbitrary $N\times (N-2)$ matrix, and $\mathbf{a}$, $\mathbf{b}$,
$\mathbf{c}$ and $\mathbf{d}$ are $N$th-order column vectors.
\end{lemma}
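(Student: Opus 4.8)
The plan is to recognize this as the Grassmann--Plücker relation for the maximal minors of the $N\times(N+2)$ matrix $[\mathbf{M},\mathbf{a},\mathbf{b},\mathbf{c},\mathbf{d}]$, and to reduce it, by an allowable change of basis, to the single elementary relation among $2\times 2$ determinants. First I would dispose of the degenerate case: if the $N-2$ columns of $\mathbf{M}$ are linearly dependent, then each of the six determinants $|\mathbf{M},\cdot,\cdot|$ contains $N-2$ dependent columns among its $N$ columns and hence vanishes, so all three products are zero and the identity holds trivially. Thus I may assume $\mathbf{M}$ has full column rank $N-2$.

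Next I would normalize $\mathbf{M}$. Since $\mathbf{M}$ has rank $N-2$, there is an invertible $N\times N$ matrix $P$ with $P\mathbf{M}=\binom{I_{N-2}}{0}$. Left multiplication by $P$ sends $|\mathbf{M},\mathbf{x},\mathbf{y}|$ to $|P\mathbf{M},P\mathbf{x},P\mathbf{y}|=\det(P)\,|\mathbf{M},\mathbf{x},\mathbf{y}|$, since $[P\mathbf{M},P\mathbf{x},P\mathbf{y}]=P[\mathbf{M},\mathbf{x},\mathbf{y}]$ as a product of $N\times N$ matrices. Hence every one of the six determinants is rescaled by the common nonzero factor $\det P$, each of the three products acquires the same factor $(\det P)^2$, and the identity for $(P\mathbf{a},P\mathbf{b},P\mathbf{c},P\mathbf{d})$ is equivalent to the original. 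So I may assume $\mathbf{M}=\binom{I_{N-2}}{0}$.

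With $\mathbf{M}$ in this form, the matrix $[\mathbf{M},\mathbf{x},\mathbf{y}]$ is block lower-triangular with top-left block $I_{N-2}$, so expanding along the first $N-2$ columns collapses the determinant to $x_{N-1}y_N-x_N y_{N-1}$, the $2\times 2$ minor formed from the last two entries of $\mathbf{x},\mathbf{y}$. Writing $(a_1,a_2)$ for $(a_{N-1},a_N)$ and similarly for $b,c,d$, the asserted identity becomes
\[
(a_1b_2-a_2b_1)(c_1d_2-c_2d_1)-(a_1c_2-a_2c_1)(b_1d_2-b_2d_1)+(a_1d_2-a_2d_1)(b_1c_2-b_2c_1)=0,
\]
which is the classical $2\times 4$ Plücker relation and follows by direct expansion, the twelve resulting monomials cancelling in pairs.

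The only points requiring any care are the bookkeeping in the degenerate case and the verification that left multiplication rescales every determinant by the same $\det P$, so that the quadratic identity is genuinely invariant under the normalization; the terminal cancellation is purely mechanical. Alternatively, and perhaps more in the spirit of \cite{FreN-PLA-1983}, one may establish the relation in one stroke through a Laplace/Jacobi expansion: it is an instance of the Desnanot--Jacobi (adjugate) minor identity, equivalently the statement that the six $2\times 2$ minors lying in two fixed rows of a $2\times(N+2)$ block obey the unique quadratic relation of $Gr(2,4)$. The main obstacle in that route is tracking the Laplace cofactor signs correctly, which is precisely what the basis-reduction argument above sidesteps.
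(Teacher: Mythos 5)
Your proof is correct. Note first that the paper itself offers no proof of this lemma: it is quoted verbatim from Freeman--Nimmo \cite{FreN-PLA-1983} as a known determinantal identity, so there is no in-paper argument to compare against. Your argument is a complete, self-contained substitute, and it takes a genuinely different route from the classical one in the cited source. The standard proof (and the one in the spirit of \cite{FreN-PLA-1983}) builds a $2N\times 2N$ determinant containing two copies of $\mathbf{M},\mathbf{a},\mathbf{b},\mathbf{c},\mathbf{d}$ padded with zero blocks, shows it vanishes by column operations, and reads off the three-term identity from its Laplace expansion along the first $N$ rows; the entire difficulty there is the cofactor sign bookkeeping you allude to at the end. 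Your normalization argument trades that for a rank case split plus a $\mathrm{GL}_N$-equivariance observation: the degenerate case is handled correctly (dependent columns of $\mathbf{M}$ kill all six determinants), the reduction $P\mathbf{M}=\binom{I_{N-2}}{0}$ rescales each determinant by the same $\det P$ and hence each quadratic term by $(\det P)^2$, and the residual identity is the unique Pl\"ucker relation on $Gr(2,4)$, which you verify by the twelve-monomial cancellation (I checked: all pairs do cancel). The only blemish is cosmetic: the matrix $\bigl[\binom{I_{N-2}}{0},\mathbf{x},\mathbf{y}\bigr]$ is block \emph{upper}-triangular, not lower-triangular, but the collapse to the $2\times 2$ minor $x_{N-1}y_N-x_Ny_{N-1}$ is right either way. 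What your approach buys is immunity from sign errors; what it costs is that it is tied to the specific column pattern $(\mathbf{M},\cdot,\cdot)$, whereas the Laplace-expansion proof generalizes immediately to higher Pl\"ucker relations with more interchanged columns.
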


\begin{lemma}\label{lemma 2}\cite{ZDJ-arxiv,ZhaZSZ-RMP-2014}
Let $\Xi=(a_{js})_{N\times N}$ be a $N\times N$ matrix with
column vectors $\{\alpha_j\}$. $\Gamma=(\gamma_{js})_{N\times N}$ is an $N\times N$
operator matrix and each $\gamma_{js}$ is a operator. Then we have
\begin{equation*}
\sum^N_{j=1}|\Gamma_{j}\ast \Xi|=\sum^N_{j=1}|(\Gamma^{T})_{j}\ast \Xi^{T}|,
\end{equation*}
where
\begin{equation*}
|A_{j}\ast \Xi|=|\Xi_{1},\ldots,\Xi_{j-1},A\circ \Xi_{j},\Xi_{j+1},\ldots,\Xi_{N}|,
\end{equation*}
and
\begin{equation*}
A_{j}\circ B_{j}=(A_{1,j}B_{1,j},A_{2,j}B_{2,j},\ldots,A_{N,j}B_{N,j}),
\end{equation*}
in which $A_{j}=(A_{1,j},A_{2,j},\ldots,A_{N,j})^{T}$ and $B_{j}=(B_{1,j},B_{2,j},\ldots,B_{N,j})^{T}$
are $N$th-order column vectors.
\end{lemma}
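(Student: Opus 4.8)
The plan is to collapse both sides to one and the same scalar double sum by Laplace (cofactor) expansion, exploiting the fact that the operator matrix $\Gamma$ enters each determinant only through a single, column--localized Hadamard product. First I would fix notation. Once every operator $\gamma_{js}$ has acted on the corresponding entry $a_{js}$, each $|\Gamma_j\ast\Xi|$ is an ordinary determinant of functions that differs from $\det\Xi$ only in its $j$-th column, whose $i$-th entry is $\gamma_{ij}a_{ij}$. I would write $C_{ij}$ for the $(i,j)$ cofactor of $\Xi$, that is, $(-1)^{i+j}$ times the minor obtained by deleting row $i$ and column $j$. The point to record at the outset is that $C_{ij}$ involves none of the entries of column $j$, so it is completely untouched by the operator action in that column.

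Next I would expand the left-hand side. Laplace expansion of $|\Gamma_j\ast\Xi|$ along its $j$-th column gives $\sum_{i=1}^N \gamma_{ij}a_{ij}\,C_{ij}$, and summing over $j$ yields
\[
\sum_{j=1}^N|\Gamma_j\ast\Xi|=\sum_{i=1}^N\sum_{j=1}^N \gamma_{ij}\,a_{ij}\,C_{ij}.
\]
I would then treat the right-hand side in the same way, now working with $\Xi^T$, whose $(i,j)$ entry is $a_{ji}$, and $\Gamma^T$, whose $(i,j)$ operator is $\gamma_{ji}$. Expanding $|(\Gamma^T)_j\ast\Xi^T|$ along its $j$-th column produces $\sum_{i=1}^N \gamma_{ji}a_{ji}\,\widetilde C_{ij}$, where $\widetilde C_{ij}$ denotes the $(i,j)$ cofactor of $\Xi^T$.

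The key link is the standard transpose identity for cofactors, $\widetilde C_{ij}=C_{ji}$. This holds because deleting row $i$ and column $j$ from $\Xi^T$ yields the transpose of the matrix obtained by deleting row $j$ and column $i$ from $\Xi$, a determinant is invariant under transposition, and the sign satisfies $(-1)^{i+j}=(-1)^{j+i}$. Substituting, the right-hand side becomes $\sum_{i,j}\gamma_{ji}a_{ji}C_{ji}$, and relabelling the dummy indices $i\leftrightarrow j$ turns this into exactly the double sum found for the left-hand side. Equating the two then finishes the proof.

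The only point that genuinely requires care, and the main obstacle, is the bookkeeping of the operator-valued entries: one must confirm that applying each $\gamma_{ij}$ is compatible with the cofactor expansion rather than conflicting with it. This is clean in the present setting precisely because each operator acts on a single entry of the single replaced column, so after the action one is expanding an ordinary scalar determinant and the cofactors $C_{ij}$ never interact with the operators. With that observation in place, the remainder is routine transpose bookkeeping.
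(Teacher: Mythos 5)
Your argument is correct: expanding each $|\Gamma_j\ast\Xi|$ along its modified $j$-th column, using that the cofactors $C_{ij}$ do not involve that column, invoking $\widetilde C_{ij}=C_{ji}$ for the transpose, and relabelling the dummy indices reduces both sides to the same double sum $\sum_{i,j}\gamma_{ij}a_{ij}C_{ij}$. The paper itself states this lemma with citations and gives no proof, but your cofactor-expansion argument is the standard one found in the cited sources, so there is nothing to flag.
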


\section{Solutions of the FL equations}\label{sec-3}

In this section,  we  derive  bilinear form of  the pKN$(-1)$ equation \eqref{CFL},
present its double Wronskian solutions,
and apply  reduction technique to obtain solutions of   classical and nonlocal FL equations.

\subsection{Bilinear form and double Wronskian solution}\label{sec-3-1}

With dependent variable transformation
\begin{equation}\label{tran}
u=\frac{g}{f},\quad v=\frac{h}{s},
\end{equation}
the pKN$(-1)$ equation \eqref{CFL} can be bilinearized as the following
\begin{subequations}\label{bilinear-form}
\begin{align}
& D_xD_t\ g\cdot f+gf=0, \label{bilinear-a}\\
& D_xD_t\ h\cdot s+hs=0, \label{bilinear-b}\\
& D_xD_t\ f\cdot s+iD_x\ g\cdot h=0,\label{bilinear-c} \\
& D_t\ f\cdot s+igh=0,\label{bilinear-d}
\end{align}
\end{subequations}
where $D$ is the Hirota bilinear operator defined as \cite{Hirota-1974}
\begin{equation}
 D_x^mD_t^n\ f\cdot g\equiv(\partial_x -\partial_{x^{\prime}})^m (\partial_t-\partial_{t^{\prime}})^n
 f(x,t) g(x^{\prime},t^{\prime})|_{x^{\prime}=x,t^{\prime}=t}.
\end{equation}

$N$-soliton solution in Hirota's form is presented in Appendix \ref{app-A}.
With regard to double Wronskion solutions, we have the following.

\begin{theorem}\label{Theorem 1}
The bilinear equations \eqref{bilinear-form} admit double Wronskian solutions
\begin{equation}
\label{wronskian-1}
f=|\widetilde{N}; \widehat{M-1}|,\quad g=|\widehat{N}; \widetilde{M-1}|,
\quad h=-\frac{i}{2}|\overline{N}; \widehat{M}|,\quad s=|\widetilde{N}; \widetilde{M}|,
\end{equation}
where the elementary column vectors $\phi$ and $\psi$ satisfy
\begin{subequations}\label{wron-cond-x}
\begin{align}
& \phi_x=\frac{i}{2}A^{2}\phi,\quad \phi_t=-\frac{1}{4}\partial^{-1}_x \phi,\label{wron-cond-x-a}\\
& \psi_x=-\frac{i}{2}A^{2}\psi,\quad \psi_t=-\frac{1}{4}\partial^{-1}_x \psi,\label{wron-cond-x-b}
\end{align}
\end{subequations}
Here $A$ is an arbitrary invertible constant matrix in $\mathbb{C}_{(N+M)\times(N+M)}$.
A general form of $\phi$ and $\psi$ obeying \eqref{wron-cond-x} is
\begin{subequations}\label{wronskian}
\begin{align}
& \phi=\exp\Bigl(\frac{i}{2}A^{2}x+\frac{i}{2}A^{-2}t\Bigr) C ,\label{phi}\\
& \psi=\exp\Bigl(-\frac{i}{2}A^{2}x-\frac{i}{2}A^{-2}t\Bigr) B,\label{psi}
\end{align}
\end{subequations}
where $B$ and $C$ are $(N+M)$-th order constant column vectors.
\end{theorem}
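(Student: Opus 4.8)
The plan is to verify that the Wronskians \eqref{wronskian-1} satisfy each of the four bilinear equations \eqref{bilinear-form} by direct computation, following the standard Wronskian technique and reducing everything to the Pl\"ucker-type identity of Lemma \ref{lemma 1}. The first step is to extract two clean differentiation rules from the conditions \eqref{wron-cond-x}. Since $\phi_x=\frac{i}{2}A^2\phi$ and $\psi_x=-\frac{i}{2}A^2\psi$, every column $\partial_x^j\phi$ (resp. $\partial_x^j\psi$) is sent by $\partial_x$ to the next column $\partial_x^{j+1}\phi$, so $x$-differentiation raises the column index by one. Because $\phi_t=-\frac14\partial_x^{-1}\phi$, we get $\partial_t\,\partial_x^j\phi=-\frac14\partial_x^{j-1}\phi$, so $t$-differentiation lowers the index by one and introduces the constant factor $-\frac14$; the same holds for $\psi$. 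I would also record that the matrix relation makes the out-of-range low column explicit, $\partial_x^{-1}\phi=-2iA^{-2}\phi$ (and likewise for $\psi$), which is what keeps the lowered columns under control.

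With these rules, the second step is to differentiate each Wronskian using the column-sum rule $\partial|c_1,\dots,c_n|=\sum_k|c_1,\dots,\partial c_k,\dots,c_n|$. Raising or lowering an interior column duplicates a column already present, so every term vanishes except those coming from the top and bottom columns of each $\phi$- and $\psi$-block. Thus $f_x,f_t,g_x,g_t,h_x,h_t,s_x,s_t$ are each short sums of Wronskians in which a single boundary column has been shifted out of its consecutive range (producing an index such as $N{+}1$, $-1$, or $M{+}1$), and the mixed derivatives $f_{xt},g_{xt},h_{xt},s_{xt}$ are sums with two shifted columns. I would introduce compact notation for these boundary-shifted determinants and, where convenient, compute $f_{xt}$ in both orders to fix sign conventions.

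The third step is to substitute these expressions into the expanded bilinear forms, using $D_xD_t\,a\cdot b=a_{xt}b-a_xb_t-a_tb_x+ab_{xt}$, $D_x\,a\cdot b=a_xb-ab_x$, and $D_t\,a\cdot b=a_tb-ab_t$. Each of \eqref{bilinear-a}--\eqref{bilinear-d} then becomes a signed sum of products of two $(N{+}M)\times(N{+}M)$ Wronskians sharing a common $(N{+}M)\times(N{+}M{-}2)$ core together with four boundary columns. The Pl\"ucker identity of Lemma \ref{lemma 1} collapses each such sum to zero, while Lemma \ref{lemma 2} together with $\partial_x^{-1}\phi=-2iA^{-2}\phi$ is used to reorganize the terms carrying the matrix factor $A^{\pm2}$ that arise from the shifted low columns and from the undifferentiated products $gf$, $hs$, $gh$ in \eqref{bilinear-a}--\eqref{bilinear-d}.

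I expect the bookkeeping in this last step to be the main obstacle rather than any single deep idea. Equations \eqref{bilinear-a} and \eqref{bilinear-b} should be near-mirror images of one another under the exchange $\phi\leftrightarrow\psi$ (equivalently $A\to-A$), once the prefactor $-\frac{i}{2}$ in $h$ is accounted for, and \eqref{bilinear-d}, being only first order in $t$, should reduce to a single Laplace expansion. The delicate case is \eqref{bilinear-c}: it couples $f\cdot s$ through $D_xD_t$ with $g\cdot h$ through $D_x$, so the two families of products originate from Wronskians of different shapes and must first be aligned — same core matrix, matching boundary columns — before Lemma \ref{lemma 1} applies. Getting that alignment and the relative factor of $i$ correct is where I would concentrate the effort.
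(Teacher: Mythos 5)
Your plan follows the paper's own proof (Appendix \ref{app-B}) essentially step for step: differentiate the Wronskians via the column shift rules coming from \eqref{wron-cond-x}, substitute into the expanded Hirota forms, use Lemma \ref{lemma 2} with $\partial_x^{-1}\phi=-2iA^{-2}\phi$ to reorganize the terms carrying $\mathrm{Tr}(A^{-2})$, and finish with the Pl\"ucker identity of Lemma \ref{lemma 1} plus whole-column rescalings that introduce the factors $(-2i)^{N+M}|A|^{-2}$. One small correction to a parenthetical remark: the exchange $\phi\leftrightarrow\psi$ corresponds to $A^{2}\to-A^{2}$ (i.e.\ $A\to iA$), not $A\to-A$, though this does not affect the argument.
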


The proof is given in Appendix \ref{app-B}.

\begin{proposition}\label{prop-3-1}
$A$ and its any similar form lead to same $u$ and $v$ through
\eqref{tran} and \eqref{wronskian-1}.
\end{proposition}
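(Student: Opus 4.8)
The plan is to show that a similarity transformation $A\mapsto PAP^{-1}$, with $P$ an arbitrary invertible constant matrix, multiplies each of the four double Wronskians in \eqref{wronskian-1} by the single common scalar $\det P$, which then cancels in the ratios $u=g/f$ and $v=h/s$.

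First I would record how the conditions \eqref{wron-cond-x} transform. If $\phi,\psi$ satisfy \eqref{wron-cond-x} with matrix $A$, then $P\phi,P\psi$ satisfy the same system with $A$ replaced by $\t A=PAP^{-1}$. Indeed, since $\t A^{2}=PA^{2}P^{-1}$ we have $(P\phi)_x=P\phi_x=\tfrac{i}{2}PA^{2}\phi=\tfrac{i}{2}\t A^{2}(P\phi)$, and the $t$-equations together with the $\psi$-equations follow identically; the invertibility of $A$ assumed in Theorem~\ref{Theorem 1} guarantees $\t A^{-2}=PA^{-2}P^{-1}$, so the $t$-flow is handled in the same way. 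Equivalently, at the level of the explicit form \eqref{wronskian} one has $\exp\bigl(\tfrac{i}{2}\t A^{2}x+\tfrac{i}{2}\t A^{-2}t\bigr)=P\exp\bigl(\tfrac{i}{2}A^{2}x+\tfrac{i}{2}A^{-2}t\bigr)P^{-1}$, since the matrix exponential commutes with conjugation; choosing the constant vectors $PC,PB$ for $\t A$ then reproduces exactly $P\phi,P\psi$.

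Next I would push this through the determinants. Because $P$ is constant, $\partial_x^{k}(P\phi)=P\,\partial_x^{k}\phi$, so every column appearing in any of $f,g,h,s$ is simply left-multiplied by $P$, i.e.\ the entire determinant matrix becomes $P$ times the original. Hence each of $|\t N;\W{M-1}|$, $|\W N;\t{M-1}|$, $|\b N;\W M|$ and $|\t N;\t M|$ acquires the factor $\det P$. Writing the transformed quantities as $\t f=\det P\,f$, $\t g=\det P\,g$, $\t h=\det P\,h$, $\t s=\det P\,s$, the transformation \eqref{tran} yields $\t u=\t g/\t f=g/f=u$ and $\t v=\t h/\t s=h/s=v$, which is the assertion.

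I do not expect a genuinely hard step; the only point requiring care is the bookkeeping of the constant vectors. One must note that replacing $A$ by $\t A$ together with the matched data $(PC,PB)$ is precisely the content of the statement, and that since $B,C$ in \eqref{wronskian} are arbitrary this matching is always available — equivalently, the conjugation freedom in $A$ is redundant, being absorbable into the arbitrary choice of $B$ and $C$. It remains only to observe that $\det P\neq0$, which is immediate from the invertibility of $P$, so that the cancellation in the ratios is legitimate.
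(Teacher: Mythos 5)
Your proposal is correct and follows essentially the same route as the paper: the similarity transformation of $A$ is realized by left-multiplying $\phi,\psi$ by the conjugating matrix, which multiplies each of $f,g,h,s$ by its determinant, and this common nonzero factor cancels in the ratios \eqref{tran}. The extra bookkeeping you supply (the transformed exponential in \eqref{wronskian} and the matched constant vectors $PC,PB$) is a harmless elaboration of the same argument.
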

\begin{proof}
Let $\Lambda=\Gamma A \Gamma^{-1}$ be a matrix similar to $A$ and
$\phi'=\Gamma\phi,~\psi'=\Gamma\psi$.
Then, $(\phi',\psi')$ also satisfy \eqref{wron-cond-x} but with $A$ replaced by $\Lambda$,
and the double Wronskians composed by $(\phi,\psi)$ and by $(\phi',\psi')$ are simply connected by
\[w(\phi',\psi')=|\Gamma|w(\phi,\psi),\]
where $w$ can be $f,g,h$ and $s$.
Thus, the double Wronskians composed by $(\phi',\psi')$ also solve the bilinear equations \eqref{bilinear-form}
and lead to same $u$ and $v$ as before.
\end{proof}

This proposition enables us to have a full profile of solutions for the pKN$(-1)$ equations \eqref{CFL}
by considering canonical forms of $A$.

\subsection{Reductions and solutions }\label{sec-3-2}

In this subsection, we impose suitable constraints on $\phi$ and $\psi$ given in \eqref{wronskian},
so that $u$ and $v$ defined through \eqref{tran} and \eqref{wronskian-1}
can satisfy the relations $v(x,t)=\pm u^*( x,t)$ and $v(x,t)=\pm u(-x,-t)$.
We will also look for explicit forms of $\phi$ and $\psi$ that obey those constraints.
As a result, explicit double Wronskian solutions for the classical FL equation \eqref{FL2}
and nonlocal FL equation \eqref{non-equv} will be obtained.

\subsubsection{Case of the classical FL equation}\label{sec-3-2-1}

We note that, compared with the double Wronskian solutions of the AKNS hierarchy
(cf. Eq.(20) in \cite{ChenDLZ-SAPM-2018}),
the solutions presented in \eqref{wronskian-1} for the pKN$(-1)$ system are more complicated.
To implement a reasonable reduction, let us take $M=N$ and assume
\begin{equation}\label{AS}
A^2=\delta SS^*,~~ \delta=\pm 1,
\end{equation}
where $S$ is an undetermined invertible matrix in $\mathbb{C}_{2N\times 2N}$.
Note that relation \eqref{AS} indicates
\begin{equation}\label{AS1}
A^2S=  S {A^*}^2.
\end{equation}
Then we immediately have from \eqref{wronskian} that
\begin{align*}
S\phi^* & =S \exp\Big[-\frac{i}{2}(A^{*2}x+(A^{*})^{-2}t)\Big]C^* ,\nonumber\\
& =\exp\Big[-\frac{i}{2}(A^{2}x+A^{-2}t)\Big]S C^*=\psi, \label{phipsi*}
\end{align*}
where we have taken $B=SC^*$.
With this relation it then follows that we can write the double Wronskians \eqref{wronskian-1} in terms of
only $\phi$:
\begin{subequations}
\begin{align}
 &f= |\widetilde{N}; \widehat{N-1}|=\Big(\frac{i}{2}\Big)^{N}
 |A^{2}\W{\phi^{[N-1]}_{}}; S \W{{\phi^{[N-1]}_{}}^*}|,\\
 &g=|\widehat{N}; \widetilde{N-1}|=\Big(-\frac{i}{2}\Big)^{N-1}
 |\W{\phi^{[N]}}; A^{2}S \W{{\phi^{[N-2]}_{}}^*}|,\\
 &h=-\frac{i}{2}|\overline{N}; \widehat{N}|
 =-\Big(\frac{i}{2}\Big)^{2N-1}|A^{4}\W{\phi^{[N-2]}_{}}; S\W{{\phi^{[N]}_{}}^*}|,\\
 &s=|\widetilde{N}; \widetilde{N}|
 =\Big(\frac{i}{2}\Big)^{2N}(-1)^N|A|^{2}|\W{\phi^{[N-1]}_{}}; S \W{{\phi^{[N-1]}_{}}^*}|.
\end{align}
\end{subequations}

Then, using \eqref{AS} (i.e. ${A^*}^2=\delta S^* S$), we have
\begin{align*}
f^{*} &= \Big(-\frac{i}{2}\Big)^{N}|{A^*}^{2}\W{{\phi^{[N-1]}_{}}^*}; S^* \W{\phi^{[N-1]}_{}}|,\\
&= \Big(\frac{i}{2}\Big)^{N}|S^* \W{\phi^{[N-1]}_{}}; {A^*}^{2}\W{{\phi^{[N-1]}_{}}^*}|\\
&=\Big (\frac{i}{2}\Big)^{N}|S^*|| \W{\phi^{[N-1]}_{}}; (S^*)^{-1}{A^*}^{2}\W{{\phi^{[N-1]}_{}}^*}|\\
&= \Big(\frac{i}{2}\Big)^{N}|S^*|| \W{\phi^{[N-1]}_{}}; \delta S \W{{\phi^{[N-1]}_{}}^*}|\\
&= (2i)^{N}\delta^{N}|S|^{-1}s,
\end{align*}
and
\begin{align*}
g^{*} & =\Big(\frac{i}{2}\Big)^{N-1}|\W{{\phi^{[N]}_{}}^*}; {A^*}^{2}S^* \W{\phi^{[N-2]}_{}}|\\
&=\Big(-\frac{i}{2}\Big)^{N-1}|{A^*}^{2}S^* \W{\phi^{[N-2]}_{}}; \W{{\phi^{[N]}_{}}^*}|\\
&=\Big(-\frac{i}{2}\Big)^{N-1}|S^*{A}^{2} \W{\phi^{[N-2]}_{}}; \W{{\phi^{[N]}_{}}^*}|\\
&=\Big(-\frac{i}{2}\Big)^{N-1}|S|^{-1}|SS^*{A}^{2} \W{\phi^{[N-2]}_{}}; S\W{{\phi^{[N]}_{}}^*}|\\
&=\Big(-\frac{i}{2}\Big)^{N-1}|S|^{-1}|\delta{A}^{4} \W{\phi^{[N-2]}_{}}; S\W{{\phi^{[N]}_{}}^*}|\\
&=(2i)^{N}\delta^{N-1}|S|^{-1} h.
\end{align*}
This leads to, by noting that $\delta=\pm 1$ and $|A|^2=|S||S|^*$,
$\frac{g^*}{f^*}=\delta \frac{h}{s}$,
i.e. $v(x,t)=\delta u^*(x,t)$.

Let us summarize the above results by the following lemma.

\begin{lemma}\label{Lem-3-2a}
For the Wronskians \eqref{wronskian-1} with \eqref{wronskian},
taking $M=N$ and assuming \eqref{AS} and $B=SC^*$,
where  $S$ is some invertible matrix in $\mathbb{C}_{2N\times 2N}$,
we have the relation
\begin{equation}\label{psi-S-phi}
\psi=S\phi^*
\end{equation}
and
\begin{subequations}
\begin{align}
f^{*} &=(2i)^{N}\delta^{N}|S|^{-1}s,\\
g^{*} & =(2i)^{N}\delta^{N-1}|S|^{-1} h.
\end{align}
\end{subequations}
These give rise to $v(x,t)=\delta u^*(x,t)$
when $u$ and $v$ are defined by \eqref{tran}.
\end{lemma}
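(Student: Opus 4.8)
The plan is to prove the three displayed relations in \eqref{psi-S-phi} and its companions in turn, and then read off $v=\delta u^{*}$ from them. The one structural fact I would use everywhere is that, since $A$ is a constant matrix, the spatial conditions \eqref{wron-cond-x} turn every $x$-derivative into a matrix multiplication: from $\phi_x=\frac{i}{2}A^{2}\phi$ one gets $\partial_x^{j}\phi=\frac{i}{2}A^{2}\,\partial_x^{j-1}\phi$, and similarly $\partial_x^{j}\psi=-\frac{i}{2}A^{2}\,\partial_x^{j-1}\psi$. This is exactly what lets me pull scalar and matrix factors out of individual Wronskian columns.

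First I would establish $\psi=S\phi^{*}$. Starting from the explicit forms \eqref{phi}--\eqref{psi} with the choice $B=SC^{*}$, I take the complex conjugate of \eqref{phi}, multiply on the left by $S$, and slide $S$ through the exponential using the intertwining relation \eqref{AS1} (which is just $A^{2}=\delta SS^{*}$ rewritten, with $\delta\in\mathbb{R}$). Since $A^{2}S=SA^{*2}$ forces $\exp(-\tfrac{i}{2}A^{2}x)S=S\exp(-\tfrac{i}{2}A^{*2}x)$, and likewise for the $t$-part, the factor $S$ passes all the way to the right onto $C^{*}$ and reproduces \eqref{psi}, giving \eqref{psi-S-phi}.

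Next I would rewrite $f,g,h,s$ from \eqref{wronskian-1} in terms of $\phi$ alone. Because $S$ is constant, $\psi=S\phi^{*}$ immediately yields $\partial_x^{j}\psi=S\,\partial_x^{j}\phi^{*}$, so every $\psi$-column becomes $S$ times a conjugated $\phi$-column. Whenever a block starts above the zeroth derivative (the $\widetilde{\ }$ and $\overline{\ }$ blocks), I apply $\partial_x^{j}\phi=\frac{i}{2}A^{2}\partial_x^{j-1}\phi$ or $\partial_x^{j}\psi=-\frac{i}{2}A^{2}\partial_x^{j-1}\psi$ to shift it down, extracting a power of $\pm\frac{i}{2}$ from each column together with a matrix factor $A^{2}$ (or $A^{4}$) in front of the block; this reproduces the four determinant identities displayed just before the lemma. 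Conjugating those identities, I then replace $A^{*2}$ by $\delta S^{*}S$ (the conjugate of \eqref{AS}), swap the two column blocks at the cost of a sign $(-1)^{N^{2}}=(-1)^{N}$, extract $|S^{*}|$ from the determinant, and use $(S^{*})^{-1}A^{*2}=\delta S$ to recognize the conjugate of $f$ as a scalar multiple of $s$ and that of $g$ as a scalar multiple of $h$, arriving at $f^{*}=(2i)^{N}\delta^{N}|S|^{-1}s$ and $g^{*}=(2i)^{N}\delta^{N-1}|S|^{-1}h$.

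Finally, dividing these two relations and noting $\delta=\pm1$ gives $g^{*}/f^{*}=\delta^{-1}h/s=\delta\,h/s$, so that $u^{*}=g^{*}/f^{*}=\delta v$, i.e. $v=\delta u^{*}$ through \eqref{tran}. I expect the main obstacle to be the bookkeeping in the third paragraph: one must track the scalar prefactors, the powers of $A^{2}$, and especially the block-swap sign and the $|S^{*}|$ extraction with enough precision that the conjugated Wronskians collapse \emph{exactly} onto $s$ and $h$ with no residual factor. The algebraic inputs ($\psi=S\phi^{*}$, $A^{*2}=\delta S^{*}S$, and $|A|^{2}=|S||S^{*}|$) are elementary; it is the sign and scalar accounting where an error would most plausibly arise.
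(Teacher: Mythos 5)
Your proposal is correct and follows essentially the same route as the paper: establish $\psi=S\phi^{*}$ by sliding $S$ through the exponential via $A^{2}S=SA^{*2}$ with $B=SC^{*}$, rewrite $f,g,h,s$ in terms of $\phi$ alone by converting $x$-derivatives into factors of $\pm\frac{i}{2}A^{2}$ per column, then conjugate, swap blocks (sign $(-1)^{N^{2}}=(-1)^{N}$), extract $|S^{*}|$, and use $(S^{*})^{-1}A^{*2}=\delta S$ together with $|A|^{2}=|S||S^{*}|$ to identify $f^{*}$ with $s$ and $g^{*}$ with $h$ up to the stated scalars. The sign and prefactor bookkeeping you flag as the main risk is exactly the content of the paper's displayed chains of equalities, and your accounting is consistent with theirs.
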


In practice we replace $S$ by $S=AT$  where $T\in \mathbb{C}_{2N\times 2N}$, and assume
\begin{equation}\label{TA}
AT=TA^*,~~ TT^*=\delta \mathbf{I}_{2N}^{},
\end{equation}
where $\mathbf{I}_{2N}^{}$ is the identity matrix of $2N$ order.
\eqref{TA} is sufficient condition for \eqref{AS}.
In fact,
\[A^2=\delta A^2 TT^*=\delta A (A T)T^*=\delta AT A^*T^*=\delta S S^*.\]
Thus, we can write the above lemma in terms of $T$.

\begin{theorem}\label{Theorem 21}
The classical FL equation \eqref{FL2} admits the following solution
\begin{equation}\label{DDW}
u(x,t)=\frac{|\widehat{N}; \widetilde{N-1}|}{|\widetilde{N}; \widehat{N-1}|},
\end{equation}
where the elementary vector $\phi$ is given by \eqref{phi} and
\begin{equation}\label{psi-cla-AT}
\psi=AT\phi^*,
\end{equation}
and $A,T\in \mathbb{C}_{2N\times 2N}$ are invertible and satisfy the equation \eqref{TA}.
In addition, the double Wronskians \eqref{wronskian-1} composed by the above $\phi$ and $\psi$
satisfy the following bilinear FL equations,
\begin{subequations}\label{bil-FL}
\begin{align}
& D_xD_t\ g\cdot f+gf=0,  \\
& D_xD_t\ f\cdot f^{*}+i\delta D_x\ g\cdot g^{*}=0,  \\
& D_t\ f\cdot f^{*}+i\delta gg^{*}=0,
\end{align}
\end{subequations}
and the envelope $|u|^2$ can be given by
\begin{equation}\label{u-enve}
|u|^2=i\delta \left( \ln \frac{f}{f^*}\right)_t
=2 \delta \left( \arctan \frac{\mathrm{Re}[f]}{\mathrm{Im}[f]}\right)_t.
\end{equation}
\end{theorem}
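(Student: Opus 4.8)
The plan is to assemble Theorem~\ref{Theorem 21} from the unreduced double Wronskian solution of Theorem~\ref{Theorem 1} together with the reduction relations of Lemma~\ref{Lem-3-2a}, specialized to $S=AT$. The first task is purely algebraic: starting from \eqref{TA}, i.e. $AT=TA^*$ and $TT^*=\delta\mathbf{I}_{2N}$, I would derive the commutation relation $A^2T=TA^{*2}$ (and likewise $A^{-2}T=TA^{*-2}$) via $A^2T=A(AT)=A(TA^*)=(AT)A^*=TA^{*2}$. This is the key identity; everything downstream rests on it. Using it I would verify that $\psi=AT\phi^*$ with $\phi$ as in \eqref{phi} satisfies the spectral conditions \eqref{wron-cond-x-b} and can be rewritten in the canonical form \eqref{psi} with $B=ATC^*$: since $A^{2k}T=TA^{*2k}$ for all integers $k$, the matrix $T$ intertwines $\exp(-\tfrac{i}{2}A^{*2}x-\tfrac{i}{2}A^{*-2}t)$ and $\exp(-\tfrac{i}{2}A^{2}x-\tfrac{i}{2}A^{-2}t)$, and $A$ commutes with the latter. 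Hence $\phi,\psi$ meet the hypotheses of Theorem~\ref{Theorem 1}, so the Wronskians \eqref{wronskian-1} solve the unreduced bilinear system \eqref{bilinear-form}.

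Next I would invoke the reduction. As noted in the text, \eqref{TA} is sufficient for \eqref{AS} with $S=AT$, and $B=ATC^*=SC^*$ is exactly the choice required by Lemma~\ref{Lem-3-2a}. That lemma then yields $\psi=S\phi^*=AT\phi^*$, matching \eqref{psi-cla-AT}, together with $f^*=(2i)^N\delta^N|S|^{-1}s$ and $g^*=(2i)^N\delta^{N-1}|S|^{-1}h$, whence $v(x,t)=\delta u^*(x,t)$ for $u=g/f$, $v=h/s$. Since \eqref{bilinear-form} is the bilinearization of the pKN$(-1)$ system \eqref{CFL} under \eqref{tran}, and the reduction $v=\delta u^*$ turns \eqref{CFLu} into $u_{xt}+u-2i\delta|u|^2u_x=0$, this already shows that $u$ given by \eqref{DDW} solves the classical FL equation \eqref{FL2}.

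To obtain the reduced bilinear form \eqref{bil-FL}, I would eliminate $h$ and $s$ from \eqref{bilinear-form} using the lemma's relations. Writing $c=(2i)^N\delta^N|S|^{-1}$ one has $s=c^{-1}f^*$ and $h=\delta c^{-1}g^*$ (using $\delta^{-1}=\delta$); substituting these into \eqref{bilinear-c} and \eqref{bilinear-d} and factoring out the constant $c^{-1}$, which pulls through the bilinear $D$-operators, gives precisely the second and third equations of \eqref{bil-FL}, while \eqref{bilinear-a} is unchanged. One also checks that \eqref{bilinear-b} collapses to the complex conjugate of \eqref{bilinear-a}, so it is automatically satisfied and imposes no new constraint.

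Finally, for the envelope formula \eqref{u-enve}, I would start from the reduced equation $D_t f\cdot f^*+i\delta gg^*=0$, i.e. $f_tf^*-ff_t^*=-i\delta|g|^2$. Dividing by $ff^*$ and recalling $|u|^2=gg^*/(ff^*)$ gives $|u|^2=i\delta(f_t/f-f_t^*/f^*)=i\delta(\ln(f/f^*))_t$, the first equality. For the second, I would write $\ln(f/f^*)=2i\arg f=2i\arctan(\mathrm{Im}[f]/\mathrm{Re}[f])$ and then use the identity $\arctan(\mathrm{Im}[f]/\mathrm{Re}[f])=\mathrm{const}-\arctan(\mathrm{Re}[f]/\mathrm{Im}[f])$, whose $t$-derivative flips the sign and reproduces the stated expression. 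The main obstacle is the first task: ensuring that the single constraint \eqref{TA} simultaneously preserves the spectral conditions \eqref{wron-cond-x} (so Theorem~\ref{Theorem 1} still applies) and meets the reduction hypotheses of Lemma~\ref{Lem-3-2a}; both hinge on the commutation $A^2T=TA^{*2}$, after which the remaining steps are bookkeeping with the constant determinantal factors plus the elementary arctangent identity.
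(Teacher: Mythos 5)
Your proposal is correct and follows essentially the same route as the paper: it assembles the theorem from Theorem~\ref{Theorem 1} plus the reduction relations of Lemma~\ref{Lem-3-2a} with $S=AT$, using the intertwining $A^2T=TA^{*2}$ (the paper's \eqref{AS1} specialized via ``\eqref{TA} is sufficient for \eqref{AS}'') and the choice $B=SC^*$. The eliminations of $h,s$ giving \eqref{bil-FL} and the derivation of \eqref{u-enve} from $D_t f\cdot f^*+i\delta gg^*=0$ are exactly the bookkeeping the paper leaves implicit, and your constants and signs check out.
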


Next, we give explicit expression of $\phi$. To achieve that, we assume both $T$ and $A$ are 2 by 2 block matrices
\begin{equation}\label{real-TA}
T=\left( \begin{array}{cc} T_1 & T_2 \\T_3 & T_4 \\ \end{array}\right),~~~
A=\left(\begin{array}{cc} K_1 & \mathbf{0}_N \\ \mathbf{0}_N & K_4 \\\end{array} \right),
\end{equation}
where $T_i$ and $K_i$ are $N\times N$ matrices.
Equation \eqref{TA} allows the following solutions (cf.\cite{ChenDLZ-SAPM-2018}) as we list in Table \ref{tab-1},
where $|\mathbf{K}_N||\mathbf{H}_{N}|\neq 0$.
\begin{table}[h]
\begin{center}
\begin{tabular}{cllll}
  \hline
 case &  $\delta$ & T & A \\
  \hline
(1)&  $\pm 1$ & $T_{1}=T_{4}=\mathbf{0}_{N}, T_{2}=\delta T_{3}=\mathbf{I}_{N}$
   & $K_{1}=K_{4}^{*}=\mathbf{K}_{N}\in \mathbb{C}_{N\times N}$\\
(2)&   $1$  &$T_{1}=\pm T_{4}= \mathbf{I}_{N}, T_{2}=T_{3}=\mathbf{0}_{N}$
   & $K_{1}=\mathbf{K}_N\in \mathbb{R}_{N\times N}, K_{4}=\mathbf{H}_{N}\in \mathbb{R}_{N\times N}$\\
  \hline
\end{tabular}
\caption{$T$ and $A$ for \eqref{TA}}
\label{tab-1}
\end{center}
\end{table}

Let us introduce
\begin{equation}\label{phi-pm}
\phi=\left( \begin{array}{c} \phi^+   \\ \phi^{-}\end{array}\right),
\end{equation}
where $\phi^{\pm}=(\phi^{\pm}_1, \phi^{\pm}_2, \cdots, \phi^{\pm}_N)^T$.
When $A$ takes the form in \eqref{real-TA}, the elementary vector $\phi$ given in \eqref{phi}
can be written as in \eqref{phi-pm}, where
\begin{equation}\label{phi-pmm}
\phi^+=\exp\Big[\frac{i}{2}(K^{2}_1 x+K^{-2}_1 t)\Big]C^+,~~
\phi^-=\exp\Big[\frac{i}{2}(K^{2}_4 x+K^{-2}_4 t)\Big]C^-,
\end{equation}
and $C^{\pm}=(c^{\pm}_1, c^{\pm}_2, \cdots, c^{\pm}_N)^T$.
Vector $\psi$ is defined by \eqref{psi-cla-AT}.

\vskip 6pt
\noindent
\textbf{Solutions corresponding to Case (1) in Table \ref{tab-1}:}
More explicitly, when $\mathbf{K}_N$ is a diagonal matrix
\begin{equation}\label{A-diag}
\mathbf{K}_{N}=\mathrm{Diag}(k_{1},k_{2},\cdots,k_{N}),~~ k_j\in \mathbb{C},
\end{equation}
we have
\begin{subequations}
\begin{align}
\phi^+ & =  (c^+_{1}e^{\eta(k_{1})}, c^+_{2}e^{\eta(k_{2})},\cdots,  c^+_{N}e^{\eta(k_{N})} )^{T},\\
\phi^- & =  (c^{-}_{1}e^{\eta(k_{1}^*)}, c^{-}_{2}e^{\eta(k_{2}^*)},\cdots,
c^{-}_{N}e^{\eta(k_{N}^*)} )^{T},
\end{align}
\end{subequations}
where
\begin{equation}\label{eta}
\eta(k )=\frac{i}{2}(k^{2}x+ k^{-2}t).
\end{equation}
When $\mathbf{K}_{N}$ is a  Jordan block matrix $J_{N}(k)$,
\begin{equation}\label{A-jordan}
J_{N}(k)=\left(
  \begin{array}{cccc}
    k & 0 & \cdots & 0 \\
    1& k & \cdots & 0 \\
    \vdots & \ddots & \ddots & \vdots \\
    0 & \cdots& 1 & k \\
  \end{array}
\right)_{N\times N},
\end{equation}
we have
\begin{subequations}
\begin{align}
\phi^+ &=\mathcal{A}_N\Bigl(c^+ e^{\eta(k)}, \partial_k(c^+ e^{\eta(k)}),
\frac{1}{2!}\partial_k^{2}(c^+ e^{\eta(k)}),
\cdots,  \frac{1}{(N-1)!}\partial_k^{N-1}(c^+ e^{\eta(k)}) \Bigr)^{T},\\
\phi^- & = \mathcal{B}_N \Bigl(c^{-} e^{\eta(k^*)}, \partial_{k^*}(c^{-} e^{\eta(k^*)}),
\frac{1}{2!}\partial_{k^*}^{2}(c^- e^{\eta(k^*)}),\cdots,
\frac{1}{(N-1)!}\partial_{k^*}^{N-1}(c^- e^{\eta(k^*)}) \Bigr)^{T},
\end{align}
\end{subequations}
where
$\partial_k=\frac{\partial}{\partial_k}$, $k, c^{\pm}\in \mathbb{C}$,
$\mathcal{A}_N$ and $\mathcal{B}_N$ belong to
an Abelian group $G_N$, which is composed by
all invertible lower triangular Toeplitz matrices (LTTMs) of the following form
\begin{equation}\label{A-jordan}
\mathcal{G}_{N} =\left(
  \begin{array}{ccccc}
   g_0 & 0 & 0& \cdots & 0 \\
    g_1& g_0 & 0 & \cdots & 0 \\
    g_2& g_1 & g_0 & \cdots & 0 \\
    \vdots & \vdots & \vdots &\ddots & \vdots \\
   g_{N-1}^{}& g_{N-2}^{} & g_{N-3}^{} & \cdots & g_0 \\
  \end{array}
\right)_{N\times N},~~ g_i\in \mathbb{C},~ g_0\neq 0.
\end{equation}
Note that the LTTMs have been widely used in presenting multiple pole solutions
(cf.\cite{ZhaZSZ-RMP-2014,Zha-Wro-2019,ZDJ-arxiv}).
From \eqref{psi-cla-AT} one can find that $\psi$ always takes a form
\begin{equation}\label{psi-phi-pm}
\psi=\left( \begin{array}{r} \mathbf{K}_N^{}\,\phi^{-*}   \\
\delta \mathbf{K}_N^*\,\phi^{+*}\end{array}\right).
\end{equation}

\vskip 6pt
\noindent
\textbf{Solutions corresponding to Case (2) in Table \ref{tab-1}:}
In this case, both $\mathbf{K}_N^{}$ and $\mathbf{H}_N^{}$ are real.
$\phi^+$ in \eqref{phi-pm} is governed by $\mathbf{K}_N^{}$.
When $\mathbf{K}_N$ is  diagonal, i.e.
\begin{equation}\label{A-diag-real}
\mathbf{K}_{N}=D[k_j]_{j=1}^{N}=\mathrm{Diag}(k_{1},k_{2},\cdots,k_{N}),~~ k_j\in \mathbb{R},
\end{equation}
one has
\begin{align}\label{phi-m-KD}
\phi^+  =  (c^+_{1}e^{\eta(k_{1})}, c^+_{2}e^{\eta(k_{2})},\cdots,  c^+_{N}e^{\eta(k_{N})} )^{T},
\end{align}
where $\eta$ is defined by \eqref{eta}, and we note that $c^+_{j}\in \mathbb{C}$.
When $\mathbf{K}_{N}$ is a  Jordan block matrix $\mathbf{K}_{N}=J_{N}(k)$ as given in \eqref{A-jordan}
where $k\in \mathbb{R}$, one has
\begin{align}\label{phi-m-KJ}
\phi^+ =\mathcal{A}_N\Bigl(c^+ e^{\eta(k)}, \partial_k(c^+ e^{\eta(k)}),
\frac{1}{2!}\partial_k^{2}(c^+ e^{\eta(k)}),
\cdots,  \frac{1}{(N-1)!}\partial_k^{N-1}(c^+ e^{\eta(k)}) \Bigr)^{T},
\end{align}
where $\mathcal{A}_N$ is a real element in $G_N$ but $c^+$ is complex.
$\phi^-$ in \eqref{phi-pm} is determined by $\mathbf{H}_N^{}$.
When $\mathbf{H}_N$ is a diagonal  matrix
\begin{equation}\label{A-diag-real-H}
\mathbf{H}_{N}=D[h_j]_{j=1}^{N}=\mathrm{Diag}(h_{1}, h_{2},\cdots, h_{N}),~~ h_j\in \mathbb{R},
\end{equation}
one has
\begin{align}\label{phi-m-HD}
\phi^-  =  (c^-_{1}e^{\eta(h_{1})}, c^-_{2}e^{\eta(h_{2})},\cdots,  c^-_{N}e^{\eta(h_{N})} )^{T},
\end{align}
where $\eta$ is defined by \eqref{eta} and $c^-_{j}\in \mathbb{C}$.
In Jordan block case when $\mathbf{H}_{N}=J_{N}(h)$ as given in \eqref{A-jordan}
where $h\in \mathbb{R}$, one has
\begin{align}\label{phi-m-HJ}
\phi^- =\mathcal{B}_N\Bigl(c^- e^{\eta(h)}, \partial_h(c^- e^{\eta(h)}),
\frac{1}{2!}\partial_h^{2}(c^- e^{\eta(h)}),
\cdots,  \frac{1}{(N-1)!}\partial_h^{N-1}(c^- e^{\eta(h)}) \Bigr)^{T},
\end{align}
where
$\mathcal{B}_N$ is a real element in $G_N$ but  $c^-$ is complex.
In this case, $\phi$ takes the form \eqref{phi-pm} where
$\phi^{+}$ can be either \eqref{phi-m-KD} or \eqref{phi-m-KJ}
and $\phi^{-}$ can be either \eqref{phi-m-HD} or \eqref{phi-m-HJ},
and consequently $\psi$  takes a form
\begin{equation}\label{psi-phi-pmm}
\psi=\left( \begin{array}{r} \mathbf{K}_N^{}\,\phi^{+*}   \\ - \mathbf{H}_N\,\phi^{-*}\end{array}\right).
\end{equation}

Note that, for the above both cases,  since \eqref{wron-cond-x} is a linear system w.r.t. $\phi$ and $\psi$,
both $\mathbf{K}_N^{}$ and $\mathbf{H}_N^{}$ can take  block diagonal forms,
e.g.,  for Case (2),
\begin{align*}
\mathbf{K}_N^{}&=\mathrm{Diag}\Bigl(D[k_j]_{j=1}^{N_0}, J_{N_1}(k^{}_{N_0^{}+1}),
\cdots, J_{N_s}(k^{}_{N_0^{}+s}) \Bigr),\\
\mathbf{H}_N^{}&=\mathrm{Diag}\Bigl(D[h_j]_{j=1}^{M_0}, J_{M_1}(h^{}_{M_0^{}+1}),
\cdots, J_{M_z}(h^{}_{M_0^{}+z}) \Bigr),
\end{align*}
where $\sum N_j=\sum M_i=N$,
and explicit forms of $\phi^{\pm}$ and $\phi$ can be easily written out accordingly.

We also note that, since Eq.\eqref{TA} is bilinear  w.r.t both $A$ and $T$, from Table \ref{tab-1},
when $\delta=1$, one can combine the above cases and get mixed solutions.
In details, when $\delta=1$, Eq.\eqref{TA} allows a more general solution
\begin{equation}\label{TA-mix}
 T=\left(
     \begin{array}{cccc}
      \mathbf{I}_{N_1} & \mathbf{0}_{N_1} & & \\
      \mathbf{0}_{N_1}   & -\mathbf{I}_{N_1} & & \\
      & & \mathbf{0}_{N_2}   & \mathbf{I}_{N_2} \\
      & & \mathbf{I}_{N_2} & \mathbf{0}_{N_2}
     \end{array}
   \right),~~
 A=\left(
     \begin{array}{cccc}
      \mathbf{K}'_{N_1} & \mathbf{0}_{N_1} & & \\
      \mathbf{0}_{N_1}   & \mathbf{H}'_{N_1} & & \\
      & & \mathbf{K}_{N_2}   & \mathbf{0}_{N_2} \\
      & & \mathbf{0}_{N_2} & \mathbf{K}^*_{N_2}
     \end{array}
   \right),
\end{equation}
where $\mathbf{K}'_{N_1}, \mathbf{H}'_{N_1} \in \mathbb{R}_{N_1\times N_1}$,
 $\mathbf{K}_{N_2}  \in \mathbb{C}_{N_2\times N_2},~N_1+N_2=N$.
Obviously,  explicit expression for $\phi$ of this case can be easily composed accordingly
by referring to the above two cases.

Dynamics of some obtained solutions will be investigated in Sec.\ref{sec-4}.

\subsubsection{Case of the nonlocal FL equation}\label{sec-3-2-2}

The nonlocal relation
\begin{equation}
v(-x,-t)=\delta u(x,t), ~~ \delta=\pm 1
\label{v-u}
\end{equation}
reduces the pKN$(-1)$ system \eqref{CFL} to a one-field equation, the nonlocal FL equation \eqref{non-equv}.
In the following, from \eqref{tran} and \eqref{wronskian-1},
we recover the above nonlocal relation and get solutions to the nonlocal FL equation \eqref{non-equv}.

Let us consider  $M=N$ and impose constraint on \eqref{wronskian} by
\begin{equation}\label{phi-psi-non}
\psi(x,t)=S \phi(-x,-t).
\end{equation}
This holds if
\begin{equation}\label{AS2}
A^2=\delta S^2,
\end{equation}
and $B=SC$.
Note that \eqref{AS2} indicates
$A^2S=  S {A}^2$.
Next, for convenience we introduce a notation (cf.\cite{ChenDLZ-SAPM-2018,ChenZ-AML-2018})
\begin{equation}\label{phi-abc}
\W{\phi^{[N]}_{}}(ax,bt)_{[cx]}=\left(\phi(ax,bt), \partial_{cx}\phi(ax,bt),
 \partial_{cx}^{2}\phi(ax,bt),\cdots, \partial_{cx}^{N}\phi(ax,bt)\right).
\end{equation}
Thus, the double Wronskians \eqref{wronskian-1} with the constraint \eqref{phi-psi-non}
are written as
\begin{subequations}
\begin{align}
 &f(x,t)= |\widetilde{N}; \widehat{N-1}|=\Big(\frac{i}{2}\Big)^{N}
 |A^{2}\W{\phi^{[N-1]}}(x,t)_{[x]}; S \W{\phi^{[N-1]}_{}}(-x,-t)_{[x]}|,\\
 &g(x,t)=|\widehat{N}; \widetilde{N-1}|=\Big(-\frac{i}{2}\Big)^{N-1}
 |\W{\phi^{[N]}}(x,t)_{[x]}; A^{2}S \W{\phi^{[N-2]}_{}}(-x,-t)_{[x]}|,\\
 &h(x,t)=-\frac{i}{2}|\overline{N}; \widehat{N}|
 =-\Big(\frac{i}{2}\Big)^{2N-1}|A^{4}\W{\phi^{[N-2]}_{}}(x,t)_{[x]}; S\W{\phi^{[N]}_{}}(-x,-t)_{[x]}|,\\
 &s(x,t)=|\widetilde{N}; \widetilde{N}|
 =\Big(\frac{i}{2}\Big)^{2N}(-1)^N|A|^{2}|\W{\phi^{[N-1]}_{}}(x,t)_{[x]}; S \W{\phi^{[N-1]}_{}}(-x,-t)_{[x]}|.
\end{align}
\end{subequations}
Then we find that
\begin{align*}
f(-x,-t) &=\Big(\frac{i}{2}\Big)^{N} |A^{2}\W{\phi^{[N-1]}}(-x,-t)_{[-x]}; S \W{\phi^{[N-1]}_{}}(x,t)_{[-x]}|\\
&=\Big(\frac{i}{2}\Big)^{N} |A^{2}\W{\phi^{[N-1]}}(-x,-t)_{[x]}; S \W{\phi^{[N-1]}_{}}(x,t)_{[x]}|\\
&=\Big(\frac{i}{2}\Big)^{N}|S \W{\phi^{[N-1]}_{}}(x,t)_{[x]}; A^{2}\W{\phi^{[N-1]}}(-x,-t)_{[x]}|\\
&=\Big(\frac{i}{2}\Big)^{N}|S| | \W{\phi^{[N-1]}_{}}(x,t)_{[x]}; S^{-1}A^{2}\W{\phi^{[N-1]}}(-x,-t)_{[x]}|\\
&=\Big(\frac{i}{2}\Big)^{N}|S| | \W{\phi^{[N-1]}_{}}(x,t)_{[x]}; \delta S \W{\phi^{[N-1]}}(-x,-t)_{[x]}|\\
&=(-2i)^{N}\delta^{N}|S|^{-1}s(x,t),
\end{align*}
and in a similar way,
\begin{align*}
g(-x,-t) =(-2i)^{N}\delta^{N-1}|S|^{-1} h(x,t).
\end{align*}
These results immediately give rise to $\frac{g(-x,-t)}{f(-x,-t)}=\delta \frac{h(x,t)}{s(x,t)}$, i.e. the relation \eqref{v-u}.

Introduce $S=AT$  and to keep \eqref{AS2} we assume that
\begin{equation}\label{TA2}
AT=TA,~~ T^2=\delta \mathbf{I}_{2N}^{}.
\end{equation}
Then, the solutions of the nonlocal case are summarized as the following.

\begin{theorem}\label{Theorem 22}
The nonlocal FL equation \eqref{non-equv} admits  solutions
\begin{equation}\label{DDW-non}
u(x,t)=\frac{|\widehat{N}; \widetilde{N-1}|}{|\widetilde{N}; \widehat{N-1}|},
\end{equation}
where the elementary vector $\phi$ is given by \eqref{phi} and
\begin{equation}\label{psi-cla}
\psi(x,t)=AT\phi(-x,-t),
\end{equation}
and $A,T\in \mathbb{C}_{2N\times 2N}$ are invertible and satisfy the equation \eqref{TA2}.
The double Wronskians \eqref{wronskian-1} composed by the above $\phi$ and $\psi$
satisfy the following bilinear nonlocal FL equations:
\begin{align*}
& D_xD_t\ g(x,t)\cdot f(x,t)+g(x,t)f(x,t)=0,  \\
& D_xD_t\ f(x,t)\cdot f(-x,-t)+i\delta D_x\ g(x,t)\cdot g(-x,-t)=0,  \\
& D_t\ f(x,t)\cdot f(-x,-t)+i\delta g(x,t)g(-x,-t)=0.
\end{align*}
A special solution to \eqref{TA2} is given by block  matrices  form \eqref{real-TA} with
\begin{equation}\label{TA3}
 T_{1}=-T_{4}=\sqrt{\delta}\,\mathbf{I}_{N}, ~T_{2}=T_{3}=\mathbf{0}_{N},~
 K_{1}=\mathbf{K}_N\in \mathbb{C}_{N\times N},~ K_{4}=\mathbf{H}_{N}\in \mathbb{C}_{N\times N}.
\end{equation}
\end{theorem}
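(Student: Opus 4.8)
The plan is to mirror the reduction carried out for the classical case in Theorem~\ref{Theorem 21}, reusing the bilinear and determinantal machinery already in place. Concretely I would proceed in four steps: (i) check that the factorization $S=AT$ together with \eqref{TA2} reproduces the key identity \eqref{AS2}; (ii) verify that the reflected ansatz \eqref{psi-cla}, $\psi(x,t)=AT\phi(-x,-t)$, is compatible with the defining linear system \eqref{wron-cond-x}, so that Theorem~\ref{Theorem 1} applies and the Wronskians \eqref{wronskian-1} do solve \eqref{bilinear-form}; (iii) feed the determinantal relations for $f(-x,-t)$ and $g(-x,-t)$ established just before the theorem into \eqref{bilinear-form} to extract the three bilinear nonlocal equations together with the reduction \eqref{v-u}; and (iv) confirm that the explicit data \eqref{TA3} satisfy \eqref{TA2}.

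For step (i) one computes, using $AT=TA$ and $T^2=\delta\mathbf{I}_{2N}$, that $S^2=(AT)^2=A^2T^2=\delta A^2$, which is exactly \eqref{AS2}; in particular $A^2S=\delta S^3=SA^2$, the commutation noted after \eqref{AS2}. For step (ii), writing $\widetilde\phi(x,t)=\phi(-x,-t)$ gives $\widetilde\phi_x=-\tfrac{i}{2}A^2\widetilde\phi$, so $\psi_x=S\widetilde\phi_x=-\tfrac{i}{2}SA^2\widetilde\phi=-\tfrac{i}{2}A^2\psi$ precisely because $A^2S=SA^2$; the $t$-equation follows from the reflection rule $\partial_x^{-1}[\phi(-x,-t)]=-(\partial_x^{-1}\phi)(-x,-t)$, which turns $\phi_t=-\tfrac14\partial_x^{-1}\phi$ into $\psi_t=-\tfrac14\partial_x^{-1}\psi$. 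Hence $(\phi,\psi)$ obeys \eqref{wron-cond-x} with $B=SC$, and Theorem~\ref{Theorem 1} then guarantees that \eqref{wronskian-1} solves \eqref{bilinear-form}.

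Step (iii) is where the reduction is cashed in. From the relations derived just before the theorem one has $s(x,t)=\mu\,f(-x,-t)$ and $h(x,t)=\mu\delta\,g(-x,-t)$ with the nonzero constant $\mu=(-2i)^{-N}\delta^{-N}|S|$. Because the Hirota operators are bilinear and $\mu$ is constant, substituting these into \eqref{bilinear-c} and \eqref{bilinear-d} and cancelling $\mu$ yields the second and third nonlocal bilinear equations, while \eqref{bilinear-a} is the first one verbatim. Equation \eqref{bilinear-b} is then redundant: under $(x,t)\mapsto(-x,-t)$ the Hirota operator is covariant, so after substituting for $h$ and $s$ it reduces to \eqref{bilinear-a} evaluated at $(-x,-t)$ and holds automatically. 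These facts combine to give $v(-x,-t)=\delta u(x,t)$, i.e. \eqref{v-u}, so that $u$ in \eqref{DDW-non} solves \eqref{non-equv}.

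Finally, for step (iv) the block form \eqref{real-TA} with \eqref{TA3} gives $T=\mathrm{Diag}(\sqrt{\delta}\,\mathbf{I}_N,-\sqrt{\delta}\,\mathbf{I}_N)$ and $A=\mathrm{Diag}(\mathbf{K}_N,\mathbf{H}_N)$, whence $T^2=\delta\mathbf{I}_{2N}$ and $AT=TA$ are immediate and invertibility holds provided $|\mathbf{K}_N||\mathbf{H}_N|\neq0$. I expect the main obstacle to be bookkeeping rather than conceptual: keeping consistent track of signs and of the reflection of $\partial_x^{-1}$, of the column-differentiation index in \eqref{phi-abc}, and of the Hirota derivatives under $(x,t)\mapsto(-x,-t)$, and in particular pinning down cleanly that the would-be fourth bilinear equation \eqref{bilinear-b} is exactly the reflected copy of \eqref{bilinear-a}. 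Since the heavy determinant manipulations were already performed in deriving the $f(-x,-t)$ and $g(-x,-t)$ relations, the remaining work is organizational.
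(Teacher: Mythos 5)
Your proposal is correct and follows essentially the same route as the paper: it uses the constraint $\psi(x,t)=S\phi(-x,-t)$ with $S=AT$ and \eqref{TA2} implying \eqref{AS2}, invokes Theorem~\ref{Theorem 1}, and cashes in the determinantal relations $f(-x,-t)=(-2i)^{N}\delta^{N}|S|^{-1}s(x,t)$ and $g(-x,-t)=(-2i)^{N}\delta^{N-1}|S|^{-1}h(x,t)$ to obtain the reduction \eqref{v-u} and the three bilinear nonlocal equations. Your explicit check that \eqref{bilinear-b} becomes the reflected copy of \eqref{bilinear-a} is a detail the paper leaves implicit, but it is consistent with its argument.
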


Explicit expression of $\phi$ is given through the form \eqref{phi-pm} where $\phi^{\pm}$
are  given by those formulaes from \eqref{A-diag-real} to \eqref{phi-m-HJ} but
at this stage $k,k_j, h, k_j \in \mathbb{C}$, and $\psi$  is
\begin{equation}\label{psi-phi-pmm-non}
\psi=\left( \begin{array}{r}\sqrt{\delta} \mathbf{K}_N^{}\,\phi^+(-x,-t)   \\
-\sqrt{\delta} \mathbf{H}_N\,\phi^{-}(-x,-t)\end{array}\right).
\end{equation}

\section{Dynamics of the classical FL equation \eqref{FL2}}\label{sec-4}

In this section we analyze dynamics of solutions of the FL equation \eqref{FL2},
which we obtained in the previous section.
We will investigate solutions related to discrete complex eigenvalues
and also discrete real eigenvalues, i.e. Case (1) and Case (2) in Table \ref{tab-1}.
In the first case,  one-solition feature and two-soliton interactions were already considered in \cite{Matsuno-JPA-2012a},
so we will focus more on breathers and double pole solutions.
The second case contributes solutions related to real discrete eigenvalues,
which, to our knowledge, were not reported in the past literatures.
These solutions allow periodic and double-periodic waves,
and quite interestingly, solitary waves with algebraic decays as $|x|\to \infty$.

\subsection{Solutions related to complex eigenvalues}\label{sec-4-1}

Let us consider Case (1) in Table 1 where we take $\delta=1$.
Note that in this case, when $\mathbf{K}_{N}$ is diagonal one will obtain
the usual $N$-soliton solutions, which coincide with those results that have been obtained from
the inverse scattering transform (or Riemann-Hilbert method)
\cite{Lenells-F-Non-2009,Lenells-SAPM-2009,Ai-Xu-AML-2019,Zhao-F-JNMP-2021},
dressing method \cite{Lenells-JNS-2010},
Darboux transformation \cite{Xu-HCP-MMAS-2014,Wang-XL-AML-2020}
with zero as a seed solution
and bilinear method by Matsuno \cite{Matsuno-JPA-2012a},
including the solutions for the FL equation \eqref{FL1} in light of the transformation
that converts Eq.\eqref{FL1} to Eq.\eqref{FL2} (see proposition 1 in \cite{Lenells-SAPM-2009}).
Apart from the diagonal $\mathbf{K}_{N}$, when $\mathbf{K}_{N}$ is a Jordan matrix
or contains Jordan blocks, the solution, in principle,
can be obtained by a limit procedure from soliton solutions
(e.g. \cite{MatS-Book-1991,ZDJ-arxiv}).

\subsubsection{1SS}\label{sec-4-1-1}

When $\mathbf{K}_{N}$ is given in \eqref{A-diag} with $N=1$,
we get one-soliton solution (1SS)
\begin{equation}\label{1ss-fg}
u=\frac{g}{f},~f=|\phi_x;\psi|,~ g=|\phi,\phi_x|,
\end{equation}
which reads
\begin{equation}\label{1ss}
u=\frac{c_{1}d_{1}(k_{1}^{2}-k_{1}^{*2})}
{|k_{1}|^{2}\left[k_{1}^{*}|d_{1}|^{2}\, \mathrm{e}^{-i(k_{1}^{2}x +\frac{t}{k_{1}^{2}})}
-k_{1}|c_{1}|^{2}\, \mathrm{e}^{-i(k_{1}^{*2}x+\frac{t}{k_{1}^{*2}})}\right]},
\end{equation}
where $c_1=c^+_1, ~d_1=c^-_1$.
The carrier wave is expressed as
\begin{equation}\label{1-sol-FL-1}
|u|^2= \frac{8a_{1}^{2}b_{1}^{2}}{(a_{1}^{2}+b_{1}^{2})^{3}}
\frac{1}{\cosh\left(4a_{1}b_{1}x-\frac{4a_{1}b_{1}t}{(a_{1}^{2}+b_{1}^{2})^{2}}
+2\ln\frac{|d_{1}|}{|c_{1}|}\right)
-\frac{a_{1}^{2}-b_{1}^{2}}{a_{1}^{2}+b_{1}^{2}}},
\end{equation}
where we have taken $k_{j}=a_{j}+ib_{j}$, $a_j,b_j\in \mathbb{R}$.
\eqref{1-sol-FL-1} describes a single direction soliton traveling with amplitude
$\frac{2|a_{1}|}{a_{1}^{2}+b_{1}^{2}}$,
initial phase $2\ln\frac{|d_{1}|}{|c_{1}|}$,
velocity  $\frac{1}{(a_{1}^{2}+b_{1}^{2})^{2}}$,
and trajectory (top trace)
\begin{equation}\label{top trace-FL}
x(t)=\frac{1}{(a_{1}^{2}+b_{1}^{2})^{2}}t-\frac{1}{2a_{1}b_{1}}\ln\frac{|d_{1}|}{|c_{1}|}.
\end{equation}
Obviously, $a_1b_1$ should not be zero, which means $k_1$ cannot be real or pure imaginary.
This coincides with the assumption on the distribution of eigenvalues from scattering analysis
(cf.\cite{Lenells-F-Non-2009,Ai-Xu-AML-2019}).
\eqref{1-sol-FL-1} is depicted in Fig.\ref{F-1}(a).
\captionsetup[figure]{labelfont={bf},name={Fig.},labelsep=period}
\begin{figure}[ht]
\centering
\subfigure[ ]{
\begin{minipage}[t]{0.45\linewidth}
\centering
\includegraphics[width=2.1in]{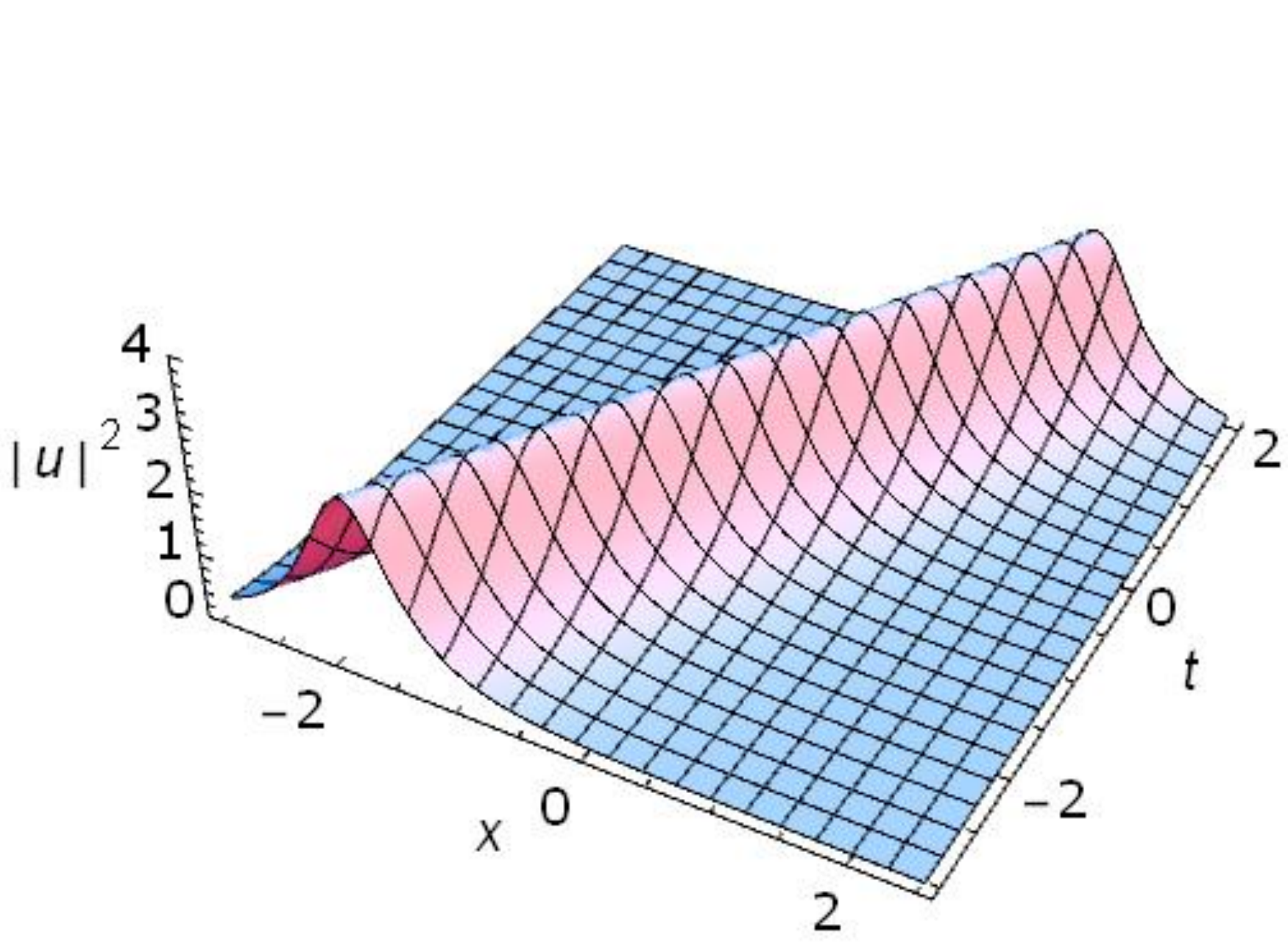}
\end{minipage}%
}%
\subfigure[ ]{
\begin{minipage}[t]{0.45\linewidth}
\centering
\includegraphics[width=2.2in]{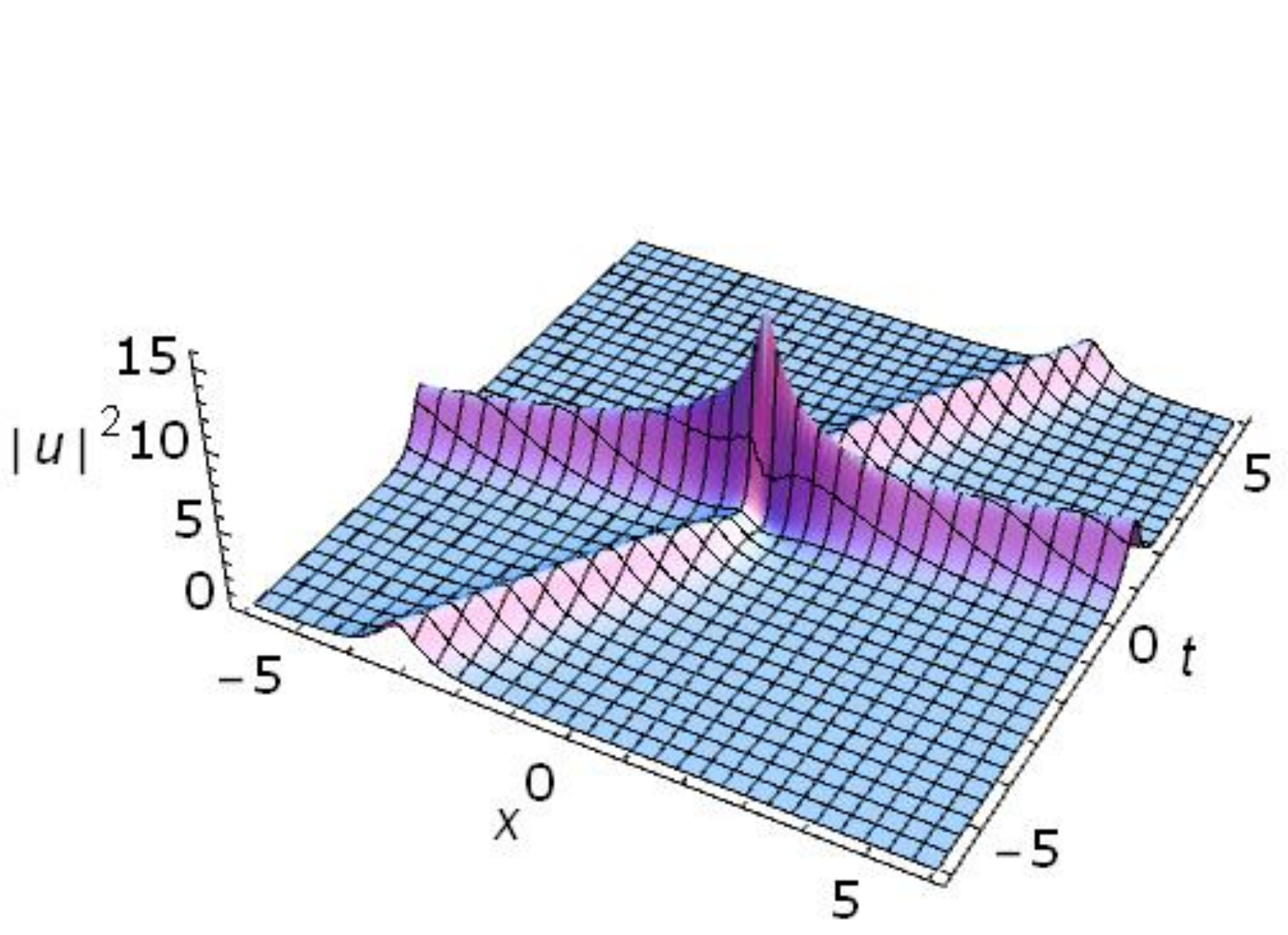}
\end{minipage}%
}
\caption{Shape and motion of 1SS and 2SS of the   FL equation \eqref{FL2}.
(a) 1SS given by \eqref{1-sol-FL-1} for $k_1=1+0.5i$, $c_1=d_1=1$.~
(b) 2SS $|u|^2$ where $u$ is given in \eqref{2ss} with  $k_1=1+0.5i$, $k_2=0.5+0.5i$,
$c_1=d_1=c_2=d_2=1$.~}
\label{F-1}
\end{figure}

\subsubsection{2SS}\label{sec-4-1-2}

Two-soliton solution (2SS) is obtained when $\mathbf{K}_{N}$ is given in \eqref{A-diag} with $N=2$.
It can be expressed as
\begin{subequations}\label{2ss-pp}
\begin{equation}
u^{}_{2\mathrm{SS}}=\frac{g}{f},
\end{equation}
with
\begin{equation}
f=|\partial_x\phi, \partial^2_x\phi; \psi, \partial_x\psi|,~~
g=|\phi, \partial_x\phi, \partial^2_x\phi;\partial_x \psi|,
\end{equation}
\end{subequations}
where
\begin{align*}
& \phi=(c_1e^{\eta(k_1)}, c_2e^{\eta(k_2)}, d_1e^{\eta(k^*_1)}, d_2e^{\eta(k^*_2)})^T,\\
& \psi=(k_1d_{1}^{*}e^{-\eta(k_1)}, k_2d_{2}^{*}e^{-\eta(k_2)},
k_1^*c_{1}^{*}e^{-\eta(k^*_1)}, k_2^* c_{2}^{*}e^{-\eta(k^*_2)})^T,
\end{align*}
$\eta$ is defined by \eqref{eta}, $k_j, c_j,d_j\in \mathbb{C}$.
2SS has been investigated in \cite{Matsuno-JPA-2012a} where the solution is expressed
in terms of determinants of Cauchy matrix type.
We can conduct similar analysis on two-soliton interaction and present same results as in \cite{Matsuno-JPA-2012a}.
For completeness of this paper, in the following we skip details but only sketch main results.

To analyze  two-soliton interaction, we rewrite 1SS \eqref{1ss} in the following form,
\begin{equation}\label{1ss-u}
u^{}_{\mathrm{1SS}}[\xi_1;\mathcal{X}_1]
=\frac{(k_{1}^{2}-k_{1}^{*2})y_{1}}{|k_1|^2(k_{1}^{*}-k_{1}|y_{1}|^2)},
\end{equation}
where
\[ y_{j}=e^{\xi_{j}+i\mathcal{X}_{j}}, ~~ \xi_{j}=-2a_{j}b_{j}(x-m_{j}t),~~
\mathcal{X}_{j}=(a_{j}^{2}-b_{j}^{2})(x+m_{j}t),~~
m_{j}=\frac{1}{|k_j|^{4}},
\]
and we also assume $a_{j}b_{j}>0$, $c_{j}=d_{j}=1$ without loss of generality.

With these notations, the  2SS is written as
\begin{equation}\label{2ss}
u^{}_{\mathrm{2SS}}=\frac{g }{f },
\end{equation}
where
\begin{align*}
g =&\frac{1}{|k_{1}k_2|^2}\biggl[\!-\!k_{2}^{*3}(k_{1}^{2}\!-\!k_{1}^{*2})(k_{1}^{2}\!-
\!k_{2}^{2})(k_{1}^{*2}\!-\!k_{2}^{2})\frac{y_{1}y_{2}}{y_{1}^{*}}\!
+\!k_{2}^{3}(k_{1}^{2}\!-\!k_{1}^{*2})(k_{1}^{2}\!
-\!k_{2}^{*2})(k_{1}^{*2}\!-\!k_{2}^{*2})\frac{y_{1}}{y_{1}^{*}y_{2}^{*}}\\
&-\!k_{1}^{*3}(k_{1}^{2}\!-\!k_{2}^{2})(k_{1}^{2}\!-\!k_{2}^{*2})(k_{2}^{2}\!-\!k_{2}^{*2})
\frac{y_{1}y_{2}}{y_{2}^{*}}
+k_{1}^{3}(k_{1}^{*2}-k_{2}^{2})(k_{1}^{*2}-k_{2}^{*2})(k_{2}^{2}-k_{2}^{*2})
\frac{y_{2}}{y_{1}^{*}y_{2}^{*}}\biggr],\\
f =&|k_{1}^{*2}\!-\!k_{2}^{2}|^2 \biggl( k_{1}^{*}k_{2}\frac{y_2}{y_{1}^{*}}\!+\!
k_{1}k_{2}^{*} \frac{y_{1}}{y_{2}^{*}}\biggr)
-|k_{1}^{2}\!-\!k_{2}^{2}|^2\biggl( k_1 k_2 y_1y_2+ k_{1}^{*}k_{2}^{*}
\frac{1}{y_{1}^{*}y_{2}^{*}}\biggr)\\
& +(k_{1}^{2}\!-\!k_{1}^{*2})(k_{2}^{2}\!-\!k_{2}^{*2})\biggl( |k_1|^2\frac{y_{1}}{y_{1}^{*}}
+\!|k_{2}|^2 \frac{y_{2}}{y_{2}^{*}}\biggr).
\end{align*}
Then we have (cf.\cite{Matsuno-JPA-2012a}), in the coordinate frame $(\xi_j,t)$,
\[u^{}_{\mathrm{2SS}}
\sim u^{}_{\mathrm{1SS}}[\xi_{j}+\Delta \xi_{j}^{(\pm)};\mathcal{X}_{j}+\Delta \mathcal{X}_{j}^{(\pm)}],
~~ t \to \pm \infty,~~ (j=1,2)\]
where $u^{}_{\mathrm{1SS}}[\xi_j;\mathcal{X}_j]$ is given as \eqref{1ss-u},
\begin{align*}
&\Delta \xi_{1}^{(\pm)}=\pm\ln|\frac{k_{1}^{2}-k_{2}^{*2}}{k_{1}^{2}-k_{2}^{2}}|,
~~ \Delta \mathcal{X}_{1}^{(\pm)}=\pm \arg\frac{k_{1}^{2}-k_{2}^{*2}}{k_{1}^{2}-k_{2}^{2}}
\pm \arg\frac{k_{2}^{2}}{k_{2}^{*2}},\\
&\Delta \xi_{2}^{(\pm)}=\mp \ln|\frac{k_{1}^{*2}-k_{2}^{2}}{k_{1}^{2}-k_{2}^{2}}|,~~
\Delta \mathcal{X}_{2}^{(\pm)}=\mp \arg\frac{k_{1}^{*2}-k_{2}^{2}}{k_{1}^{2}-k_{2}^{2}}
\mp \arg\frac{k_{1}^{2}}{k_{1}^{*2}}.
\end{align*}
This indicates that, after interaction, the soliton described  by $k_j$ gets a phase shift
\[2(-1)^{j-1}\left(\frac{\Delta \xi_{j}^{(+)}}{-2a_jb_j} +
\frac{i \Delta \mathcal{X}_{j}^{(+)}}{a_{j}^{2}-b_{j}^{2}}\right).\]
Such an interaction is depicted in Fig.\ref{F-1}(b).

\subsubsection{Breathers}\label{sec-4-1-3}

Note that velocity of a single soliton is governed by $1/|k_j|^4$.
This means in 2SS when $|k_1|= |k_2|$ there will be two parallel solitons,
while in this case periodic interactions, i.e. breathers, occur.

When  $|k_1|= |k_2|$, the envelop of the 2SS \eqref{2ss} is
\begin{equation}\label{2ss-FL}
|u|^2=\frac{G(x,t)}{2(a_{1}^{2}+a_{2}^{2})^{2}F(x,t)},
\end{equation}
with
\begin{align*}
G(x,t)=&16\Bigl\{a_{2}b_{2}[Z_{1}(X^{2}-4A_{1}A_{2})-2Z_{2}X(A_{1}+A_{2})]
\cosh 2Y_{1}\cos(Y_{3}-Y_{4})\\
& -a_{2}b_{2}[Z_{2}(X^{2} -4A_{1}A_{2})+2Z_{1}X(A_{1}+A_{2})] \sinh 2Y_{1}\sin(Y_{3}-Y_{4})\\
&+a_{1}b_{1}[Z_{3}(X^{2}+4A_{1}A_{2})+2Z_{4}X (A_{1}-A_{2})] \cosh 2Y_{2}\Bigr\}^{2}\\
&+16\Bigl\{a_{2}b_{2}[Z_{2}(X^{2}-4A_{1}A_{2})+2Z_{1}X(A_{1}+A_{2})]\sinh 2Y_{1}\cos(Y_{3} -Y_{4})\\
& +a_{2}b_{2}[Z_{1}(X^{2}-4A_{1}A_{2})-2Z_{2}X(A_{1}+A_{2})]\cosh 2Y_{1}\sin(Y_{3}-Y_{4})\\
& +a_{1}b_{1}[Z_{4}(X^{2}+4A_{1}A_{2})-2Z_{3}X(A_{1}-A_{2})]\sinh 2Y_{2}\Bigr\}^{2},\\
F(x,t)=&\Bigl[16 a_{1}a_{2}b_{1}b_{2}|k_1|^{2}\cos(Y_{3}-Y_{4})\\
& -A_{3}(X^{2}+4A_{1}^{2})\cosh 2(Y_{1}-Y_{2})+A_{4}(X_{1}^{2}
+4A_{2}^{2})\cosh 2(Y_{1}+Y_{2})\Bigr]^{2}\\
&+\Bigl[A_{6}(X^{2}+4A_{1}^{2})\sinh 2(Y_{1}-Y_{2})
+A_{5}(X^{2}+4A_{2}^{2})\sinh 2(Y_{1}+Y_{2})\Bigr]^{2},
\end{align*}
where
\begin{equation}\label{Y1}
\begin{array}{l}
X=a_{1}^{2}-b_{1}^{2}-a_{2}^{2}+b_{2}^{2},\quad \theta_{1}=x-\frac{t}{(a_{1}^{2}+b_{1}^{2})^{2}},
\quad \theta_{2}=x+\frac{t}{(a_{1}^{2}+b_{1}^{2})^{2}},\\
Y_{1}=a_{1}b_{1}\theta_{1},\quad Y_{2}=a_{2}b_{2}\theta_{1},
\quad Y_{3}=(a_{1}^{2}-b_{1}^{2})\theta_{2},\quad Y_{4}=(a_{2}^{2}-b_{2}^{2})\theta_{2},\\
Z_{1}=3a_{1}^{2}b_{1}-b_{1}^{3},\quad Z_{2}=a_{1}^{3}-3a_{1}b_{1}^{2},
\quad Z_{3}=3a_{2}^{2}b_{2}-b_{2}^{3},\quad Z_{4}=a_{2}^{3}-3a_{2}b_{2}^{2},\\
A_{1}=a_{1}b_{1}+a_{2}b_{2},\quad A_{2}=a_{1}b_{1}-a_{2}b_{2},\quad A_{3}=a_{1}a_{2}+b_{1}b_{2},
\quad  A_{4}=a_{1}a_{2}-b_{1}b_{2},\\
A_{5}=a_{1}b_{2}+a_{2}b_{1},\quad A_{6}=a_{1}b_{2}-a_{2}b_{1},
\end{array}
\end{equation}
and we have taken $c_j=d_j=1$ without loss of generality.
In particular, on the line   $x=t/{|k_1|^{4}}$,
the value of $|u|^2$ is
\begin{equation}\label{2ss-br-FL}
|u|^2_{x=t/{|k_1|^{4}}}=\frac{G_{1}}{2(a_{1}^{2}+a_{2}^{2})^{2}F_{1}},
\end{equation}
where
\begin{align*}
G_{1}=&16a_{2}^{2}b_{2}^{2}[Z_{1}(X^{2}-4A_{1}A_{2})-2Z_{2}X(A_{1}+A_{2})]^{2}
+16a_{1}^{2}b_{1}^{2}[Z_{3}(X^{2}+4A_{1}A_{2})\\
&+2Z_{4}X(A_{1}-A_{2})]^{2}+32a_{1}b_{1}a_{2}b_{2}[Z_{3}(X^{2}+4A_{1}A_{2})+2Z_{4}X(A_{1}-A_{2})]\\
&\times [Z_{1}(X^{2}-4A_{1}A_{2})-2Z_{2}X(A_{1}+A_{2})]\cos \frac{2Xt}{(a_{1}^{2}+b_{1}^{2})^{2}},\\
F_{1} =&\Bigl[16 a_{1}a_{2}b_{1}b_{2}|k_1|^{2}
\cos \frac{2Xt}{(a_{1}^{2}+b_{1}^{2})}-A_{3}(X^{2}+4A_{1}^{2})+A_{4}(X_{1}^{2}
+4A_{2}^{2}))\Bigr]^{2}.
\end{align*}
Thus, it is obvious to see that the period of interaction is given by
\begin{equation}\label{T-breather}
T=\frac{(a_{1}^{2}+b_{1}^{2})^{2}\pi}{(a_{1}^{2}-b_{1}^{2})-(a_{2}^{2}-b_{2}^{2})}.
\end{equation}

To summarize, we have the following,
\begin{proposition}\label{prop-4-1}
A breather from a 2SS occurs when
$k_1\in \mathbb{C}, a_1b_1\neq 0$, $|k_2|=|k_1|$, but
$(a_{1}^{2}-a_{2}^{2})(b_{1}^{2}-b_{2}^{2})\neq 0$,
i.e., $k_2$ is not any reflection point of $k_1$ w.r.t. $x$-axis, $y$-axis, or the origin;
the breather travels along the line $x=t/{|k_1|^{4}}$
with period $T$ given in \eqref{T-breather}.
\end{proposition}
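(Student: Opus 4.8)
The plan is to read the statement as three linked claims---(i) a 2SS degenerates into a bound, periodically interacting state precisely under the stated spectral conditions, (ii) this state moves along the line $x=t/|k_1|^4$, and (iii) its period equals $T$ in \eqref{T-breather}---and to establish each by combining the single-soliton velocity with the explicit envelope \eqref{2ss-FL}. First I would invoke the 1SS analysis of Sec.\ref{sec-4-1-1}: a single hump governed by $k_j$ travels with velocity $1/|k_j|^4=1/(a_j^2+b_j^2)^2$. In a generic 2SS the two humps have distinct velocities $1/|k_1|^4\neq 1/|k_2|^4$ and separate as $t\to\pm\infty$, which is exactly the asymptotic decomposition recorded in Sec.\ref{sec-4-1-2}. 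Imposing $|k_2|=|k_1|$ forces the two velocities to coincide, so the humps can no longer separate; they form a bound state that, by the same token, rides along the common trajectory $x=t/|k_1|^4$. This already fixes the trajectory claim.

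To obtain the periodicity and the period, I would restrict the envelope \eqref{2ss-FL} to this trajectory. Setting $x=t/|k_1|^4$ makes $\theta_1=x-t/(a_1^2+b_1^2)^2=0$ in \eqref{Y1}, hence $Y_1=a_1b_1\theta_1=0$ and $Y_2=a_2b_2\theta_1=0$. Consequently every $\cosh 2Y_1$ and $\cosh 2Y_2$ collapses to $1$ while every $\sinh 2Y_1$ and $\sinh 2Y_2$ vanishes, which is precisely the reduction of $G,F$ to $G_1,F_1$ in \eqref{2ss-br-FL}. Along the trajectory the only surviving $t$-dependence sits in $\cos(Y_3-Y_4)$, and since $\theta_2=2t/|k_1|^4$ one computes $Y_3-Y_4=X\theta_2=2Xt/(a_1^2+b_1^2)^2$ with $X=(a_1^2-b_1^2)-(a_2^2-b_2^2)$ from \eqref{Y1}. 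Requiring this argument to advance by $2\pi$ gives $T=\pi(a_1^2+b_1^2)^2/X$, i.e. \eqref{T-breather}, which proves the period claim.

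It remains to justify the non-degeneracy hypotheses. The condition $a_1b_1\neq 0$ is inherited from the 1SS discussion, where $k_1$ must be neither real nor purely imaginary for \eqref{1-sol-FL-1} to describe a genuine soliton. For the interaction condition I would use the constraint $|k_2|=|k_1|$, i.e. $a_1^2+b_1^2=a_2^2+b_2^2$, to rewrite $X=2(a_1^2-a_2^2)=-2(b_1^2-b_2^2)$; hence $(a_1^2-a_2^2)(b_1^2-b_2^2)=-(b_1^2-b_2^2)^2$, and this product is nonzero exactly when $X\neq0$. Thus the hypothesis $(a_1^2-a_2^2)(b_1^2-b_2^2)\neq 0$ guarantees a finite, nonzero period, whereas $X=0$ forces $a_1^2=a_2^2$ and $b_1^2=b_2^2$, so $k_2\in\{k_1,-k_1,k_1^*,-k_1^*\}$---the reflections of $k_1$ across the axes and the origin---and the cosine drops out, collapsing the breather into parallel (or coincident) solitons rather than a periodic bound state.

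I expect the main obstacle to be not the conceptual argument above but the bookkeeping that connects \eqref{2ss} to the closed-form envelope \eqref{2ss-FL}: expanding $|g/f|^2$ and regrouping the many exponential products into the hyperbolic-trigonometric combinations $Y_1,\dots,Y_4$ and the constants $Z_i,A_i$ is lengthy, and care is needed to confirm that, after imposing $|k_1|=|k_2|$, no additional $t$-dependence hides in $G_1,F_1$ beyond the single cosine. Once that reduction is trusted, the trajectory, periodicity and degeneracy statements follow as above.
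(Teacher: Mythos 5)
Your proposal is correct and follows essentially the same route as the paper: identify the equal-velocity condition $|k_1|=|k_2|$ from the 1SS speed $1/|k_j|^4$, restrict the explicit envelope \eqref{2ss-FL} to the line $x=t/|k_1|^{4}$ so that $Y_1=Y_2=0$ and only the $\cos(Y_3-Y_4)$ term carries the $t$-dependence, and read off the period \eqref{T-breather}. Your added observation that, under $|k_1|=|k_2|$, one has $X=2(a_1^2-a_2^2)=-2(b_1^2-b_2^2)$ so that the non-degeneracy hypothesis is exactly $X\neq 0$ (ruling out the reflection points of $k_1$) is a welcome explicit justification of a step the paper leaves implicit, but it is not a different method.
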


Fig.\ref{F-2}(a) describe a breather coming from two solitons whit same initial phase,
while Fig.\ref{F-2}(b) describe a breather coming from two solitons with different initial phases.

\captionsetup[figure]{labelfont={bf},name={Fig.},labelsep=period}
\begin{figure}[ht]
\centering
\subfigure[ ]{
\begin{minipage}[t]{0.45\linewidth}
\centering
\includegraphics[width=2.1in]{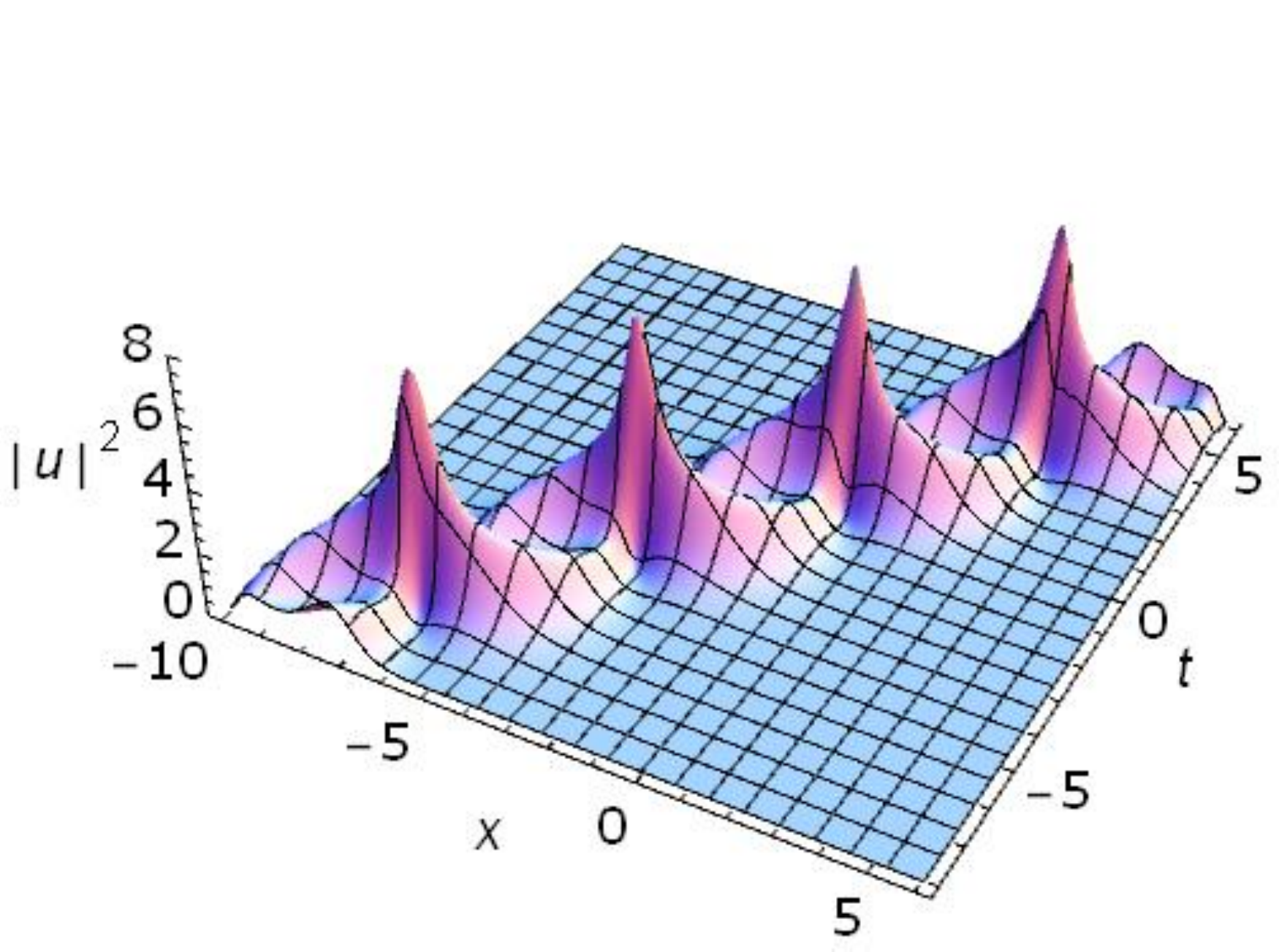}
\end{minipage}%
}%
\subfigure[ ]{
\begin{minipage}[t]{0.45\linewidth}
\centering
\includegraphics[width=2.1in]{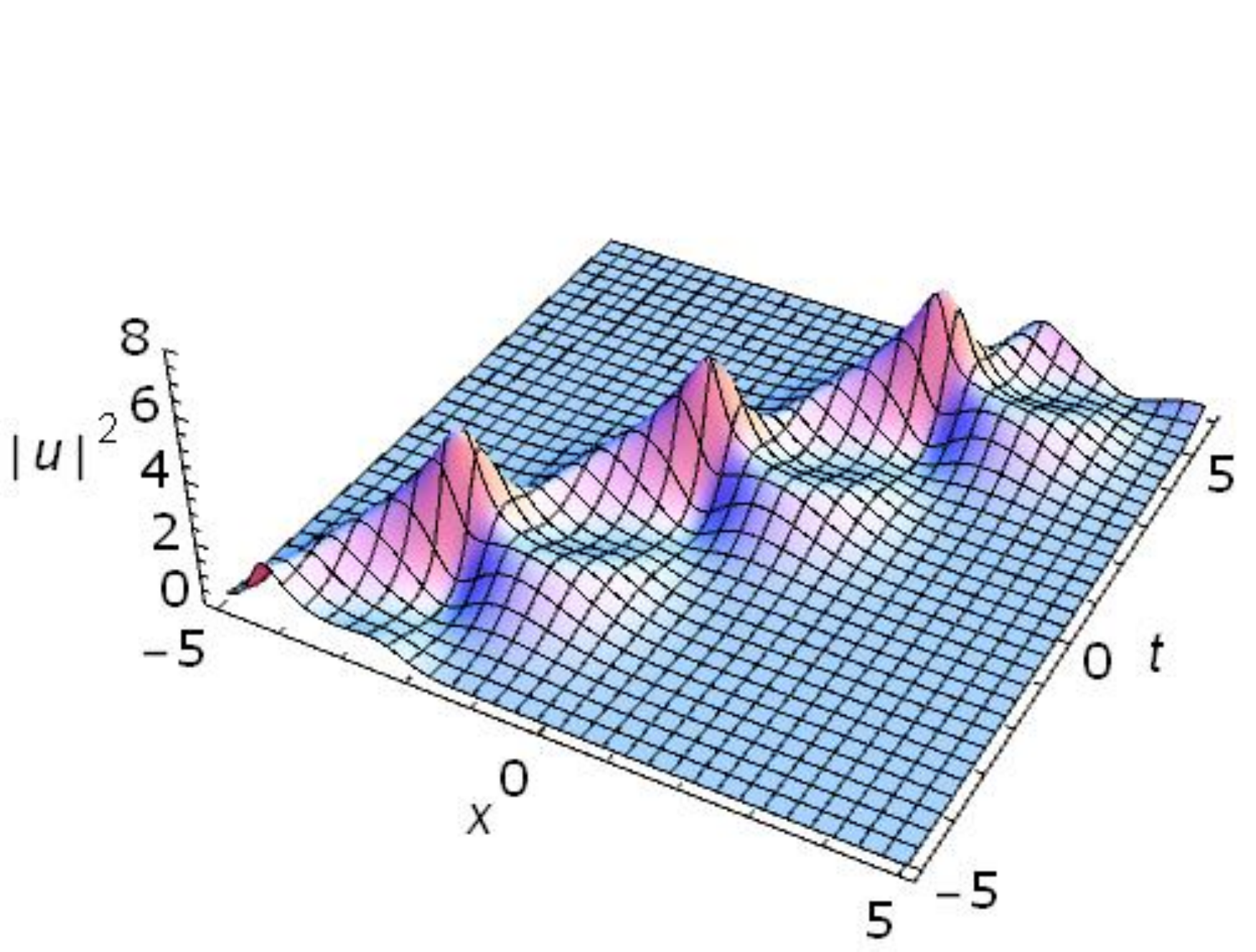}
\end{minipage}%
}
\caption{Breathers of the FL equation \eqref{FL2}.
(a) $|u|^2$ give by \eqref{2ss-FL} with $k_1=\frac{1}{2}+\frac{\sqrt{3}}{2}i,
k_2=\frac{\sqrt{3}}{2}+\frac{1}{2}i$, $c_1=0.5$, $d_1=1$,  $c_2=0.5$ and $d_2=1$.~
(b) $|u|^2$ give by \eqref{2ss-FL} with $k_1=1+\frac{1}{2}i, k_2=\frac{1}{2}-i$, $c_1=0.5$,
$d_1=1$, $c_2=1$ and $d_2=1$.
}
\label{F-2}
\end{figure}

\subsubsection{Double-pole solutions}\label{sec-4-1-4}

The simplest Jordan block solution  is given through \eqref{2ss-pp},
\begin{equation*}
u^{}_{2}=\frac{g}{f},
\end{equation*}
with
\begin{equation*}
f=|\partial_x\phi, \partial^2_x\phi; \psi, \partial_x\psi|,~~
g=|\phi, \partial_x\phi, \partial^2_x\phi;\partial_x \psi|,
\end{equation*}
but here $\phi$ and $\psi$ are taken as
\begin{align*}
& \phi=\left(c_1 e^{\eta(k_1)}, c_1 \partial_{k_1} e^{\eta(k_1)},
d_1 e^{\eta(k^*_1)}, d_1\partial_{k^*_1} e^{\eta(k^*_1)}\right)^T,\\
& \psi=\left(k_1d_{1}^{*}e^{-\eta(k_1)}, d_{1}^{*} e^{-\eta(k_1)}+d_{1}^{*}k_{1}\partial_{k_1} e^{-\eta(k_1)},
c_{1}^{*}k_{1}^{*} e^{-\eta(k^*_1)}, c_{1}^{*}e^{-\eta(k^*_1)}
+c_{1}^{*}k^*_1\partial_{k^*_1} e^{-\eta(k^*_1)}\right)^T.
\end{align*}
The corresponding envelope is
\begin{equation}\label{2ss-D-FL}
|u_{2}|^2=\frac{16a^{2}b^{2}G_{2} }{F_{2} },
\end{equation}
with
\begin{align*}
G_{2} =&\Bigl[(B_{1}+B_{2})\mathrm{e}^{2Y_{1}}-(B_{1}-B_{2})\mathrm{e}^{-2Y_{1}}\Bigr]^{2}+
\Bigl[(B_{3}+B_{4})\mathrm{e}^{2Y_{1}}+(B_{3}-B_{4})\mathrm{e}^{-2Y_{1}}\Bigr]^{2},\\
F_{2} =&16\Bigl[B_{5}(\mathrm{e}^{4Y_{1}}
+\mathrm{e}^{-4Y_{1}})-2(a^{2}+b^{2})^{2}(a^{4}+b^{4})-16a^{2}b^{2}t^{2}\\
& -32a^{2}b^{2}(6a^{2}b^{2}-a^{4}-b^{4})xt-16a^{2}b^{2}(a^{2}+b^{2})^{4}x^{2}\Bigr]^{2}\\
&+16\Bigl\{B_{6}(\mathrm{e}^{4Y_{1}}-\mathrm{e}^{-4Y_{1}})
+16a^{2}b^{2}(a^{2}-b^{2})[t-(a^{2}+b^{2})^{2}x]\Bigr\}^{2},
\end{align*}
where $Y_1$ is defined as in \eqref{Y1},
\begin{align*}
&B_{1}=4a^{3}(a^{2}+b^{2}),\quad B_{2}=8a^{2}b[t-(a^{2}-3b^2)(a^2+b^2)x],\\
&B_{3}=4b^{3}(a^{2}+b^{2}),\quad  B_{4}=8ab^{2}[t+(3a^{2}-b^2)(a^2+b^2)x],\\
&B_{5}=(a^{2}-b^{2})(a^{2}+b^{2})^2,\quad B_{6}=2ab(a^{2}+b^{2})^3,
\end{align*}
and we have taken $k_1=a+ib,~c_1=d_1=1$.

In order to understand asymptotic behavior of $|u|^{2}$,
we consider $|u|^{2}$ in a coordinate frame $(z^{(+)}_{\pm}, t)$, where
\begin{subequations}
\begin{equation}\label{Jordon-trace-1}
z^{(+)}_{\pm}=x-\frac{t}{(a^2+b^2)^2}\pm \frac{2\ln t+\gamma}{4ab},\quad \gamma=\frac{\ln H}{2},
\end{equation}
with
\begin{equation}
H=\frac{2^{16}a^{8}b^{8}}{(a^{2}+b^{2})^{8}[(a^2-b^2)^{2}+4a^{2}b^{2}(a^{2}+b^{2})]}.
\end{equation}
\end{subequations}
In this frame when $t\rightarrow +\infty$, we get
\begin{equation}
|u|^{2}\rightarrow \frac{2a^{2}b^{2}}{(a^{2}+b^{2})^{2}\sqrt{(a^{2}-b^{2})^{2}+4a^{2}b^{2}(a^{2}+b^{2})}
\cosh z^{(+)}_{\pm}-a^{4}+b^{4}}.
\end{equation}
Similarly, in the coordinate $(z^{(-)}_{\pm},t)$, where
\begin{equation}\label{Jordon-trace-2}
z^{(-)}_{\pm}=x-\frac{t}{(a^2+b^2)^2}\pm \frac{2\ln (-t)+\gamma}{4ab},
\end{equation}
when $t\rightarrow -\infty$, we obtain
\begin{equation}
|u|^{2}\rightarrow \frac{2a^{2}b^{2}}{(a^{2}+b^{2})^{2}\sqrt{(a^{2}-b^{2})^{2}+4a^{2}b^{2}(a^{2}+b^{2})}
\cosh z^{(-)}_{\pm}-a^{4}+b^{4}}.
\end{equation}
The above asymptotic analysis indicates, as depicted in Fig.\ref{F-3},
when $|t|$ is large enough the wave will separate into two single solitons asymptotically traveling along  the curves
\begin{equation}\label{Jordon-trace}
x(t)=\frac{t}{(a^2+b^2)^2}\mp\frac{2\ln|t|+\gamma}{4ab}.
\end{equation}

Note that in Fig.\ref{F-3}(b) we give a density plot of (a) as well as the curves given in \eqref{Jordon-trace},
see the red curves. This also illustrates our asymptotic analysis.

\captionsetup[figure]{labelfont={bf},name={Fig.},labelsep=period}
\begin{figure}[ht]
\centering
\subfigure[ ]{
\begin{minipage}[t]{0.45\linewidth}
\centering
\includegraphics[width=2.3in]{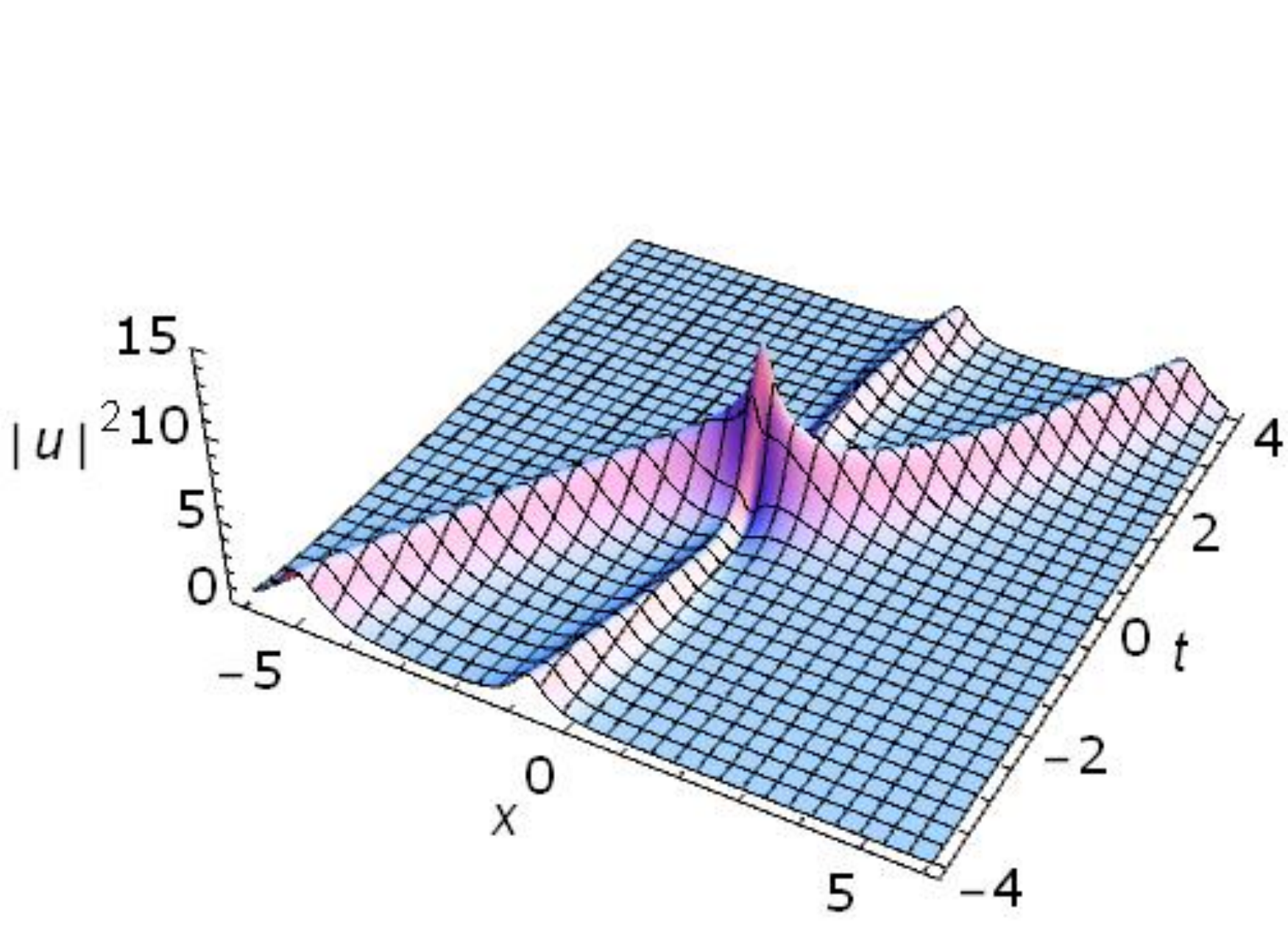}
\end{minipage}%
}%
\subfigure[ ]{
\begin{minipage}[t]{0.45\linewidth}
\centering
\includegraphics[width=1.5in]{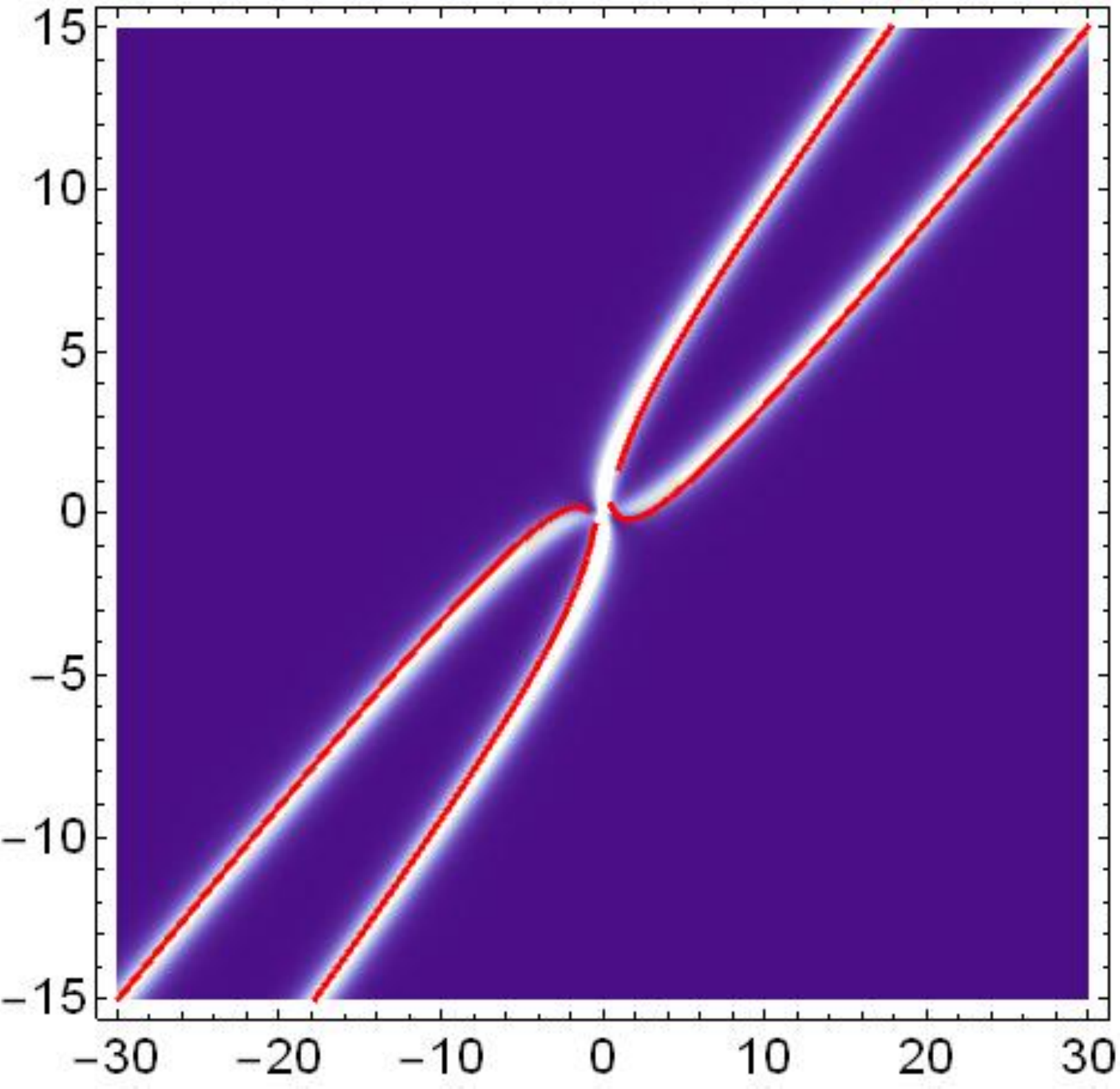}
\end{minipage}%
}
\caption{Shape and motion of Jordan block solution to the  FL equation \eqref{FL2}.
(a) Jordan block solution given by \eqref{2ss-D-FL} with $k=1+0.5i$, $c=d=1$.~
(b) Trajectories of the  solution  in (a).
}
\label{F-3}
\end{figure}

\subsection{Solutions related to real eigenvalues}\label{sec-4-2}

Case (2) in Table \ref{tab-1} contributes solutions that are related to real discrete eigenvalues.
Note that so far these type of solutions are not obtained in inverse scattering transform \cite{Lenells-F-Non-2009}
or Riemann-Hilbert approach \cite{Ai-Xu-AML-2019},
as eigenvalues in those two approaches do not locate on axes.

\subsubsection{Periodic and double periodic solutions}\label{sec-4-2-1}

Consider $\mathbf{K}_{N}$ given in \eqref{A-diag-real} and
$\mathbf{H}_{N}$ given in  \eqref{A-diag-real-H}, where $k_j, h_j \in \mathbb{R}$.
Note that $c^{\pm}\in \mathbb{C}$.
When $N=1$ we have
\begin{equation}\label{1ss-p}
u^{}_{1\mathrm{SS}}=\frac{c_{1}d_{1}(k_{1}^{2}-h_{1}^{2})}
{k_{1}h_{1}\left[c_{1}d_{1}^{*}k_{1}\mathrm{e}^{-i(h_{1}^{2}x+\frac{t}{h_{1}^{2}})}
+c_{1}^{*}d_{1}h_{1}\mathrm{e}^{-i(k_{1}^{2}x+\frac{t}{k_{1}^{2}})}\right]},
\end{equation}
and the corresponding envelop is
\begin{equation}\label{1ss-pp}
|u^{}_{1\mathrm{SS}}|^{2}=
\frac{(k_{1}^{2}-h_{1}^{2})^{2}}
{k_{1}^{2}h_{1}^{2}[k_{1}^{2}+h_{1}^{2}+2k_{1} h_{1}
\sin(\omega-\vartheta_{1})]},
\end{equation}
where
\begin{equation}\label{vartheta}
\vartheta_{1}=(k_{1}^{2}-h_{1}^{2})\left(x-\frac{t}{k_{1}^{2}h_{1}^{2}}\right),
~~ \omega=\arctan\frac{\mathrm{Re}[c_{1}^{2}d_{1}^{*2}]}{\mathrm{Im}[c_{1}^{2}d_{1}^{*2}]},
\end{equation}
and $c_1=c_1^+, d_1=c_1^-$.
We require $k_{1}^{2}\neq h_{1}^{2}$, otherwise $|u|^2=0$.
\eqref{1ss-pp} is a periodic wave characterized as the following,
\begin{align*}
&\textrm{top trajectories}:~~x(t)=\frac{t}{k_{1}^{2}h_{1}^{2}}+\frac{1}{k_{1}^{2}-h_{1}^{2}}
\arctan \frac{\mathrm{Re}[c_{1}^{2}d_{1}^{*2}]}{\mathrm{Im}[c_{1}^{2}d_{1}^{*2}]}+
 \frac{2\kappa \pi}{k_{1}^{2}-h_{1}^{2}}~,~ \kappa\in \mathbb{Z},\\
&\textrm{amplitude}:~~\frac{(k_{1}^{2}-h_{1}^{2})^{2}}{k_{1}^{2}h_{1}^{2}
(k_{1}^{2}+h_{1}^{2}-2|k_{1}h_{1}|)},\\
&\textrm{velocity}:~~\frac{1}{k_{1}^{2}h_{1}^{2}},\\
& \mathrm{period~in~}x, y~\mathrm{direction}:~~T_{x}=\frac{2\pi}{h_{1}^{2}-k_{1}^{2}},
~~T_{y}=\frac{2\pi k_{1}^{2}h_{1}^{2}}{k_{1}^{2}-h_{1}^{2}},\\
& \mathrm{distance~ between~ two~ adjacent~ trajectories:}~~
T_d=\frac{2\pi k_{1}^{2}h_{1}^{2}}{|k_{1}^{2}-h_{1}^{2}|\sqrt{k_{1}^{4}h_{1}^{4}+1}}.
\end{align*}
The wave is depicted in Fig.\ref{F-4}.

\captionsetup[figure]{labelfont={bf},name={Fig.},labelsep=period}
\begin{figure}[ht]
\centering
\begin{minipage}[t]{0.45\linewidth}
\centering
\includegraphics[width=2.3in]{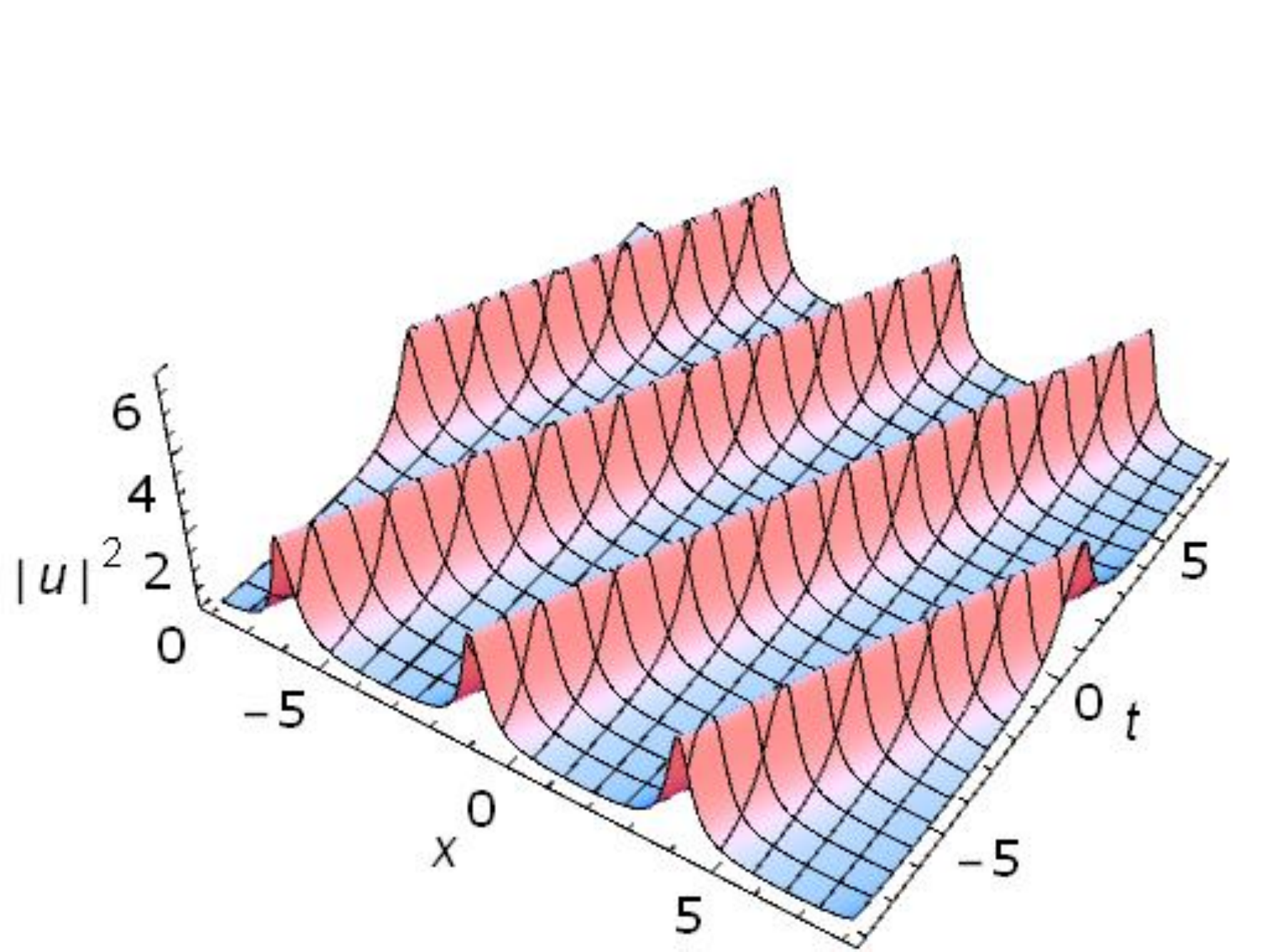}
\end{minipage}%
\caption{Periodic solution of the FL equation \eqref{FL2}, given by \eqref{1ss-pp}
with $k_{1}=1$, $h_{1}=1.5$, $c_{1}=d_{1}=1$.}
\label{F-4}
\end{figure}

In 2SS case, i.e. $N=2$ in  \eqref{A-diag-real} and \eqref{A-diag-real-H}, 2SS is given by \eqref{2ss-pp}
where
\begin{subequations}\label{2ss-phs}
\begin{align}
& \phi=(c_1e^{\eta(k_1)}, c_2e^{\eta(k_2)}, d_1e^{\eta(h_1)}, d_2e^{\eta(h_2)})^T,\\
& \psi=(k_1c_{1}^{*}e^{-\eta(k_1)}, k_2c_{2}^{*}e^{-\eta(k_2)},
-h_1d_{1}^{*}e^{-\eta(h_1)}, -h_2d_{2}^{*}e^{-\eta(h_2)})^T,
\end{align}
\end{subequations}
$\eta$ is defined by \eqref{eta}, $k_j, h_j \in \mathbb{R}$ and $c_j,d_j\in \mathbb{C}$.

In this case, $|u^{}_{\mathrm{2SS}}|^2$ exhibit double periodic interactions, as illustrated in Fig.\ref{F-5}.
This is not surprised from the periodic feature of 1SS.

\captionsetup[figure]{labelfont={bf},name={Fig.},labelsep=period}
\begin{figure}[ht]
\centering
\subfigure[ ]{
\begin{minipage}[t]{0.45\linewidth}
\centering
\includegraphics[width=2.3in]{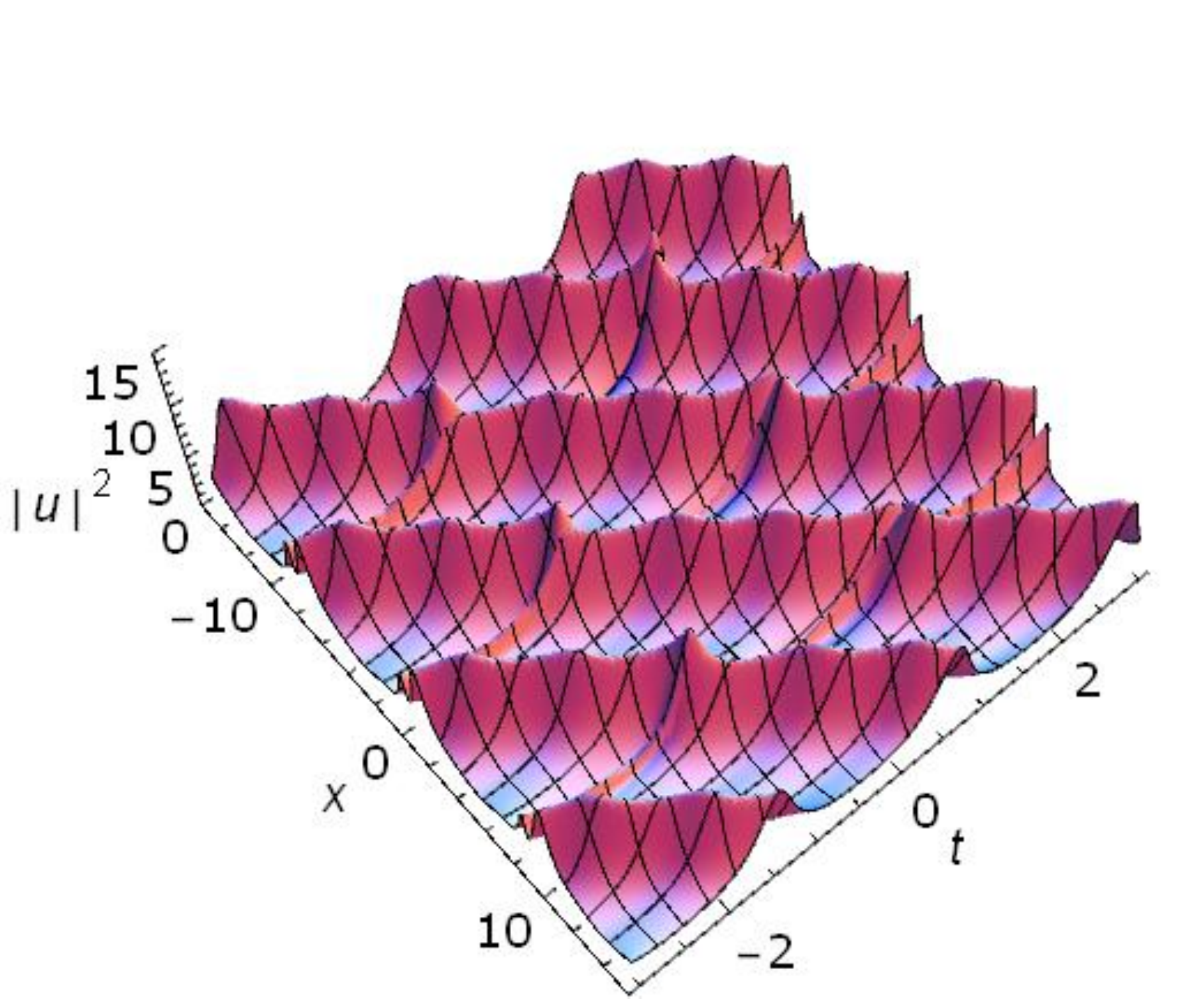}
\end{minipage}%
}
\subfigure[ ]{
\begin{minipage}[t]{0.45\linewidth}
\centering
\includegraphics[width=1.5in]{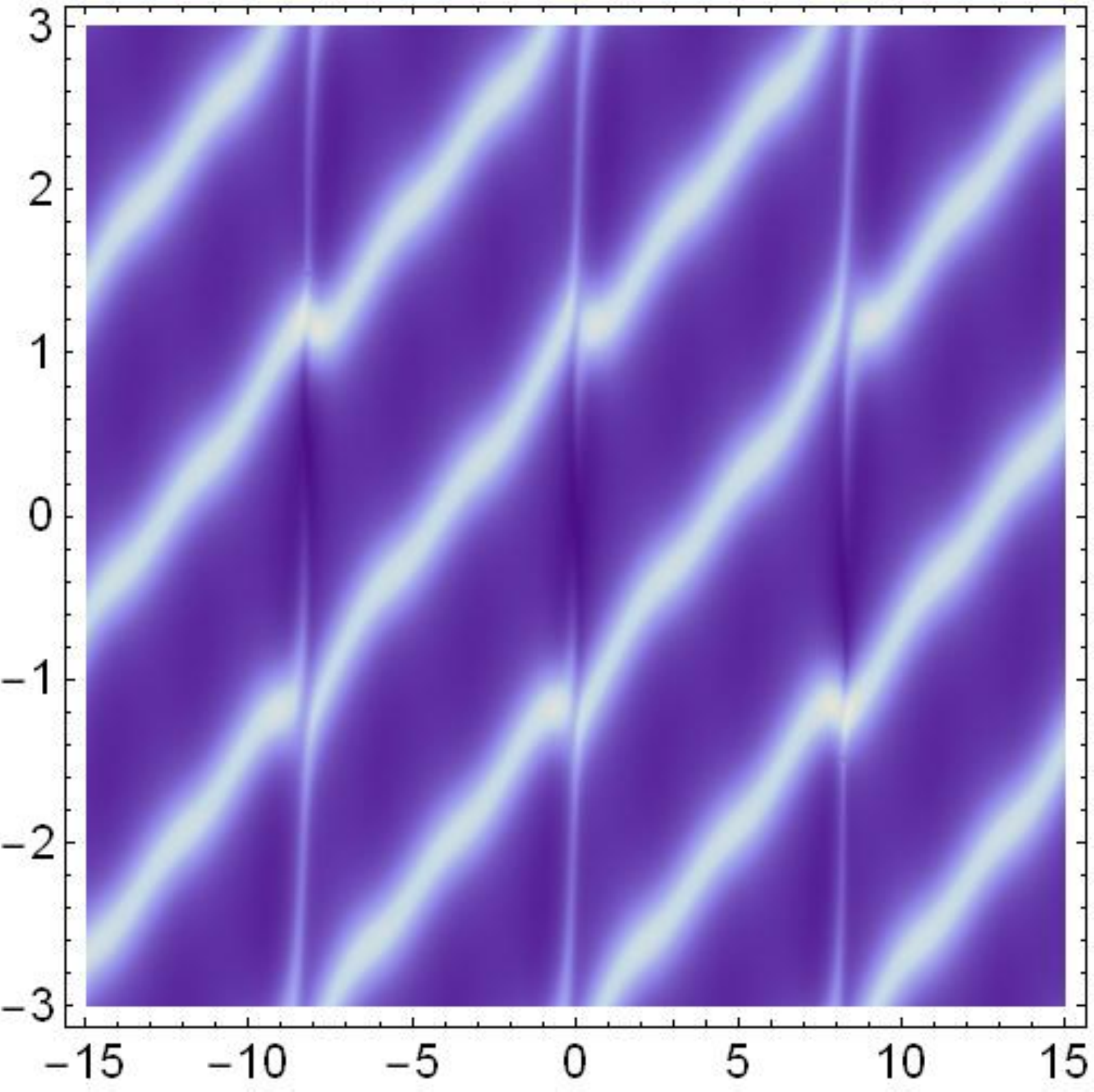}
\end{minipage}%
}
\caption{Double periodic solution of FL equation \eqref{FL2} in Case (2).
(a) Envelop $|u|^2$ of 2SS given by \eqref{2ss-pp} with \eqref{2ss-phs}
 where $k_{1}=2$, $h_{1}=-1.8$, $k_{2}=1$, $h_{2}=0.5$,
$c_1=d_1=c_2=d_2=1$.~(b) Density plot of (a).}
\label{F-5}
\end{figure}

\subsubsection{Solitary waves with algebraic decay}\label{sec-4-2-2}

Although when $\mathbf{K}_N$ and $\mathbf{H}_N$ are diagonal with distinct diagonal elements,
solutions exhibit (multi-)periodic interaction behavior,
in resonant case, for example, both $\mathbf{K}_N$ and $\mathbf{H}_N$ are Jordan blocks with $N=2$,
the resonance leads to algebraic decayed waves asymptotically, without periodic interaction.
See Fig.\ref{F-6} as an example.

Let us consider both $\mathbf{K}_N$ and $\mathbf{H}_N$ to be 2 by 2  Jordan blocks
\begin{equation}\label{KH-jordan}
\mathbf{K}_{2}=\left(
  \begin{array}{cc}
    k_1 & 0  \\
    1& k_1
  \end{array}
\right),
~~
\mathbf{H}_{2}=\left(
  \begin{array}{cc}
    h_1 & 0  \\
    1& h_1
  \end{array}
\right),~~ k_1,h_1\in \mathbb{R}.
\end{equation}
Solution is given by \eqref{2ss-pp} but, from \eqref{psi-phi-pmm}
\begin{align*}
& \phi=\left(c_1 e^{\eta(k_1)}, c_1 \partial_{k_1} e^{\eta(k_1)},
d_1 e^{\eta(h_1)}, d_1\partial_{h_1} e^{\eta(h_1)}\right)^T,\\
& \psi=\left(k_1c_{1}^{*}e^{-\eta(k_1)}, c_{1}^{*} e^{-\eta(k_1)}+c_{1}^{*} k_1\partial_{k_1} e^{-\eta(k_1)},
 -h_1d_{1}^{*}e^{-\eta(h_1)}, -d_{1}^{*}e^{-\eta(h_1)}-d_{1}^{*}h_1\partial_{h_1} e^{-\eta(h_1)}\right)^T,
\end{align*}
where $\eta$ is defined by \eqref{eta}, $k_1, h_1 \in \mathbb{R}$ and $c_1,d_1\in \mathbb{C}$.
Envelope is
\begin{equation}\label{case2-jordon}
|u_2(x,t)|^{2}=\frac{4(h_{1}^{2}-k_{1}^{2})^{2}G_{2}}{F_{2}},
\end{equation}
where
\begin{align*}
&G_{2}(x,t)=M_{1}^{2}+M_{2}^{2}+M_{3}^{2}+M_{4}^{2}
+2(M_{1}M_{3}+M_{2}M_{4})\cos \vartheta_1
+2(M_{1}M_{4}-M_{2}M_{3})\sin \vartheta_1,\\
&F_{2}(x,t)=N_{1}^{2}+N_{2}^{2}+N_{3}^{2}\!+\!N_{4}^{2}
+2N_{1}N_{2}\cos 2\vartheta_1 -\!2 N_{3}(N_{1}\!-\!N_{2}) \sin \vartheta_1
+2 N_{4}(N_{1}\!+\!N_{2})\cos \vartheta_1,
\end{align*}
with $\vartheta_1$ defined in \eqref{vartheta},
\begin{align*}
&M_{1}=-2k_{1}(k_{1}^{2}-h_{1}^{2})(h_{1}^{4}x-t),
\quad M_{2}=-k_{1}h_{1}^{2}(3k_{1}^{2}+h_{1}^{2}),\\
&M_{3}=2h_{1}(k_{1}^{2}-h_{1}^{2})(k_{1}^{4}x-t),
\quad M_{4}=-k_{1}^{2}h_{1}(3h_{1}^{2}+k_{1}^{2}),\\
&N_{1}=-4h_{1}^{5}k_{1}^{3},\quad N_{2}=-4h_{1}^{3}k_{1}^{5},~~
N_{3}=-2(k_{1}^{2}-h_{1}^{2})(k_{1}^{4}-h_{1}^{4}) (h_{1}^{2}k_{1}^{2} x-t),\\
&N_{4}=-h_{1}^{2}k_{1}^{2}(h_{1}^{4}+6h_{1}^{2}k_{1}^{2}+k_{1}^{4})
+ 4(k_{1}^{2}-h_{1}^{2})^2(t-k_{1}^{4}x)(t-h_{1}^{4}x),
\end{align*}
and we have taken $c_1=d_1=1$.

Obviously, when both $|x|$ and $|t|$ go to infinity, $|u|^2$ is dominated by
\[|u|^2 \sim~ \frac{M_1^2+M_3^2}{N_3^2+N_4^2}.
\]
Thus, we consider the above $|u|^2$ in the coordinate frame
$(X_1,t)$ and $(X_2,t)$, respectively, where
\begin{equation}\label{XX}
X_1=x-\frac{t}{k_{1}^{4}},~~X_2=x-\frac{t}{h_{1}^{4}}.
\end{equation}
After taking $t \to \pm\infty$, we arrive at the following.
\begin{proposition}\label{prop-4-2}
Asymptotically,  $|u|^2$ given in \eqref{case2-jordon} obeys
\begin{equation}
|u|^2 \sim \frac{4}{k_1^{2}(1+4k_1^{4}X_{1}^{2})}
\end{equation}
in $(X_1,t)$, and
\begin{equation}
|u|^{2}\sim \frac{4}{h_1^{2}(1+4h_1^{4}X_{2}^{2})}
\end{equation}
in $(X_2,t)$, where $X_j$ are given in \eqref{XX}.
\end{proposition}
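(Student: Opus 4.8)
The plan is to start from the closed-form envelope \eqref{case2-jordon} and pass to the comoving frame $(X_1,t)$ defined in \eqref{XX} by substituting $x=X_1+t/k_1^4$, then perform a degree-in-$t$ count on every building block $M_i$, $N_i$ and on the phase $\vartheta_1$. The reason one should expect an algebraic (rather than exponential) profile is already visible here: since $k_1,h_1\in\mathbb{R}$, the exponentials $e^{\eta(k_1)},e^{\eta(h_1)}$ entering $\phi,\psi$ have unit modulus, so the entire $(x,t)$-growth of $f$ and $g$ is polynomial and comes only from the Jordan-block (derivative) prefactors. Hence the competition as $t\to\pm\infty$ is purely between polynomials, and the limit will be a rational function of $X_1$.

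The heart of the computation is the substitution itself. Writing $\Delta=k_1^2-h_1^2$, one finds $t-k_1^4x=-k_1^4X_1$, which is bounded in $t$; consequently $M_3=2h_1\Delta k_1^4X_1$ and one of the two factors in $N_4$ stay $O(1)$, while $M_1$, $N_3$ and the remaining factor of $N_4$ all grow linearly in $t$. The decisive structural fact to extract is the asymmetry of these leading coefficients: the $O(t)$ parts of $M_1$ and $N_3$ are independent of $X_1$, whereas the $O(t)$ part of $N_4$ is proportional to $X_1$, since it arises from the product $(t-k_1^4x)(t-h_1^4x)$ whose bounded factor is exactly $-k_1^4X_1$. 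At the same time the bounded quantities $M_2,M_4,N_1,N_2$, together with the oscillatory factors $\cos\vartheta_1,\sin\vartheta_1$ — which keep oscillating because $\vartheta_1$ in \eqref{vartheta} retains a term linear in $t$ after the substitution — appear only multiplied by strictly lower powers of $t$, so every cross term in $G_2$ and $F_2$ is subdominant.

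It then remains to keep only the $O(t^2)$ contributions: the numerator $4(h_1^2-k_1^2)^2G_2$ reduces to $4\Delta^2M_1^2$ and the denominator $F_2$ reduces to $N_3^2+N_4^2$, with $M_1^2$ and $N_3^2$ of order $t^2$ with $X_1$-independent coefficients and $N_4^2$ of order $X_1^2t^2$. Cancelling the common $t^2$ and the shared constant factor collapses the ratio to $4/[k_1^2(1+4k_1^4X_1^2)]$, which is exactly the claimed profile; since the retained leading terms are quadratic in $t$, the same limit is obtained as $t\to+\infty$ and as $t\to-\infty$. The $(X_2,t)$ assertion follows verbatim by interchanging $k_1\leftrightarrow h_1$, now using $t-h_1^4x=-h_1^4X_2$ as the bounded factor. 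The main obstacle is the bookkeeping: one must organize the degree count so that the vanishing of the oscillatory and constant terms is rigorous, and in particular must notice that the $X_1$-dependence of the limit is carried solely by $N_4$ — treating $M_1$ or $N_3$ as $X_1$-dependent at leading order, or dropping the $X_1$-linear part of $N_4$, would wrongly produce an $X_1$-independent constant rather than the algebraically decaying wave.
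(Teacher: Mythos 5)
Your proposal is correct and follows essentially the same route as the paper: the paper likewise passes to the comoving frames \eqref{XX}, observes that $G_2$ and $F_2$ are dominated by $M_1^2+M_3^2$ and $N_3^2+N_4^2$ respectively, and reads off the limit as $t\to\pm\infty$. Your degree-in-$t$ bookkeeping (in particular, isolating $t-k_1^4x=-k_1^4X_1$ as the bounded factor so that the $X_1$-dependence of the limit enters only through $N_4$) is in fact a more careful articulation of the step the paper dispatches with ``obviously,'' and it reproduces the stated constants correctly.
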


This indicates that, when $t$ is large enough,  $|u|^2$  are two algebraic decayed waves, with
amplitudes $4/k_1^2$ and $4/h^2_1$, respectively, as depicted in Fig.\ref{F-6}(a).
When $t$ is not large, periodic effect  can still be observed, see Fig.\ref{F-6}(d).
It is worthy to mention that the above asymptotic analysis indicates that,
asymptotically, there is no phase shift after interaction, which is different from
normal soliton interactions.
This is illustrated in Fig.\ref{F-6}(b) where the density plot of (a) is overlapped by the lines $X_1=0$ and $X_2=0$.

\captionsetup[figure]{labelfont={bf},name={Fig.},labelsep=period}
\begin{figure}[ht]
\centering
\subfigure[ ]{
\begin{minipage}[t]{0.45\linewidth}
\centering
\includegraphics[width=2.3in]{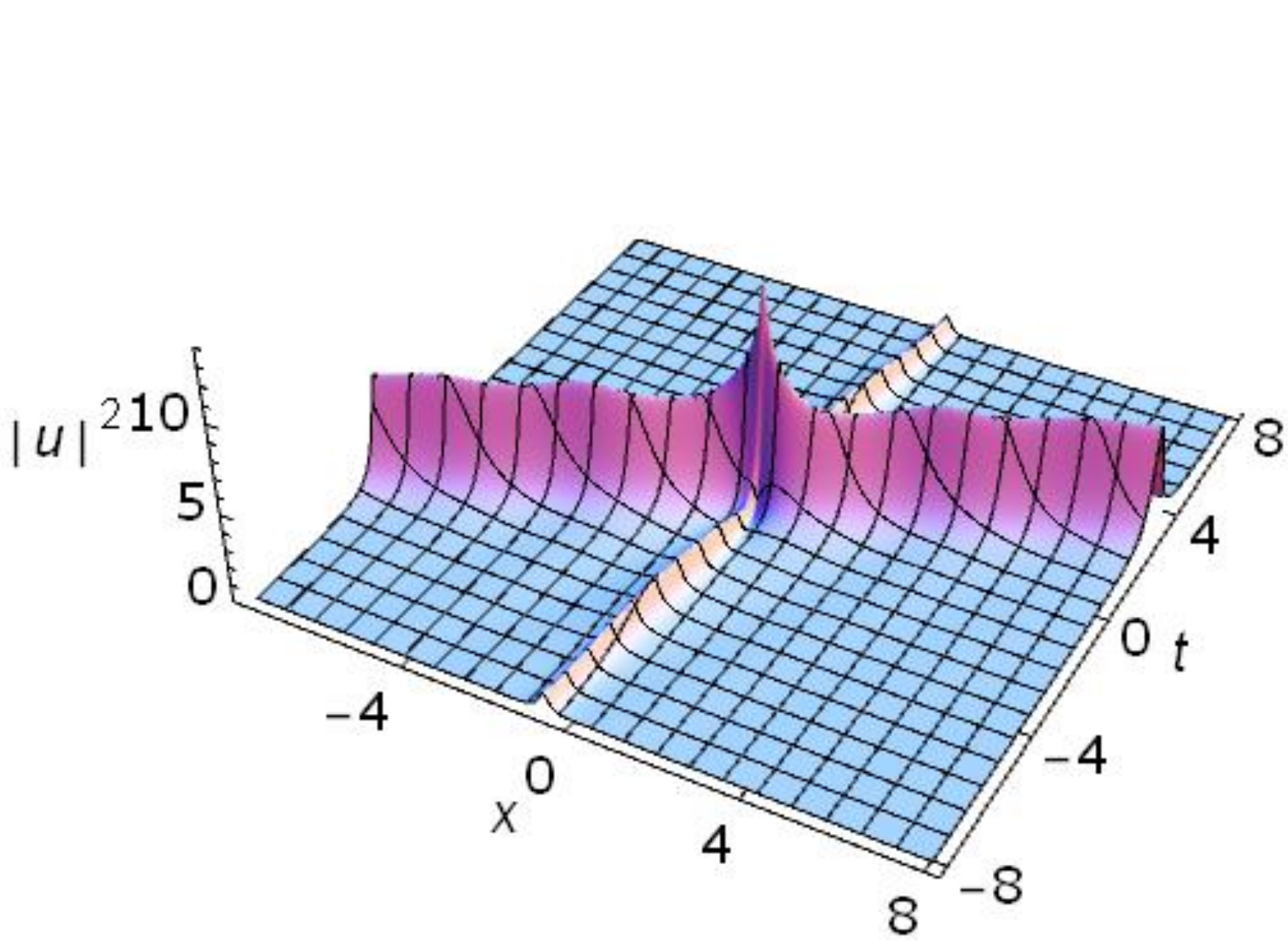}
\end{minipage}%
}%
\subfigure[ ]{
\begin{minipage}[t]{0.45\linewidth}
\centering
\includegraphics[width=1.5in]{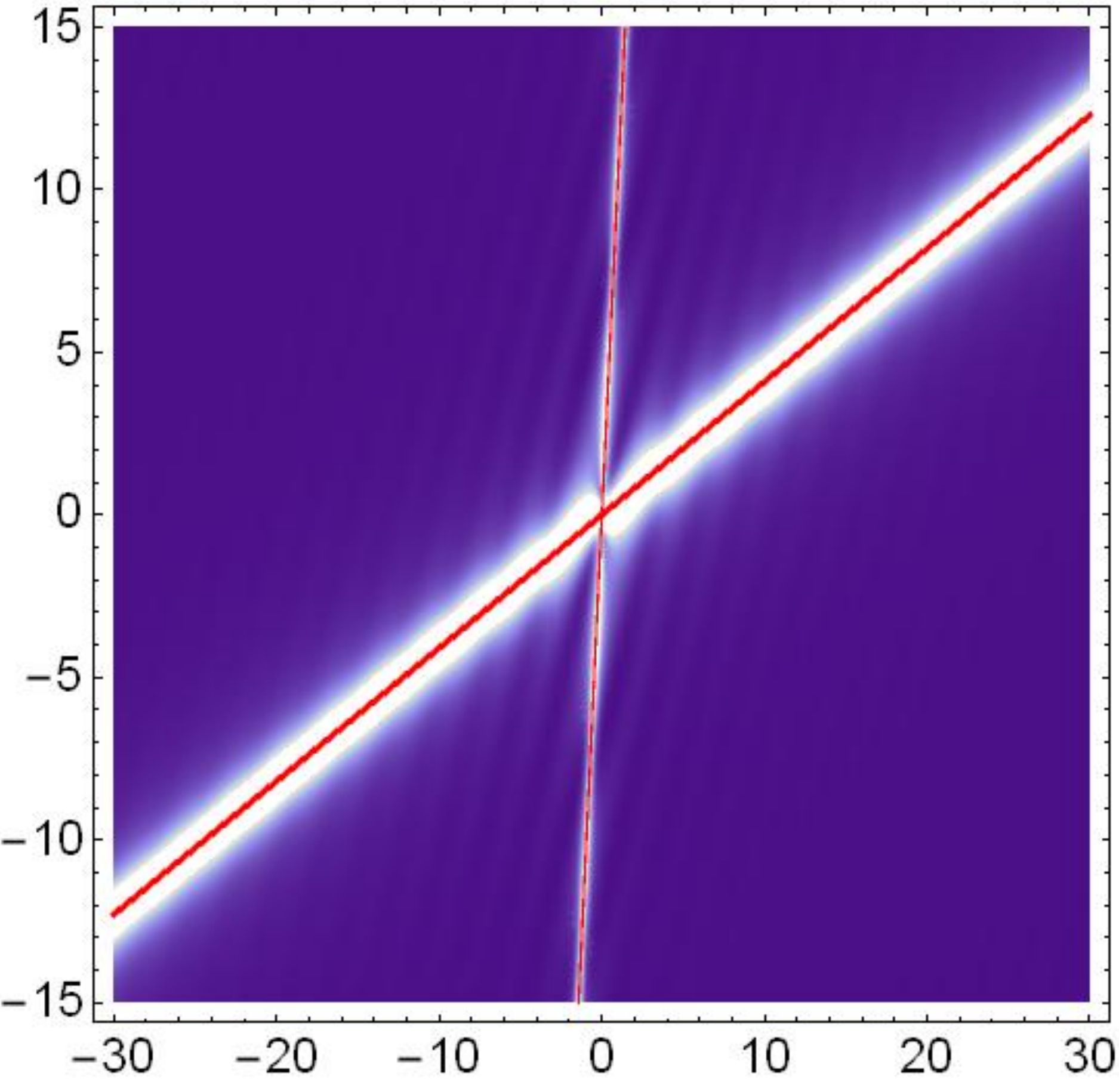}
\end{minipage}%
}
\centering
\subfigure[ ]{
\begin{minipage}[t]{0.45\linewidth}
\centering
\includegraphics[width=2.1in]{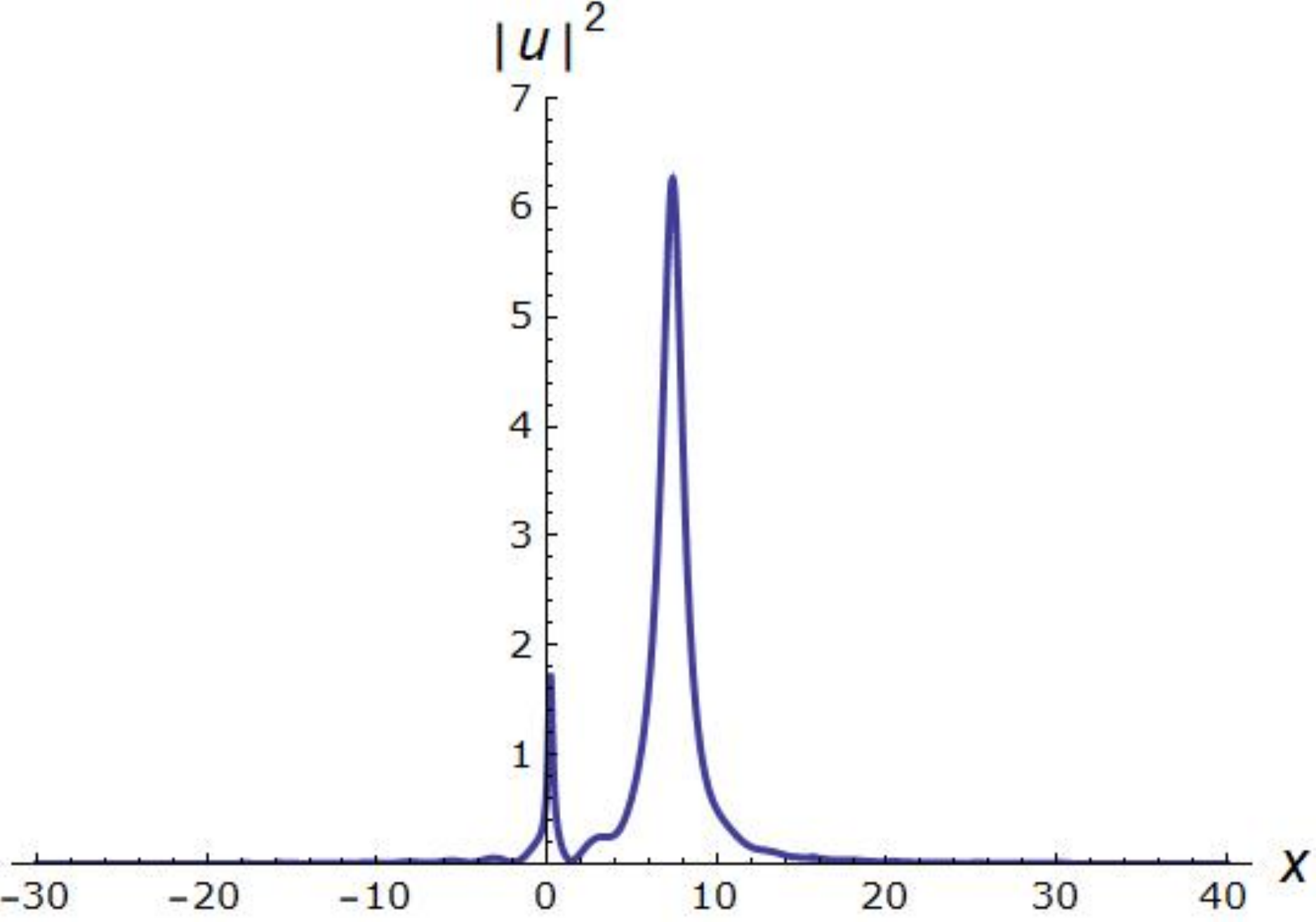}
\end{minipage}%
}%
\subfigure[ ]{
\begin{minipage}[t]{0.45\linewidth}
\centering
\includegraphics[width=2.1in]{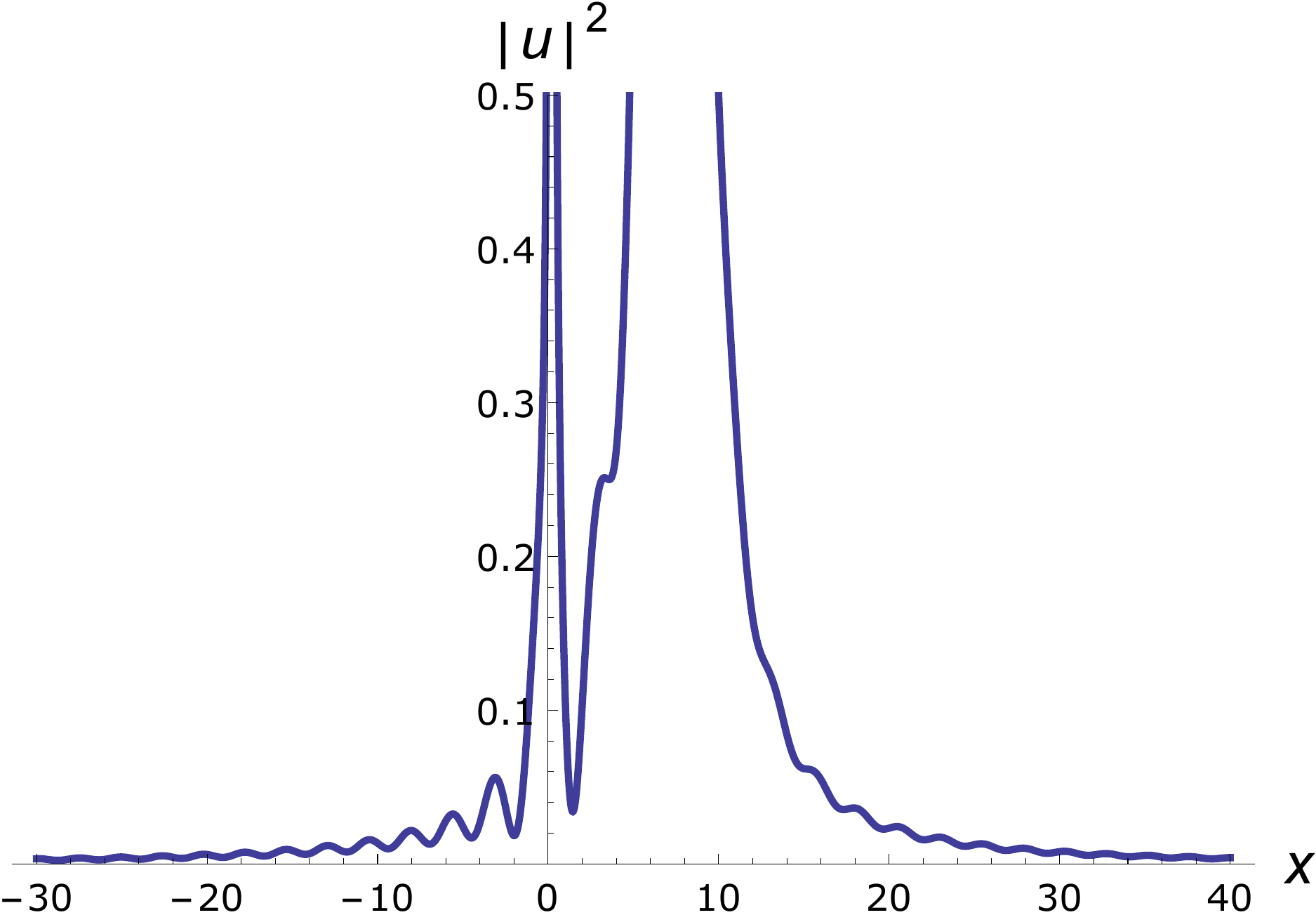}
\end{minipage}%
}
\caption{Jordan block solution of the FL equation \eqref{FL2}.
(a) $|u|^2$ given by \eqref{case2-jordon} with  $k_{1}=-1.8$, $h_{1}=0.8$ and $c_1=d_1=1$.~
(b) Density plot of (a) overlapped by the lines $X_1=0$ and $X_2=0$.~
(c) 2D plot of (a) at $t=3$.~ (d) An enlarged plot of (a) at $t=3$.}
\label{F-6}
\end{figure}

One may also consider mixed solutions resulted from diagonal $\mathbf{K}_N$ and Jordan block $\mathbf{H}_N$.
When $N=2$, $u$ is given by \eqref{2ss-pp} where
\begin{subequations}\label{aaa}
\begin{align}
& \phi=\left(c_1 e^{\eta(k_1)}, c_2 e^{\eta(k_2)},
d_1 e^{\eta(h_1)}, d_1\partial_{h_1} e^{\eta(h_1)}\right)^T,\\
& \psi=\left(c_{1}^{*}k_1e^{-\eta(k_1)}, c_{2}^{*}k_2 e^{-\eta(k_2)},
 -d_{1}^{*}h_1e^{-\eta(h_1)}, -d_{1}^{*}e^{-\eta(h_1)}-d_{1}^{*}h_1\partial_{h_1}e^{-\eta(h_1)}\right)^T,
\end{align}
\end{subequations}
and $k_j,h_j\in \mathbb{R}$, $c_j,d_j\in \mathbb{C}$.
$|u|^2$  is illustrated in Fig.\ref{F-7}, from which we can see one solitary wave is interacting with
a periodic wave.

\captionsetup[figure]{labelfont={bf},name={Fig.},labelsep=period}
\begin{figure}[ht]
\centering
\subfigure[ ]{
\begin{minipage}[t]{0.45\linewidth}
\centering
\includegraphics[width=2.3in]{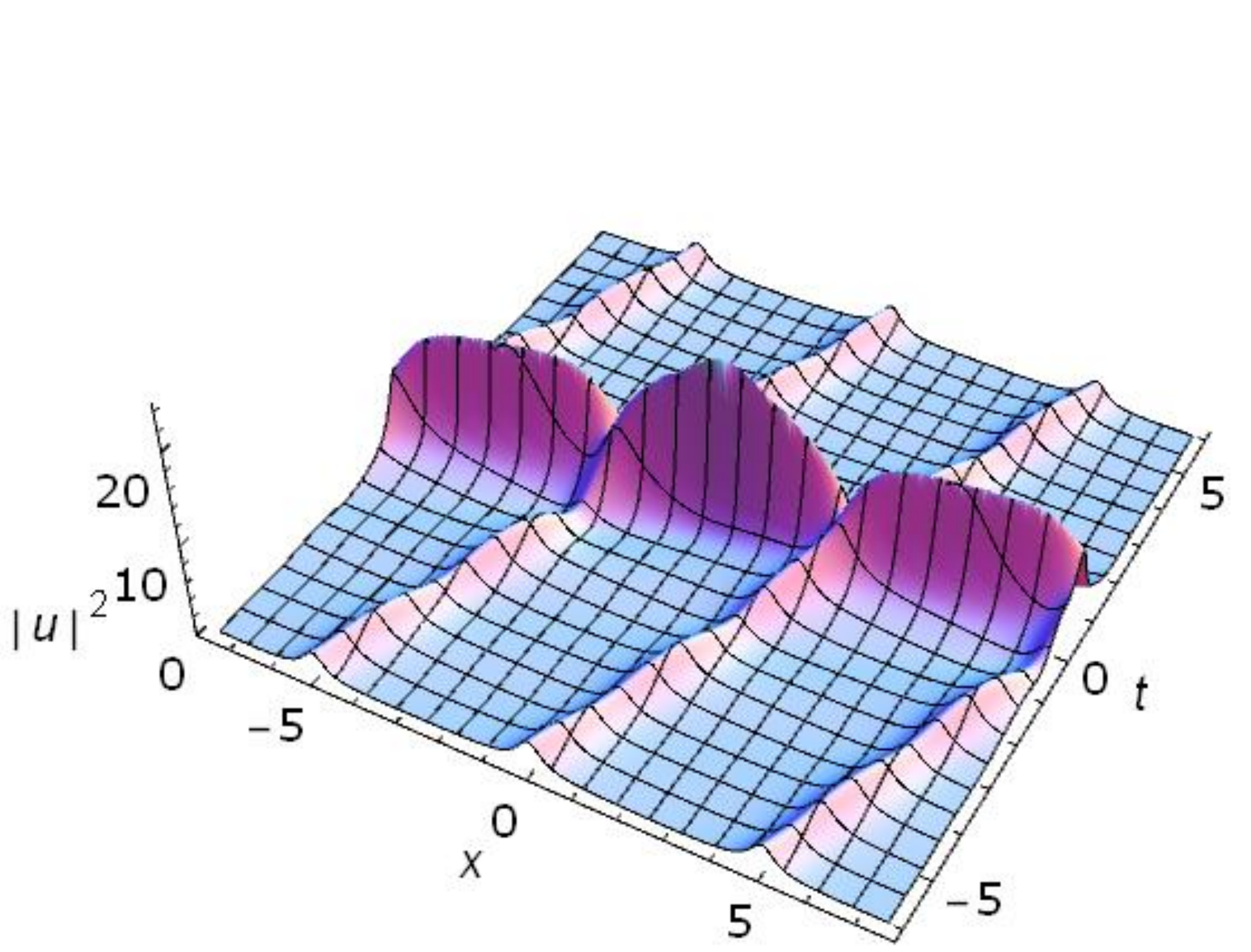}
\end{minipage}%
}%
\subfigure[ ]{
\begin{minipage}[t]{0.45\linewidth}
\centering
\includegraphics[width=1.5in]{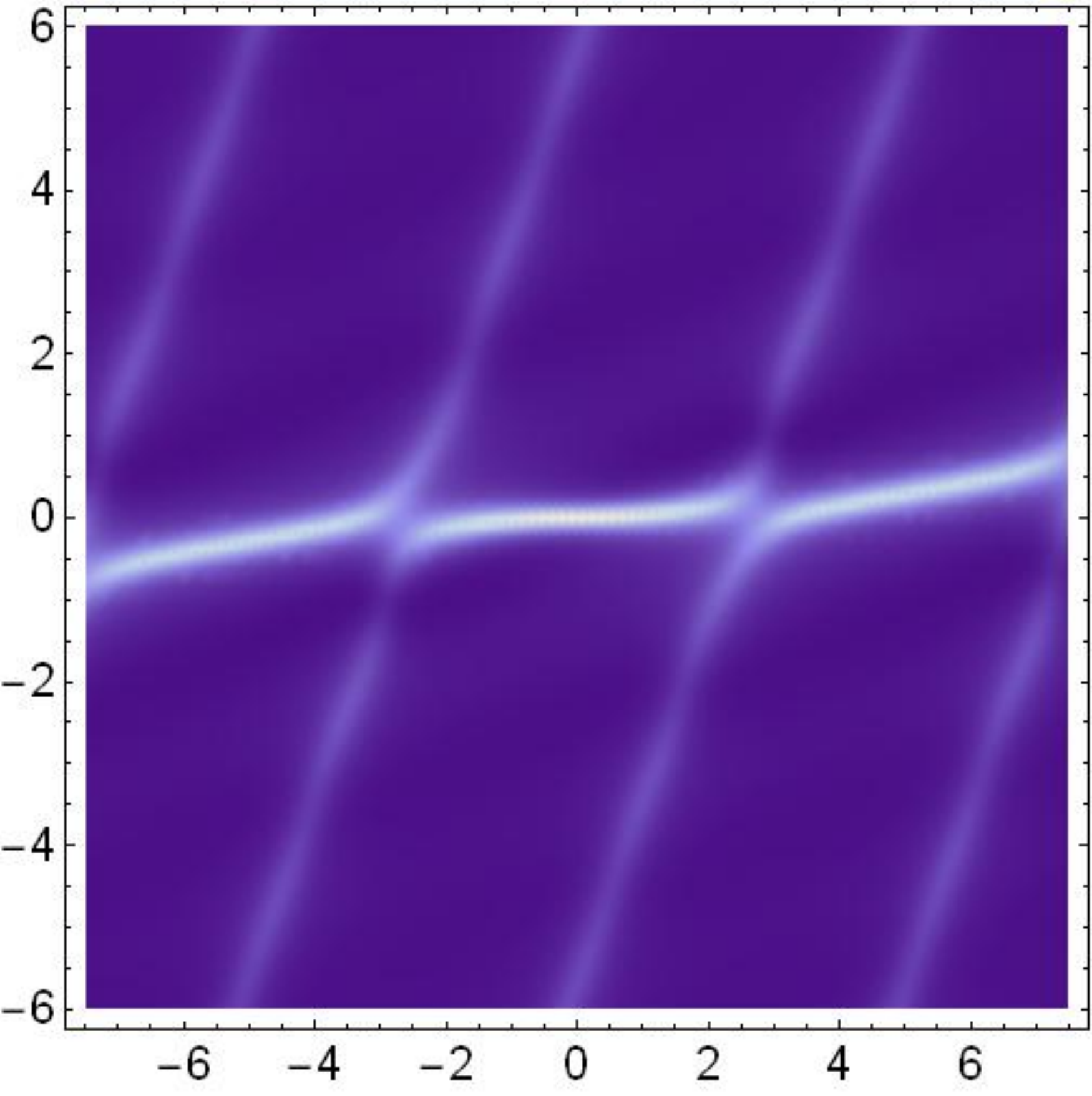}
\end{minipage}%
}
\caption{The mixed solution of the FL equation \eqref{FL2} in Case (2).
(a) $|u|^2$  given from \eqref{2ss-pp} with \eqref{aaa} and
$k_{1}=-1$, $k_{2}=1.5$, $h_{1}=0.5$, $c_{1}=d_{1}=c_{2}=d_{2}=1$.~
(b) Density plot of (a).}
\label{F-7}
\end{figure}

\section{Dynamics of the nonlocal FL equation \eqref{non-equv}}\label{sec-5}

In nonlocal case,  $\mathbf{K}_N$ and  $\mathbf{H}_N$ are complex matrices,
$\phi$ is given through \eqref{phi-pm} as described in Sec.\ref{sec-3-2-2}
and  $\psi$ takes the form \eqref{psi-phi-pmm-non}.
In the following we will mainly investigate 1SS with details,
while for 2SS we only list our formulae with figures as illustrations.
We only consider the case  $\delta=1$.
Besides, note that when $\mathbf{K}_N, \mathbf{H}_N \in  \mathbb{R}_N$,
it is possible for $\psi$ \eqref{psi-phi-pmm-non} to take the same form as the
$\psi$ \eqref{psi-phi-pmm}, and then the FL equation \eqref{FL2}
and nonlocal FL equation \eqref{non-equv} share the corresponding solutions.

\subsection{1SS}\label{sec-5-1}

1SS of the nonlocal FL equation \eqref{non-equv} is given by \eqref{1ss-fg} where
\begin{align*}
& \phi=(c_1 e^{\eta(k_1)}, d_1 e^{\eta(h_1)})^T, \\
& \psi=(c_1k_1 e^{-\eta(k_1)}, -d_1h_1 e^{-\eta(h_1)})^T,
\end{align*}
where $\eta(k)$ is given in \eqref{eta}, $k_1, h_1, c_1, d_1\in \mathbb{C}$.
The explicit formula is
\begin{equation}\label{1ss-nonlocal-FL}
u^{}_{1\mathrm{SS}}=\frac{k_{1}^{2}-h_{1}^{2}}
{k_{1}h_{1}\left[k_{1}\mathrm{e}^{-i(h_{1}^{2}x+\frac{t}{h_{1}^{2}})}
+h_{1}\mathrm{e}^{-i(k_{1}^{2}x+\frac{t}{k_{1}^{2}})}\right]}.
\end{equation}
The corresponding envelop is
\begin{equation}\label{1ss-non-FL-1}
|u_{\mathrm{1SS}}^{}|^2=
\frac{(a_{1}^{2}-b_{1}^{2}-m_{1}^{2}+s_{1}^{2})^{2}+4(a_{1}b_{1}-m_{1}s_{1})^{2}}
{2|k_1|^{3}|h_1|^{3} \, \mathrm{e}^{2W_{1}}\left[\cosh \left(2W_{2}+\ln \frac{|h_1| }{|k_1| }\right)
+ \sin(W_{3}+\omega_{1})\right]},
\end{equation}
where we have taken $k_j=a_j+ib_j$, $h_j=m_j+is_j$, $c_1=d_1=1$, and
\begin{align*}
&W_{1}=(a_{1}b_{1}+m_{1}s_{1})x
-\left(\frac{a_{1}b_{1}}{|k_1|^{4} }+\frac{m_{1}s_{1}}{|h_1|^{4}}\right) t,\\
&W_{2}=(a_{1}b_{1}-m_{1}s_{1})x
-\left(\frac{a_{1}b_{1}}{|k_1|^{4} }-\frac{m_{1}s_{1}}{|h_1|^{4}}\right) t, \\
&W_{3}=(a_{1}^{2}-b_{1}^{2}-m_{1}^{2}+s_{1}^{2})x
+\left(\frac{a_{1}^{2}-b_{1}^{2}}{|k_1|^{4}}
-\frac{m_{1}^{2}-s_{1}^{2}}{|h_1|^{4}}\right) t,\\
& \omega_{1}=\arctan \frac{a_{1}m_{1}+b_{1}s_{1}}{a_{1}s_{1}-b_{1}m_{1}}.
\end{align*}

In order to understand dynamics of \eqref{1ss-non-FL-1} in an analytic way,
let us first investigate when $W_j$ vanish for all $(x,t)$. It can be found that
$W_1\equiv 0$ if
\begin{subequations}
\begin{align}
 (a_1,b_1)=(\pm m_1, \mp s_1),~ \mathrm{or} ~ (a_1,b_1)=(\pm s_1, \mp m_1),\label{W1=0-a}
\end{align}
or
\begin{align}
 a_1b_1=m_1s_1=0,~\mathrm{but}~ |k_1||h_1|\neq 0; \label{W1=0-b}
\end{align}
\end{subequations}
$W_2\equiv 0$ if \eqref{W1=0-b} holds, or
\begin{equation}
 (a_1,b_1)=( m_1,  s_1),~ \mathrm{or} ~ (a_1,b_1)=( s_1,  m_1); \label{W2=0-a}
 \end{equation}
$W_3\equiv 0$ if
\begin{equation}
 (a_1^2,b_1^2)=( m_1^2,  s_1^2). \label{W3=0-a}
 \end{equation}

With these in hand, we may obtain desirable solutions by arranging real parameters $a,b,m,s$.
For example, if we take $a_1=s_1=0$ but $|k_1|\neq |h_1| \neq 0$ so that \eqref{W1=0-b} holds,
we get $W_{1}=W_{2}\equiv 0$ and
\begin{equation}\label{1ss-p-non}
|u|^2=\frac{ (b_1^2+m_1^2)^{2}}{b_{1}^{2}m_{1}^{2}
\left[b_1^2+m_1^2- 2|b_1m_1| \sin \left((b_1^2+m_1^2)(x+\frac{t}{b_1^2 m_1^2})\right)\right]},
\end{equation}
which is a nonsingular periodic wave (by virtue of $|k_1|\neq |h_1|$, i.e. $|k_1|\neq |h_1|$).
This wave is depicted in Fig.\ref{F-8}(a).

When $W_{1}\equiv W_{3}\equiv 0$ but $W_{2}\neq 0$, which can hold by taking, e.g. $(a_1,b_1)=(m_1,-s_1)$,
we get 1SS
\begin{equation}\label{1ss-s-non}
|u|^2=\frac{8a_1^2b_1^2}{(a_1^2+b_1^2)^2
(\cosh 4 a_1b_1 W'_2+ \sin \omega'_1)},
\end{equation}
where
$W'_2= x-\frac{t}{(a_1^2+b_1^2)^{2}}$ and $\omega'_1=\arctan\frac{b_1^2-a_1^2}{2a_1b_1}$,
which is depicted in Fig.\ref{F-8}(b).
\captionsetup[figure]{labelfont={bf},name={Fig.},labelsep=period}
\begin{figure}[h]
\centering
\subfigure[ ]{
\begin{minipage}[t]{0.45\linewidth}
\centering
\includegraphics[width=2.1in]{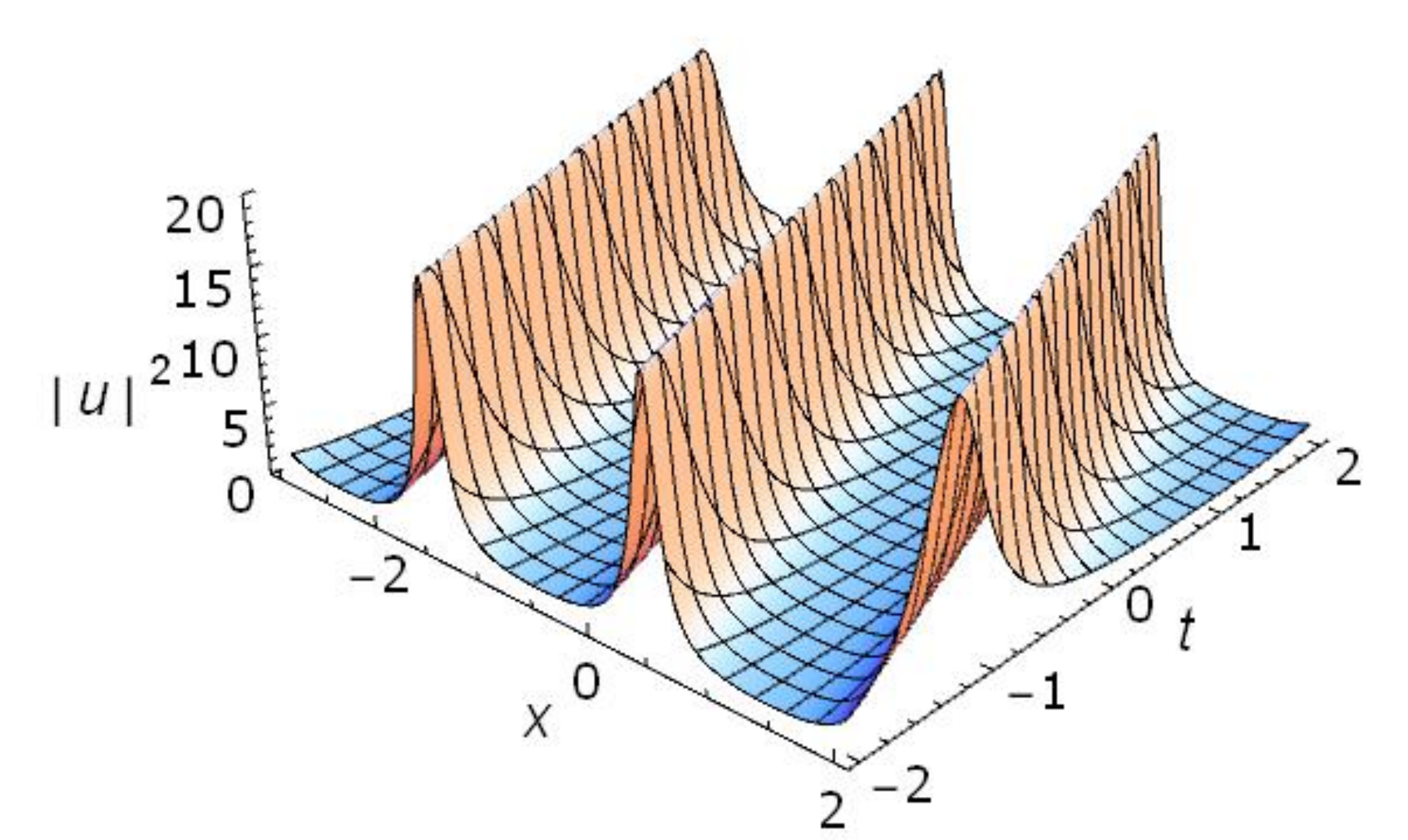}
\end{minipage}%
}%
\subfigure[ ]{
\begin{minipage}[t]{0.45\linewidth}
\centering
\includegraphics[width=2.2in]{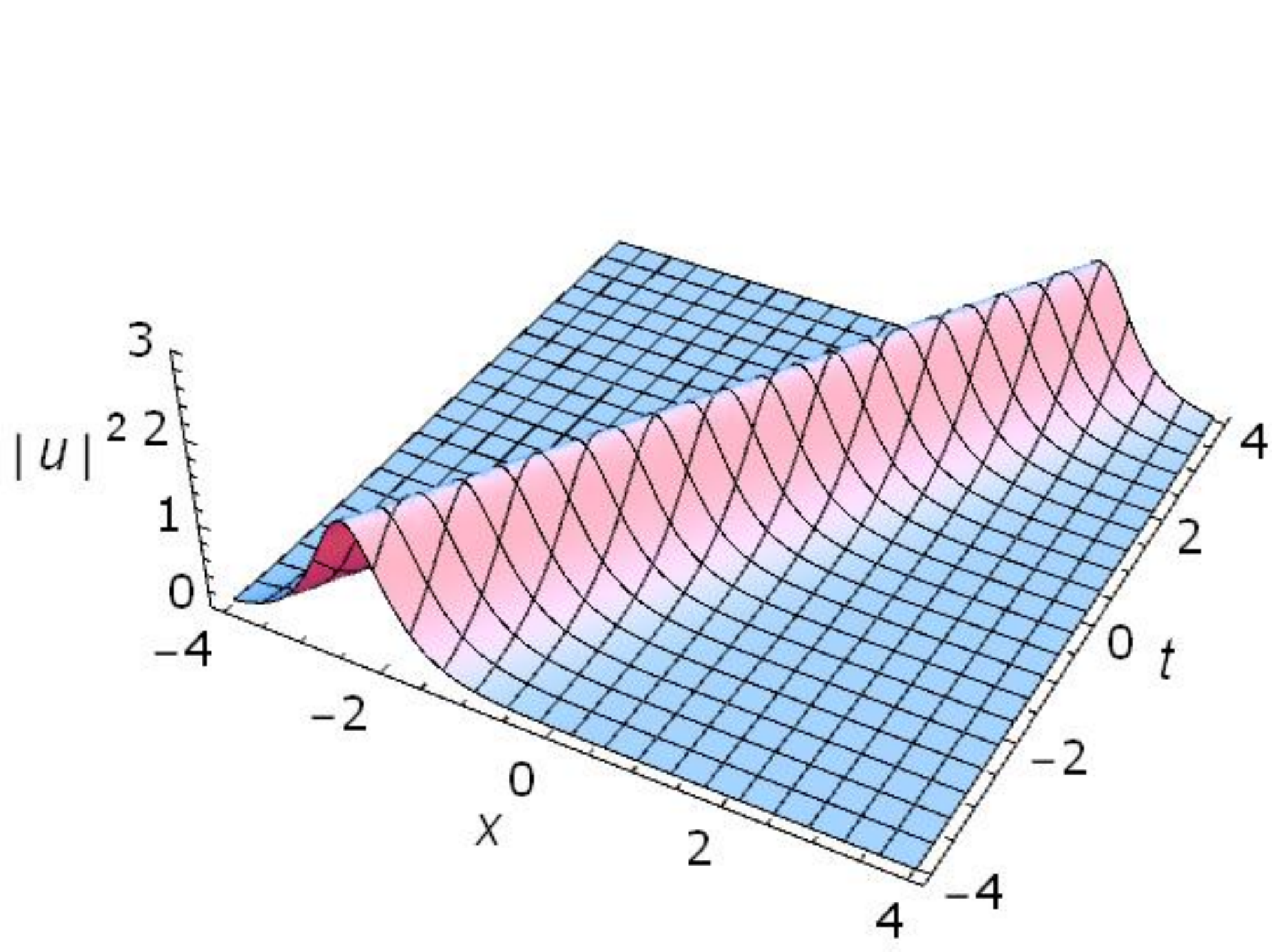}
\end{minipage}%
}%
\caption{Shape and motion of 1SS of the nonlocal FL equation \eqref{non-equv}. ~
(a) Periodic wave given by \eqref{1ss-p-non} with $k_1=i$ and $h_1=1.5$.~
(b) Soliton given by \eqref{1ss-s-non} with $k_1=0.8+0.8i$ and $h_1=0.8-0.8i$.   }
\label{F-8}
\end{figure}

There can have kink-type waves but always with singularities.
Considering the case that there is only one number being zero among $(a_1,b_1,m_1,s_1)$,
e.g. only $m_1=0$, i.e.
\begin{equation}\label{abms}
(a_1,b_1,m_1,s_1)=(a_1,b_1,0,s_1),~ \mathrm{and} ~ a_1b_1s_1\neq 0,
\end{equation}
we have
\begin{subequations}\label{W1-omega}
\begin{align}
& W_1=W_2=a_1b_1\left(x-\frac{t}{|k_1|^4}\right),\\
& W_3=(a_1^2-b_1^2+s_1^2)x+\left(\frac{a_1^2-b_1^2}{|k_1|^4}+\frac{1}{s_1^2}\right)t,~~
\omega_1=\arctan\frac{b_1}{a_1}.
\end{align}
\end{subequations}
It is easy to check that the slopes of lines $W_1=0$ and $W_3=0$ can never be same
in light of \eqref{abms}.
In this case, \eqref{1ss-non-FL-1} turns out to be
\begin{equation}\label{1ss-k-non}
|u_{\mathrm{1SS}}^{}|^2=
\frac{(a_{1}^{2}-b_{1}^{2}+s_{1}^{2})^{2}+4(a_{1}b_{1})^{2}}
{|k_1|^{2}|h_1|^{2} (|h_1|^2 y^2+ |k_1|^2+2 y |k_1||h_1|\sin z)},
\end{equation}
where
\[y=e^{2W1},~ z=W_3+\omega_1,\]
and $W_1, ~\omega_1$ take the forms in \eqref{W1-omega}.
This is a kink-type wave for any given $t$: when $x\to \pm\infty$,
$|u|^2$ goes to zero on one side and
$\frac{(a_{1}^{2}-b_{1}^{2}+s_{1}^{2})^{2}+4(a_{1}b_{1})^{2}}{|k_1|^{4}|h_1|^{2}}$
on the other side, or the other way around, depending on sgn$[a_1b_1]$.
However, there are infinitely many poles appearing at the intersections
\[\left\{\begin{array}{l}
W_1=\frac{1}{2}\ln \frac{|k_1|}{|h_1|},\\
W_3+\omega_1=2j\pi-\frac{\pi}{2},~~ j\in \mathbb{Z},
\end{array}\right.
\]
and all poles are located at the line $W_1=\frac{1}{2}\ln \frac{|k_1|}{|h_1|}$.
Such a solution is illustrated in Fig.\ref{F-9}.

Note that some 1SS of the nonlocal FL equation \eqref{non-equv} have been explored in
\cite{ZhangY-AML} using Darboux transformation.

\captionsetup[figure]{labelfont={bf},name={Fig.},labelsep=period}
\begin{figure}[h]
\centering
\begin{minipage}[t]{0.45\linewidth}
\centering
\includegraphics[width=2.3in]{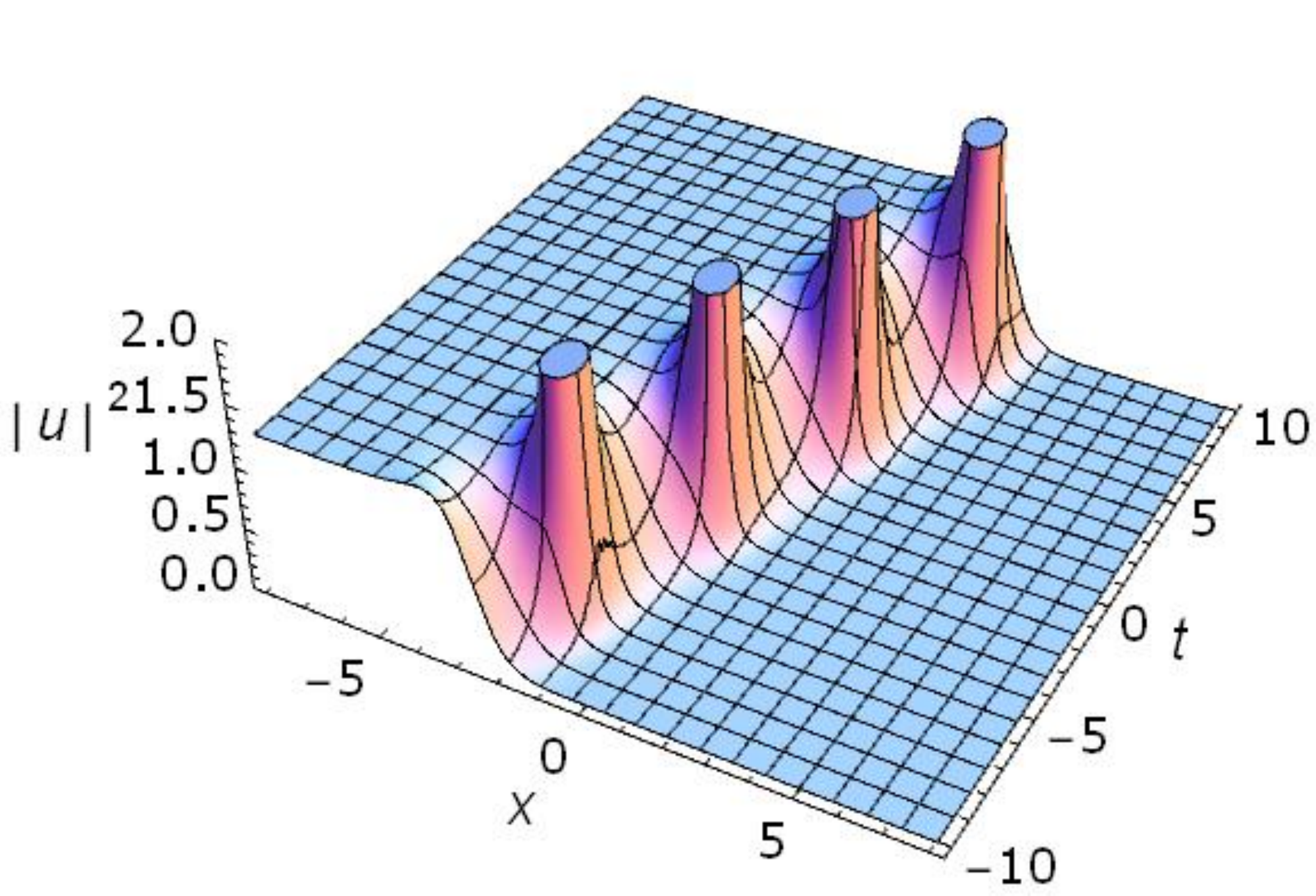}
\end{minipage}%
\caption{ Shape and motion of a kink-type wave of the nonlocal FL equation \eqref{non-equv},
given by \eqref{1ss-k-non}  with $k_1=1+i$ and $h_1=i$.}
\label{F-9}
\end{figure}

\subsection{2SS}\label{sec-5-2}

The above analysis we have made for 1SS is helpful to understand two-soliton interactions.
In the following we only list out illustrations.

2SS is given via \eqref{2ss-pp} where in nonlocal case, when both $\mathbf{K}_N$ and  $\mathbf{H}_N$
are diagonals, we have
\begin{subequations}\label{a}
\begin{align}
& \phi=(c_1e^{\eta(k_1)}, c_2e^{\eta(k_2)}, d_1e^{\eta(h_1)}, d_2e^{\eta(h_2)})^T,\\
& \psi=(k_1c_1e^{-\eta(k_1)}, k_2c_2e^{-\eta(k_2)}, -h_1d_1e^{-\eta(h_1)}, -h_2 d_2e^{-\eta(h_2)})^T;
\end{align}
\end{subequations}
when both $\mathbf{K}_N$ and $\mathbf{H}_N$ are Jordan blocks, we have
\begin{subequations}\label{b}
\begin{align}
& \phi=\left(c_1 e^{\eta(k_1)}, c_1 \partial_{k_1} e^{\eta(k_1)},
d_1 e^{\eta(h_1)}, d_1\partial_{h_1} e^{\eta(h_1)}\right)^T,\\
& \psi\!=\!\left(k_1c_1e^{-\eta(k_1)}, c_1 e^{-\eta(k_1)}\!+\!c_1 k_1 \partial_{k_1} e^{-\eta(k_1)},
 -h_1d_1e^{-\eta(h_1)},- d_1e^{-\eta(h_1)}\!-\!h_1 d_1\partial_{h_1} e^{-\eta(h_1)}\right)^T;
\end{align}
\end{subequations}
and when  $\mathbf{K}_N$ is diagonal and $\mathbf{H}_N$ is a Jordan block, we have
\begin{subequations}\label{c}
\begin{align}
& \phi=\left(c_1 e^{\eta(k_1)}, c_2 e^{\eta(k_2)},
d_1 e^{\eta(h_1)}, d_1\partial_{h_1} e^{\eta(h_1)}\right)^T,\\
& \psi=\left(k_1c_1e^{-\eta(k_1)}, k_2 c_2 e^{-\eta(k_2)},
 -h_1d_1e^{-\eta(h_1)},- d_1e^{-\eta(h_1)}-h_1 d_1\partial_{h_1} e^{-\eta(h_1)}\right)^T.
\end{align}
\end{subequations}
Here  $k_j, h_j, c_j, d_j\in \mathbb{C}$,
and we have taken the LTTMs  $\mathcal{A}_N=\mathcal{B}_N=\mathbf{I}_N$.

Two-soliton interactions are illustrated in Fig.\ref{F-10} and  Fig.\ref{F-11},
from which we can see that, compared with classical case, the interactions of 2SS
are more complicated in nonlocal case (see also the nonlocal Gross-Pitaevskii equation
\cite{Liu-ROMP-2020}).

We also remark that, in principle, the $N$-soliton solutions obtained in this paper for the nonlocal FL equation
coincide with those obtained from Darboux transformation \cite{ZhangY-AML} with zero as a seed solution.
However, as we have shown in this subsection, the independency of $\mathbf{K}_N$ and $\mathbf{H}_N$
allows more variety in multisoliton and multiple-pole solutions.

\captionsetup[figure]{labelfont={bf},name={Fig.},labelsep=period}
\begin{figure}[h]
\centering
\subfigure[ ]{
\begin{minipage}[t]{0.40\linewidth}
\centering
\includegraphics[width=2.1in]{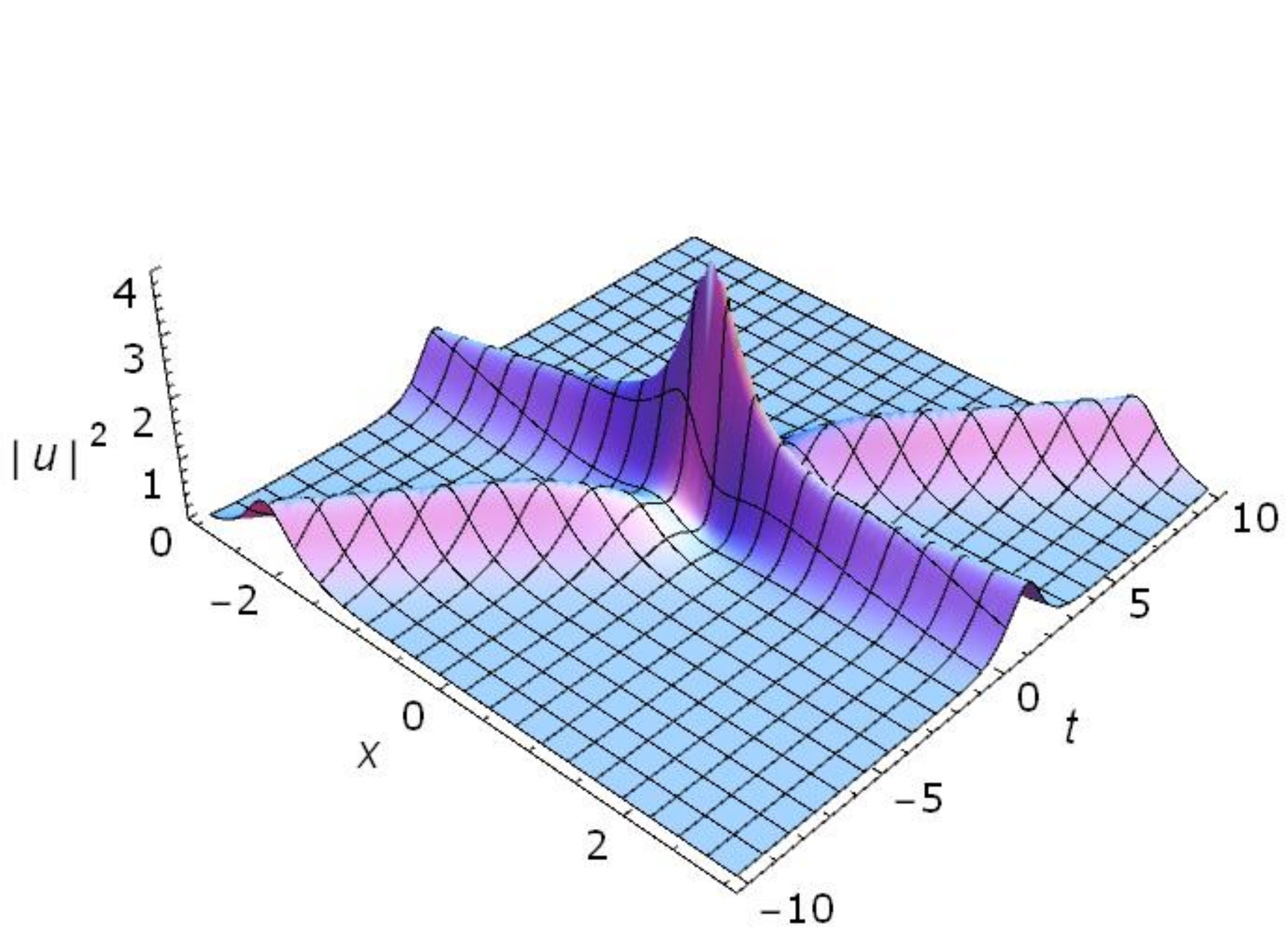}
\end{minipage}%
}%
\subfigure[ ]{
\begin{minipage}[t]{0.40\linewidth}
\centering
\includegraphics[width=2.1in]{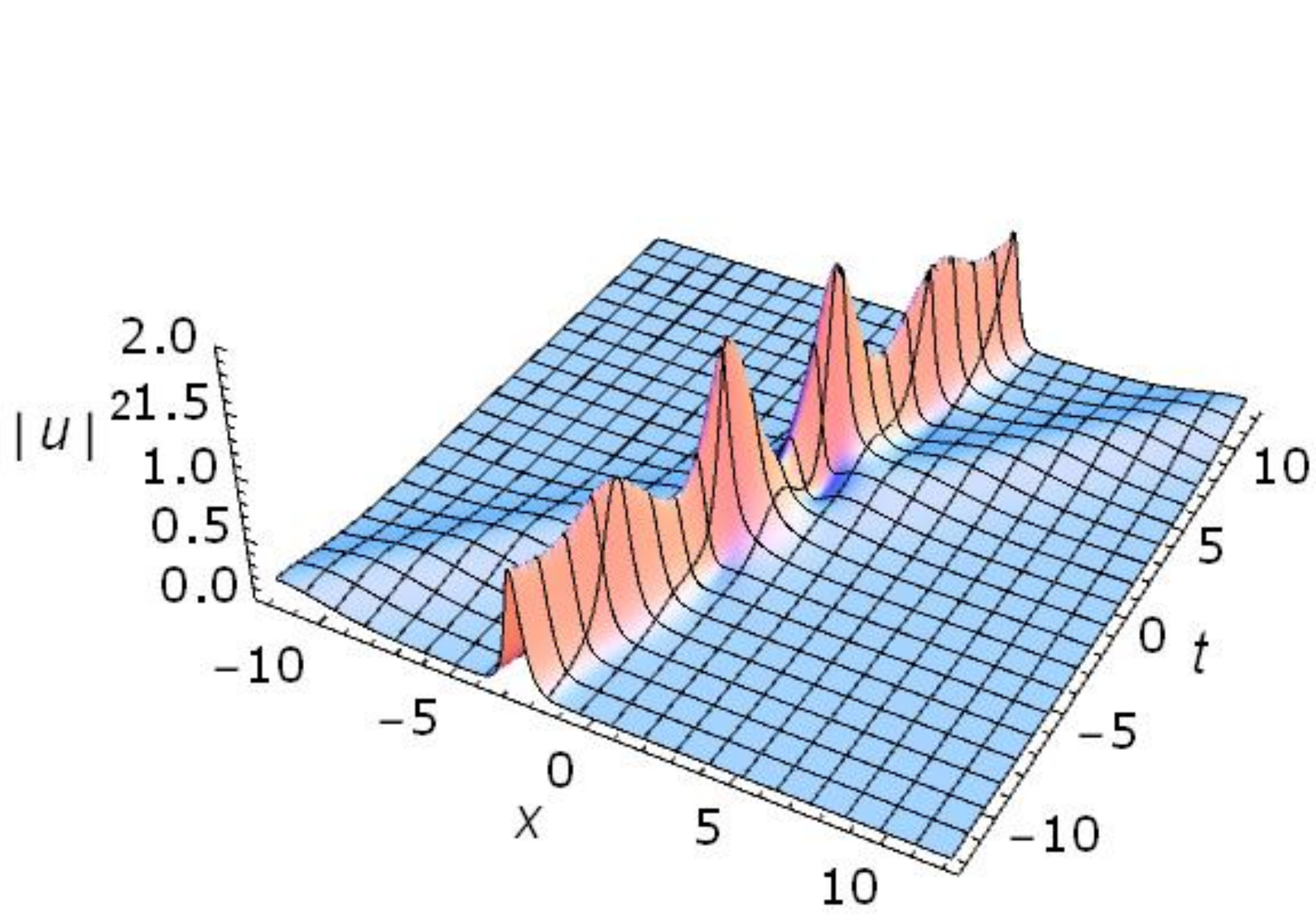}
\end{minipage}%
}\\
\subfigure[ ]{
\begin{minipage}[t]{0.40\linewidth}
\centering
\includegraphics[width=2.1in]{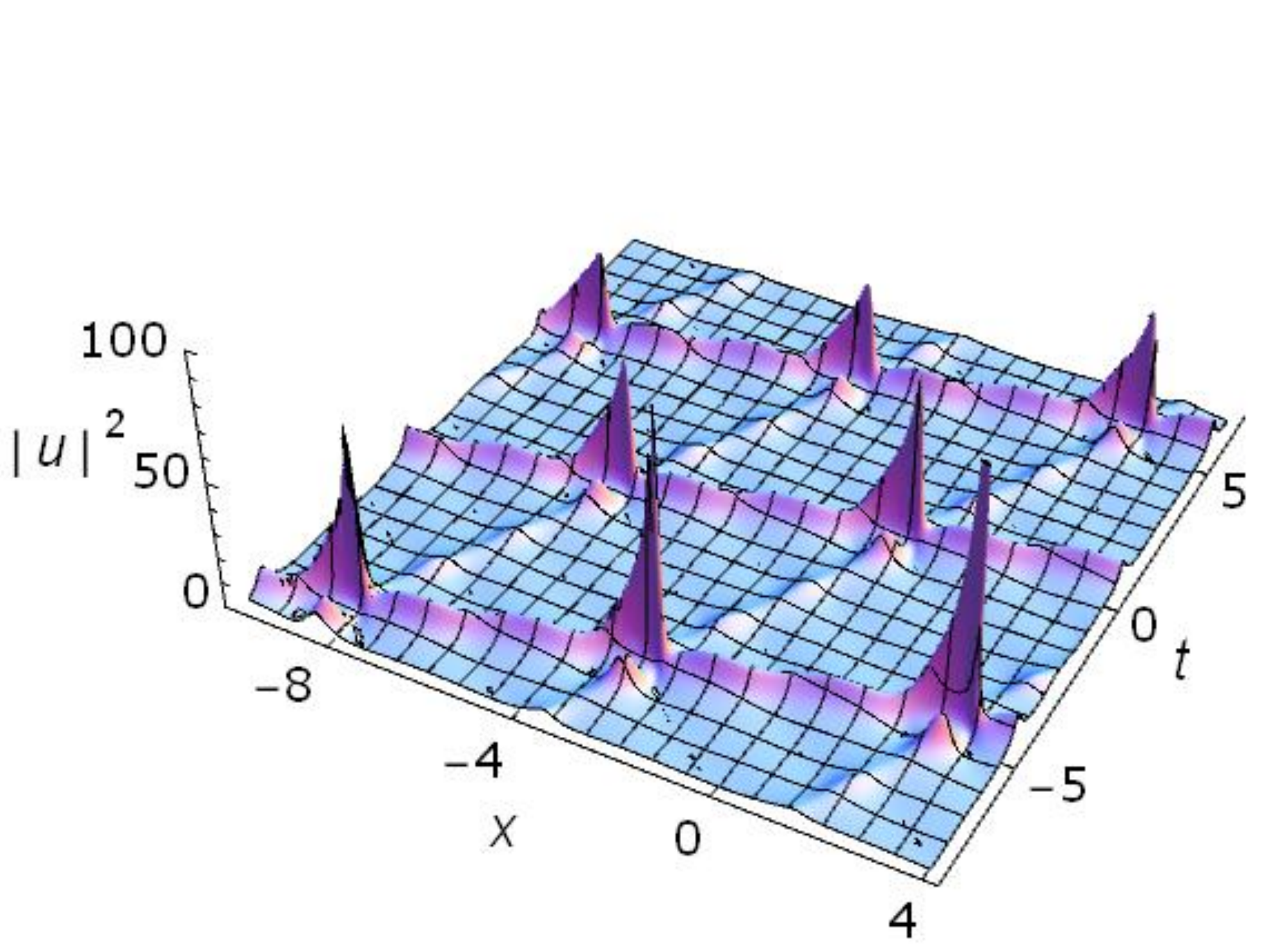}
\end{minipage}%
}%
\subfigure[ ]{
\begin{minipage}[t]{0.40\linewidth}
\centering
\includegraphics[width=2.1in]{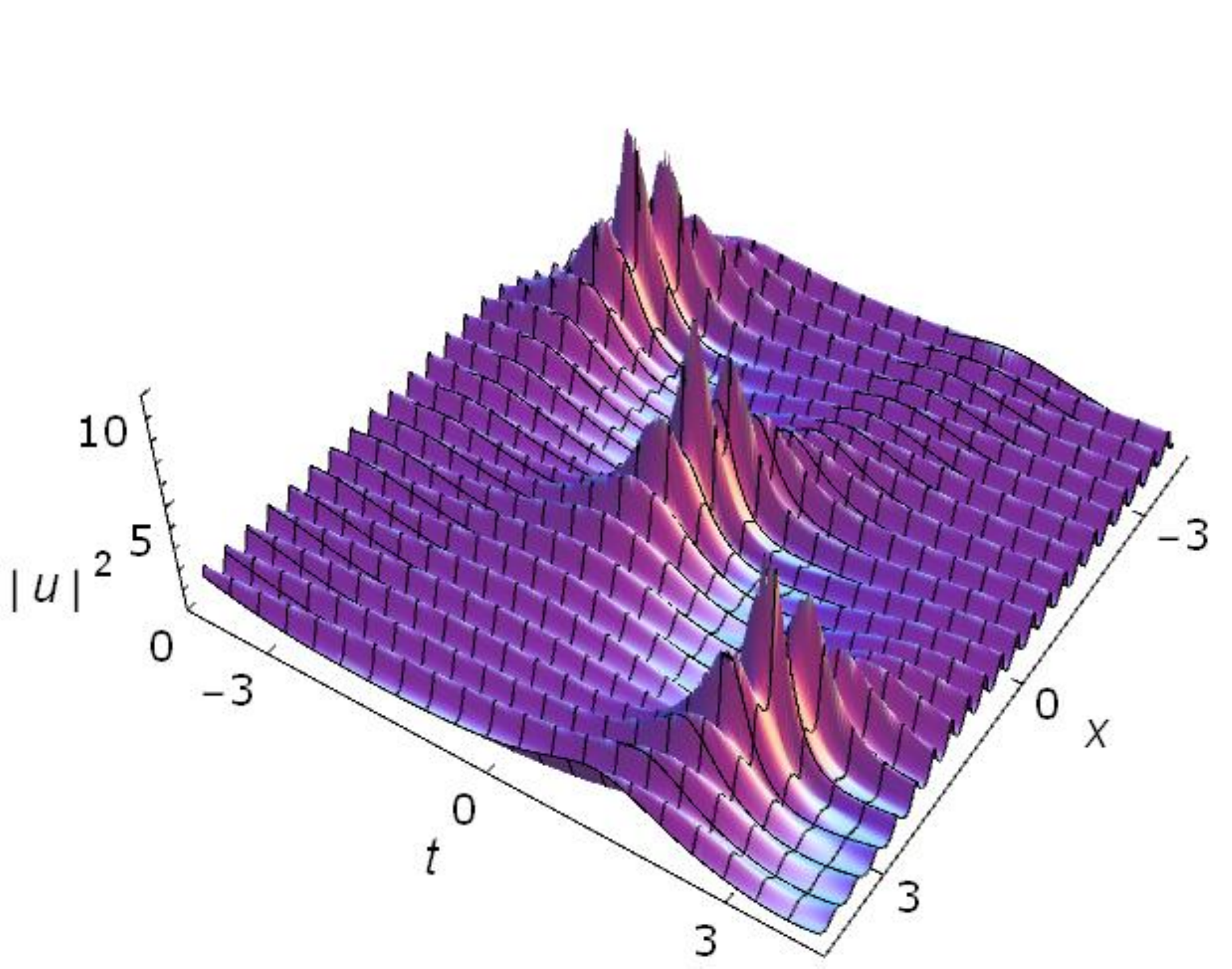}
\end{minipage}%
}
\caption{Shape and motion of 2SS of the nonlocal FL equation \eqref{non-equv}.~
(a) Envelope $|u|^2$ of \eqref{2ss-pp} with \eqref{a} in which $k_1=1+i$, $h_1=1-i$,  $k_2=0.8+0.4i$,
$h_2=0.8-0.4i$ and $c_1=c_2=d_1=d_2=1$. ~
(b) Envelope $|u|^2$ of \eqref{2ss-pp} with \eqref{a} in which $k_1=1+i$, $h_1=1-i$,  $k_2=1+0.2i$,
$h_2=1-0.2i$ and $c_1=c_2=d_1=d_2=1$. ~
(c) Envelope $|u|^2$ of \eqref{2ss-pp} with \eqref{a} in which $k_1=0.6i$, $h_1=-1.5$,  $k_2=0.8i$,
$h_2=-1$ and $c_1=c_2=d_1=d_2=1$.
~
(d) Envelope $|u|^2$ of \eqref{2ss-pp} with \eqref{a} in which $k_1=0.8+0.6i$, $h_1=0.8-0.6i$,  $k_2=i$,
$h_2=-4$ and $c_1=c_2=d_1=d_2=1$.
}
\label{F-10}
\end{figure}

\captionsetup[figure]{labelfont={bf},name={Fig.},labelsep=period}
\begin{figure}[h]
\centering
\subfigure[ ]{
\begin{minipage}[t]{0.40\linewidth}
\centering
\includegraphics[width=2.1in]{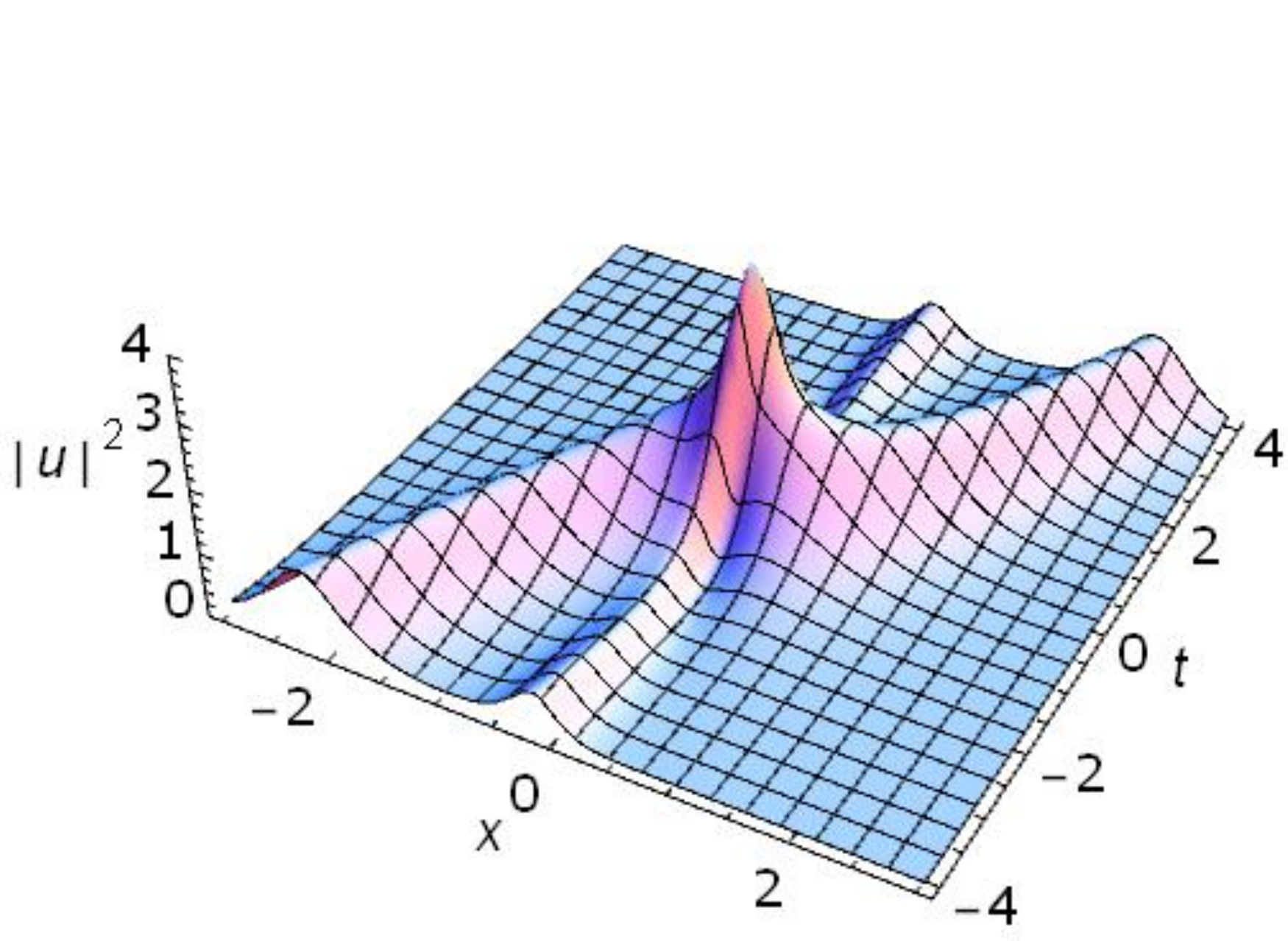}
\end{minipage}%
}%
\subfigure[ ]{
\begin{minipage}[t]{0.40\linewidth}
\centering
\includegraphics[width=2.1in]{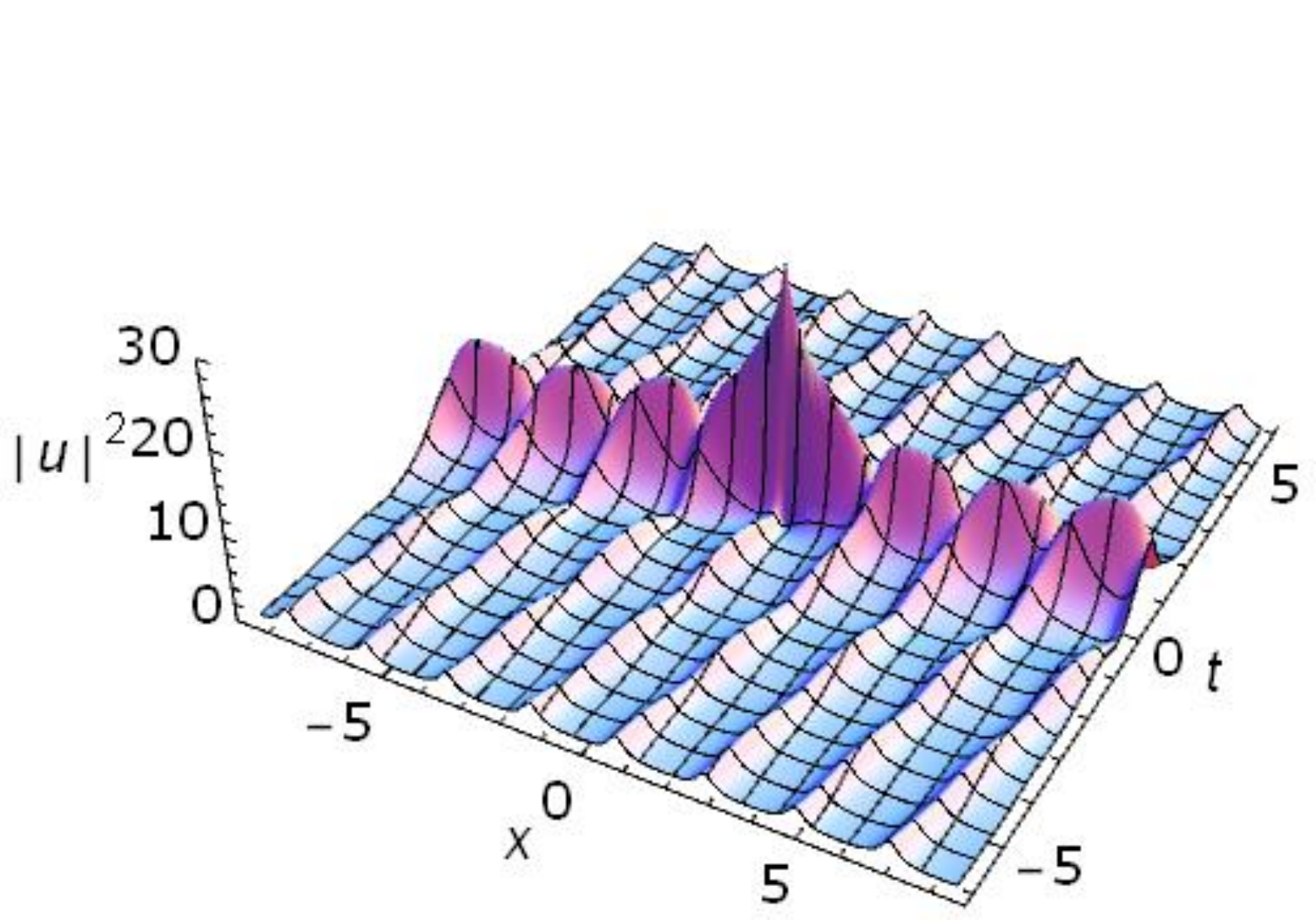}
\end{minipage}%
}%
\caption{Shape and motion of 2SS of the nonlocal FL equation \eqref{non-equv}.~
(a) Envelope $|u|^2$ of \eqref{2ss-pp} with \eqref{b} in which $k_1=1+i$, $h_1=1-i$ and $c_1=d_1=1$. ~
(b) Envelope $|u|^2$ of \eqref{2ss-pp} with \eqref{c} in which $k_1=1$, $k_2=2$, $h_1=-0.5$
and $c_1=c_2=d_1 =d_2=1$. }
\label{F-11}
\end{figure}

\section{Concluding remakks}\label{sec-6}

We have derived solutions for the classical FL equation \eqref{FL2} and nonlocal FL equation \eqref{non-equv}
from bilinear approach.
We introduced new double Wronskian expressions \eqref{wronskian-1} that are different from
those of the AKNS hierarchy, the KN equation and the Chen-Lee-Liu equation
(cf.\cite{ChenDLZ-SAPM-2018,KakeiSS-JPSJ-1995,ZhaiC-PLA-2008}).
The assumption \eqref{wron-cond-x} with a general $A$ for Wronskian entries
and the reduction technique enable us to have a full profile
for the solutions of the FL equations.
One- and two-soliton solutions were illustrated based on analysis in detail.
It is notable that the FL equation \eqref{FL2} also allows solutions related to real discrete eigenvalues.
They exhibit (multi-)periodic behavior for distinct eigenvalues and algebraic decayed solitary waves
for those eigenvalues with multiplicity two. The later case was not
found before in the analytic approaches (e.g. \cite{Lenells-F-Non-2009,Ai-Xu-AML-2019,Zhao-F-JNMP-2021})
that are based on analyzing analytic domains of wave functions.

Before Fokas and Lenells, the FL equation \eqref{FL2} was already  explored around 40 years ago
(cf.\cite{GIK-1980,NCQV-1983}),
as it could generate solutions to the massive Thirring model arose in relativistic quantum field theory.
In Appendix \ref{app-0} we will recall the links between the FL equation \eqref{FL2} and the
massive Thirring model. As a result, all the solutions we obtained for the FL equation \eqref{FL2} can
generate solutions to the massive Thirring model (see Theorem \ref{Theorem-app} in Appendix \ref{app-0}).

In this paper solutions are presented in terms of (double) Wronskians.
Solutions in this form and similar forms are usually obtained via Darboux transformations
or bilinear method.
Compared with other popular forms of $N$-soliton solutions (e.g. Hirota's form using polynomials of
exponential functions given in Appendix \ref{app-A}
and dressed Cauchy matrix form obtained in \cite{Lenells-JNS-2010,Wang-XL-AML-2020,Matsuno-JPA-2012a}),
by virtue of their special  structure,
Wronskian solutions have advantage in presenting limit solutions, i.e. multiple-pole solutions,
In bilinear approach, such solutions are alternatively obtained by taking, for example,
$A$ in \eqref{wron-cond-x} to be composed of Jordan blocks.
One may refer to \cite{MatS-Book-1991} (see page 22 of the book) for the limit procedure
and to \cite{ZDJ-arxiv} for the connections between the LTTMs and limit solutions.
For the multiple-pole solutions in terms of dressed Cauchy matrix form, one may refer to
\cite{SY-2003} for the Riemann-Hilbert method and to \cite{ZhangZ-SAPM-2013}
for the Cauchy matrix approach.
In addition, by employing double Wronskians,  coefficient matrix $A$  and bilinear approach,
we have also illustrated an effective reduction technique to obtain solutions for the reduced equations.
In this technique, looking for vectors $\phi$ and $\psi$ such that $u$ and $v$ satisfy desired constraints
when they are expressed in terms of double Wronskians  $f,g,h$ and $s$, is boiled down to solving the
algebraic equation \eqref{TA} (and \eqref{TA2} for nonlocal case).
This enables us to approach to new solutions that might be missed before,
for example, the case (2) in Table \ref{tab-1} that corresponds to real eigenvalues and the mixed case related to
\eqref{TA-mix}.

Finally, as remarks we list several possible interesting questions arising from the current paper.
The first is to reinvestigate the coupled KN equation \eqref{KN2} and the DNLS equation \eqref{DNLS}
using the double Wronskian structure given in \eqref{wronskian-1}. Since the couple system \eqref{CFL}
is the potential form of the KN$(-1)$ system \eqref{KN-1},
it is possible to get solutions for the  coupled KN equation \eqref{KN2} from $q=(g/f)_x, ~r=(h/s)_x$ after
redefining $\phi, \psi$ with the dispersion relation of \eqref{KN2}.
Similar treatment was done in \cite{Lenells-JNS-2010}.
The second is to investigate the FL equations with nonzero backgrounds from bilinear approach.
Note that the assumption \eqref{wron-cond-x} corresponds to $q=r=0$ in the Lax pair.
In addition, it would be interesting to reinvestigate possible analytic domains of wave functions
for the Cauchy problem where $|u(x,t=0)|^2$ is algebraic decayed as $|x|\to \infty$.
This was not touched in \cite{Lenells-F-Non-2009,Ai-Xu-AML-2019,Zhao-F-JNMP-2021}.
Finally, solving the nonlocal FL equation from an analytic approach is also an interesting
problem. Note that the analysis in Sec.\ref{sec-5-1} implies
soliton solutions arise from the eigenvalue distribution  $\mathbf{H}_N=\mathbf{K}_N^*$
or $\mathbf{H}_N=-\mathbf{K}_N^*$.
However, interactions of 2SS exhibit more varieties in nonlocal case.

\subsection*{Acknowledgments}

This project is  supported by the NSF of China (Nos.11875040 and 11631007).

\begin{appendix}

\section{The massive Thirring model, KN spectral problem and the FL equation}\label{app-0}

The pKN$(-1)$ \eqref{CFL} is called the Mikhailov model by Gerdjikov and his collaborators \cite{GIK-1980,GI-1982}.
Its reduction gives rise to the FL equation \eqref{FL2}.
The latter provides solutions to the
massive Thirring model,
which describes the theory of a massive fermion field coupled to a two-component vector field
interacting with itself via a Fermi interaction \cite{T-1958,W-1967}.

The two-dimensional massive Thirring model is \cite{T-1958,W-1967,M-1976}
\begin{equation}
(-i\partial_{\mu}\gamma^{\mu}+m)\mathcal{X}
+g\gamma^{\mu}\mathcal{X} (\overline{\mathcal{X}}\gamma_{\mu}\mathcal{X})=0,
\end{equation}
where $m$ stands for mass, $g$ is a parameter,
$\mathcal{X}=(\mathcal{X}_1,\mathcal{X}_2)^T$,
$\gamma_0=\Bigl(\begin{array}{cc}0 & 1 \\ 1 & 0\end{array}\Bigr)$,
$\gamma_1=\Bigl(\begin{array}{cc}0 & -1 \\ 1 & 0\end{array}\Bigr)$,
$\overline{\mathcal{X}}=\mathcal{X}^{\dag}\gamma^{0}=(\mathcal{X}^*_1,\mathcal{X}^*_2)\gamma^{0}$,
$\gamma_{\mu}=(\gamma^{\mu})^{-1}$ and the Einstein summation convention is used.
Denoting $\partial_\mu=\partial_{x_\mu}$,
the above equation is written as (with $m=2, g=1$)
\begin{subequations}\label{MTM}
\begin{align}
& -i(\partial_{x_0}+\partial_{x_1})\mathcal{X}_1+2\mathcal{X}_2+2|\mathcal{X}_2|^2\mathcal{X}_1=0,\\
& -i(\partial_{x_0}-\partial_{x_1})\mathcal{X}_2+2\mathcal{X}_1+2|\mathcal{X}_1|^2\mathcal{X}_2=0.
\end{align}
\end{subequations}
It is Mikhailov \cite{M-1976} who first gave a Lax pair of the massive Thirring model, for \eqref{MTM} which reads
\cite{M-1976,KM-1977}
\begin{subequations}\label{MTM-Lax}
\begin{align}
& \Phi_{x_0}=M_0\Phi,~~
M_0=\frac{i}{2}\left(\begin{array}{cc}
         -|\mathcal{X}_1|^2+|\mathcal{X}_2|^2-\lambda^2+\lambda^{-2}
         & 2\lambda\mathcal{X}_2^*-2\lambda^{-1}\mathcal{X}^*_1\\
         2\lambda\mathcal{X}_2-2\lambda^{-1}\mathcal{X}_1
         & |\mathcal{X}_1|^2-|\mathcal{X}_2|^2+\lambda^2-\lambda^{-2}
         \end{array}\right),\\
& \Phi_{x_1}=M_1\Phi,~~
M_1=\frac{i}{2}\left(\begin{array}{cc}
         |\mathcal{X}_1|^2+|\mathcal{X}_2|^2-\lambda^2-\lambda^{-2}
         & 2\lambda\mathcal{X}_2^*+2\lambda^{-1}\mathcal{X}^*_1\\
         2\lambda\mathcal{X}_2+2\lambda^{-1}\mathcal{X}_1
         & -|\mathcal{X}_1|^2-|\mathcal{X}_2|^2+\lambda^2+\lambda^{-2}
         \end{array}\right),
\end{align}
\end{subequations}
where $\lambda$ is a spectral parameter.
In light-cone coordinates $(x,t)=(x_0+x_1, x_0-x_1)$, the equation \eqref{MTM} and its Lax pair are written as
\begin{subequations}\label{MTM-1}
\begin{align}
& \mathcal{X}_{1,x}+i \mathcal{X}_2+i|\mathcal{X}_2|^2\mathcal{X}_1=0, \label{MTM-1a}\\
& \mathcal{X}_{2,t}+i\mathcal{X}_1+i|\mathcal{X}_1|^2\mathcal{X}_2=0, \label{MTM-1b}
\end{align}
\end{subequations}
and
\begin{subequations}\label{MTM-1-Lax}
\begin{align}
& \Phi_{x}=M'\Phi,~~
M'=\frac{i}{2}\left(\begin{array}{cc}
         -\lambda^2+|\mathcal{X}_2|^2 & 2\lambda\mathcal{X}_2^*\\
         2\lambda \mathcal{X}_2     & \lambda^2 -|\mathcal{X}_2|^2
         \end{array}\right),\label{GI-sp}\\
& \Phi_{t}=N'\Phi,~~
N'=\frac{i}{2}\left(\begin{array}{cc}
         |\mathcal{X}_1|^2-\lambda^{-2} &  2\lambda^{-1}\mathcal{X}^*_1\\
         2\lambda^{-1}\mathcal{X}_1      & -|\mathcal{X}_1|^2+\lambda^{-2}
         \end{array}\right),
\end{align}
\end{subequations}
where \eqref{GI-sp} is known as the spectral problem of the
derivative Schr\"odinger equations of Chen-Lee-Liu's version \cite{WS-1983}.

Introducing
\[\Phi=\left(\begin{array}{cc}
         e^{\frac{i}{2}\beta}  &  0\\
         0      & e^{-\frac{i}{2}\beta}
         \end{array}\right)\tilde{\Phi},~~
q= \mathcal{X}^*_2 e^{-i\beta},~~ \beta=\int^{x}_{-\infty}|\mathcal{X}_2(y)|^2\mathrm{d}y,
\]
and noting that $\partial_x |\mathcal{X}_1|^2=-\partial_t |\mathcal{X}_2|^2$,
one can prove that the Lax pair \eqref{MTM-1-Lax} is gauge equivalent to \cite{KN-1977}
\begin{subequations}\label{MTM-2-Lax}
\begin{align}
& \tilde{\Phi}_{x}=M\tilde{\Phi},~~
M=\left(\begin{array}{cc}
         -\frac{i}{2} \lambda^2 &  i\lambda q \\
         i\lambda q^*    &\frac{i}{2} \lambda^2
         \end{array}\right),\label{KN-spr}\\
& \tilde{\Phi}_{t}=N \tilde{\Phi},~~
N= \left(\begin{array}{cc}
         i(|\mathcal{X}_1|^2-\frac{\lambda^{-2}}{2}) &  i\lambda^{-1}\mathcal{X}^*_1e^{-i\beta}\\
         i\lambda^{-1}\mathcal{X}_1e^{i\beta}      & -i(|\mathcal{X}_1|^2-\frac{\lambda^{-2}}{2})
         \end{array}\right),
\end{align}
\end{subequations}
where Eq.\eqref{KN-spr} coincides with the KN spectral problem \eqref{KN-sp} with $r=q^*$
and $\lambda \rightarrow i\lambda$.

The FL equation \eqref{FL2} (with $\delta=1$) is alternatively written as (all see Eq.(4.17) in \cite{GIK-1980})
\begin{equation}\label{FL3}
q_t+u-2i|u|^2 q=0,~~ u_x=q.
\end{equation}
By the transformation (cf.\cite{KN-1977,GIK-1980})
\begin{equation}\label{tran-M-MTM}
 \mathcal{X}_2 = q^* e^{-i\beta},~~ \mathcal{X}_1=-i u^*e^{-i\beta},~~
 \beta=\int^{x}_{-\infty}|q|^2\mathrm{d}y,
 ~~q=u_x,
\end{equation}
and noticing that $\partial_t|q|^2=-\partial_x|u|^2$,
the complex conjugate of Eq.\eqref{FL3} gives rise to Eq.\eqref{MTM-1a},
and Eq.\eqref{MTM-1b} holds automatically in light of \eqref{tran-M-MTM}.

With regard to the solutions between the FL equation and the massive Thirring model, making use of
relations Eq.\eqref{u-enve} and $\partial_t|q|^2=-\partial_x|u|^2$,  we have the following.
\begin{theorem}\label{Theorem-app}
If $u(x,t)$ is a solution to the FL equation \eqref{FL2} with $\delta =1$, then Eq.\eqref{tran-M-MTM} provides
solutions to the massive Thirring model \eqref{MTM-1} in light-cone coordinates.
In terms of $f$ and $g$, they are
\begin{equation}
\mathcal{X}_1=-i\frac{g^*}{f},~~ \mathcal{X}_2=\Bigl(\frac{g^*}{f^*}\Bigr)_x\frac{f^*}{f}.
\end{equation}

\end{theorem}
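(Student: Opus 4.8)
The plan is to observe that the harder half of the statement, namely that \eqref{tran-M-MTM} carries a solution of the FL equation to a solution of the massive Thirring model, has effectively been settled already in the paragraph preceding the theorem: the complex conjugate of \eqref{FL3} reproduces \eqref{MTM-1a}, and \eqref{MTM-1b} holds identically once \eqref{tran-M-MTM} is imposed and one uses $\partial_t|q|^2=-\partial_x|u|^2$. Hence the only remaining content is to rewrite $\mathcal{X}_1$ and $\mathcal{X}_2$ in terms of $f$ and $g$. Since $u=g/f$ by \eqref{tran} and \eqref{DDW}, I immediately have $u^*=g^*/f^*$ and $q^*=u^*_x=(g^*/f^*)_x$, so the whole problem reduces to expressing the real phase $\beta=\int_{-\infty}^x|q|^2\,\mathrm{d}y$, equivalently the factor $e^{-i\beta}$, through the tau-function $f$.

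Next I would compute $\beta_t$. Differentiating the integral in $t$ and interchanging with $\partial_y$ gives, via $\partial_t|q|^2=-\partial_x|u|^2$, the identity $\beta_t=-[|u|^2]^{y=x}_{y=-\infty}=-|u|^2$, where the boundary term at $y=-\infty$ is discarded under the assumption that $u$ decays there. Invoking \eqref{u-enve} with $\delta=1$, that is $|u|^2=i\bigl(\ln(f/f^*)\bigr)_t$, I obtain $\beta_t=i\bigl(\ln(f^*/f)\bigr)_t$, and integrating in $t$ yields $\beta=i\ln(f^*/f)+C(x)$ for some real function $C(x)$. Fixing $C(x)\equiv 0$ then gives the key identity $e^{-i\beta}=f^*/f$, whose unit modulus is automatic because $|f^*|=|f|$.

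Substituting this into \eqref{tran-M-MTM} finishes the proof:
\[
\mathcal{X}_1=-iu^*e^{-i\beta}=-i\frac{g^*}{f^*}\cdot\frac{f^*}{f}=-i\frac{g^*}{f},\qquad
\mathcal{X}_2=q^*e^{-i\beta}=\Bigl(\frac{g^*}{f^*}\Bigr)_x\frac{f^*}{f},
\]
which are exactly the asserted formulas. I expect the one genuinely delicate point to be the justification of $C(x)\equiv 0$: the integration constant must be pinned down by the $x\to-\infty$ boundary data and checked against the defining relation $\beta_x=|q|^2$, i.e.\ against $i\bigl(\ln(f^*/f)\bigr)_x=|u_x|^2$, which is an identity one should confirm from the bilinear equations \eqref{bil-FL} rather than assume. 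For the ordinary soliton solutions this decay is clear, but for the periodic and algebraically decaying solutions constructed from Case~(2) the asymptotics as $|x|\to\infty$ differ, so the consistency of the phase normalization there is the step that warrants the most care.
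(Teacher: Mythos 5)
Your proposal is correct and follows essentially the same route as the paper, which proves the theorem only by pointing to the preceding gauge-transformation argument together with the two relations \eqref{u-enve} and $\partial_t|q|^2=-\partial_x|u|^2$, i.e.\ exactly the computation you carry out to obtain $e^{-i\beta}=f^*/f$ and hence $\mathcal{X}_1=-ig^*/f$, $\mathcal{X}_2=(g^*/f^*)_x f^*/f$. The phase-normalization point you flag (that the $x$-dependent integration ``constant'' vanishes, which requires $i\bigl(\ln(f^*/f)\bigr)_x=|q|^2$ and the boundary data at $x\to-\infty$) is left entirely implicit in the paper, so your treatment is, if anything, the more careful one.
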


\section{$N$-soliton solution in Hirota's form}\label{app-A}

Employing the standard procedure of Hirota's method, one can derive 1-,2-,3-soliton solutions
for \eqref{bilinear-form}, which obey the following general form
\begin{subequations}\label{fghs1}
 \begin{eqnarray}
 && g_{N}(x, t)= \sum_{\mu=0,1}A_{2}(\mu)\exp\left\{\sum_{j=1}^{2N}\mu_{j}\zeta_{j}^{'}
  +\sum_{1\leq j<s}^{2N}\mu_{j}\mu_{s}\vartheta_{j,s}\right\}, \label{1-solution-of-dnls-g1}\\
 && f_{N}(x, t)=\sum_{\mu=0,1}A_{1}(\mu)\exp\left\{\sum_{j=1}^{2N}\mu_{j}\zeta_{j}^{''}
 +\sum_{1\leq j<s}^{2N}\mu_{j}\mu_{s}\vartheta_{j,s}\right\}, \label{1-solution-of-dnls-f1}  \\
 && h_{N}(x, t)= \sum_{\mu=0,1}A_{3}(\mu)\exp\left\{\sum_{j=1}^{2N}\mu_{j}\eta_{j}^{'}
 +\sum_{1\leq j<s}^{2N}\mu_{j}\mu_{s}\vartheta_{j,s}\right\},\label{1-solution-of-dnls-h1}\\
 && s_{N}(x, t)=\sum_{\mu=0,1}A_{1}(\mu)\exp\left\{\sum_{j=1}^{2N}\mu_{j}\eta_{j}^{''}
 +\sum_{1\leq j<s}^{2N}\mu_{j}\mu_{s}\vartheta_{j,s}\right\},
 \end{eqnarray}
\end{subequations}
where for $j,s=1,2,\cdots, N$,
\begin{eqnarray*}
 && \zeta_{j}=k_{j}x-\frac{1}{k_{j}}t+\zeta_{j}^{(0)},\quad w_{j}=\frac{1}{k_{j}},~~
 \eta_{j}=-l_{j}x+\frac{1}{l_{j}}t+\eta_{j}^{(0)},\quad m_{j}=-\frac{1}{l_{j}},\\
 && \zeta_{j}^{'}=\zeta_{j},\quad \zeta_{N+j}^{'}=\eta_{j}+\ln l_{j}+\frac{\pi}{2}i,~~
 \zeta_{j}^{''}=\zeta_{j}+\ln k_{j}+\frac{\pi}{2}i,\quad \zeta_{N+j}^{''}=\eta_{j}, \\
 &&\eta_{j}^{'}=\zeta_{j}+\ln k_{j}+\frac{\pi}{2}i,\quad \eta_{N+j}^{'}=\eta_{j},~~
 \eta_{j}^{''}=\eta_{j}+\ln l_{j}+\frac{\pi}{2}i,\quad\eta_{N+j}^{''}=\zeta_{j},\\
 && e^{\vartheta_{j,N+s}}=\frac{1}{(k_{j}-l_{s})(w_{j}+m_{s})},\quad (j,s=1,2,\ldots,N),\\
 && e^{\vartheta_{j,s}}=(k_{j}-k_{s})(w_{j}-w_{s}),\quad (j<s=2,3,\ldots,N),\\
 && e^{\vartheta_{N+j,N+s}}=-(l_{j}-l_{s})(m_{j}-m_{s}),\quad (j<s=2,3,\ldots, N),
 \end{eqnarray*}
$k_{j},l_{j},\zeta_{j}^{(0)},\eta_{j}^{(0)}\in \mathbb{C}$,
and $A_{1}(\mu)$, $A_{2}(\mu)$ and $A_{3}(\mu)$ take over all possible combinations of
$\mu_{j}=0,1$ $(j=1,2,\ldots,2N)$
and meanwhile satisfy the constraints
$\sum_{j=1}^{N}\mu_{j}=\sum_{j=1}^{N}\mu_{N+j}$,
$\sum_{j=1}^{N}\mu_{j}=1+\sum_{j=1}^{N}\mu_{N+j}$ and
$1+\sum_{j=1}^{N}\mu_{j}=\sum_{j=1}^{N}\mu_{N+j}$
respectively.

Consider reduction
\begin{equation}\label{reduc}
 l_{j}=-k_{j}^{*},~ m_{j}=w_{j}^{*},~\eta_{j}^{(0)}=\zeta_{j}^{(0)*},
\end{equation}
which indicates
$\zeta_{j}=\eta_{j}^{*}$, $e^{\vartheta_{j,(N+s)}*}=e^{\vartheta_{s(N+j)}}$
and $e^{\vartheta_{j,s}*}=e^{\vartheta_{(N+j),(N+s)}}$ and further
$s=f^{*}$, $h=g^{*}$.
Thus, \eqref{reduc} reduces the bilinear pKN$(-1)$ \eqref{bilinear-form} to a bilinear
equation for the bilinear FL equation \eqref{bil-FL} with $\delta=1$,
and its solution is given by \eqref{1-solution-of-dnls-g1} and \eqref{1-solution-of-dnls-f1} with
$(j,s=1,2,\ldots,N)$,
\begin{eqnarray*}
 && \zeta_{j}=k_{j}x-\frac{1}{k_{j}}t+\zeta_{j}^{(0)},\quad w_{j}=\frac{1}{k_{j}},\\
 && \zeta_{j}^{'}=\zeta_{j},\quad \zeta_{N+j}^{'}=\zeta_{j}^{*}+\ln (-k_{j}^{*})+\frac{\pi}{2}i,~~
 \zeta_{j}^{''}=\zeta_{j}+\ln k_{j}+\frac{\pi}{2}i,\quad \zeta_{N+j}^{''}=\zeta_{j}^{*}, \\
 && e^{\vartheta_{j,N+s}}=\frac{1}{(k_{j}+k_{s}^{*})(w_{j}+w_{s}^{*})},\quad (j,s=1,2,\ldots,N),\\
 && e^{\vartheta_{j,s}}=(k_{j}-k_{s})(w_{j}-w_{s}),\quad (j<s=2,3,\ldots,N).
 \end{eqnarray*}

\section{Proof of Theorem \ref{Theorem 1}}\label{app-B}

From the condition \eqref{wron-cond-x} one can calculate derivations of $f$, $g$, $h$ and $s$:
\begin{align*}
f_x=& |\t{N-1}, N+1; \W{M-1}|+ |\t{N}; \W{M-2},M|,\nonumber \\
f_t=&-\frac{1}{4}(|0,\b{N}; \W{M-1}|+|\t{N}; -1,\t{M-1}|), \\
f_{xt}=&-\frac{1}{4}(2|\t{N}; \W{M-1}|+|0,\b{N-1},N+1; \W{M-1}|+|0,\b{N}; \W{M-2},M| \\
& +|\t{N-1},N+1; -1,\t{M-1}|+|\t{N}; -1,\t{M-2},M|),
\end{align*}
\begin{align*}
g_x =&|\W{N-1},N+1; \t{M-1}|+|\W{N}; \t{M-2},M|,\\
g_t =&-\frac{1}{4}(|-1,\t{N}; \t{M-1}|+|\W{N}; 0,\b{M-1}|),\\
g_{xt}=&-\frac{1}{4}(2|\W{N}; \t{M-1}|+|-1,\t{N-1},N+1; \t{M-1}|+|-1,\t{N}; \t{M-2},M|\\
&+|\W{N-1},N+1; 0,\b{M-1}|+|\W{N}; 0,\b{M-2},M|).
\end{align*}
Substituting them into equation \eqref{bilinear-a}, the left-hand side gives rise to
\begin{equation}\label{eqd}
\begin{array}{rl}
~&D_xD_tg\cdot f+gf\\
=&g_{xt}f-g_xf_t-g_tf_x+gf_{xt}+gf \vspace{1ex}\\
=&-\frac{1}{4}|\t{N}; \W{M-1}|(2|\W{N}; \t{M-1}|+|-1,\t{N-1},N+1; \t{M-1}|+|-1,\t{N}; \t{M-2},M|
\vspace{1ex}\\
~&+|\W{N-1},N+1; 0,\b{M-1}|+|\W{N}; 0,\b{M-2},M|)+\frac{1}{4}(|0,\b{N}; \W{M-1}|
+|\t{N}; -1,\t{M-1}|)\vspace{1ex}\\
~&(|\W{N-1},N+1; \t{M-1}|\!+\!|\W{N}; \t{M-2},M|)\!+\!\frac{1}{4}(|\t{N-1}, N+1; \W{M-1}|
\!+\!|\t{N}; \W{M-2},M|)\vspace{1ex}\\
~&(|\!-\!1,\t{N}; \t{M\!-\!1}|+|\W{N}; 0,\b{M\!-\!1}|)\!-\!\frac{1}{4}|\W{N}; \t{M\!-\!1}|
(2|\t{N}; \W{M\!-\!1}|+|0,\b{N\!-\!1},N+1; \W{M\!-\!1}|\vspace{1ex}\\
~&+|0,\b{N}; \W{M\!-\!2},M|\!+\!|\t{N\!-\!1},N+1;\!-\!1,\t{M\!-\!1}|\!+\!|\t{N}; \!-\!1,\t{M\!-\!2},M|)
\!+\!|\W{N}; \t{M\!-\!1}||\t{N}; \W{M\!-\!1}|.\vspace{1ex}
\end{array}
\end{equation}
To simplify the right hand side, we making use of Lemma \ref{lemma 2}.
Consider $\Xi=|\t{N}; \W{M-1}|$ and $\gamma_{ij}=\partial_x^{-1}$ for $j=1,2,\cdots,N$
and $\gamma_{ij}=-\partial_x^{-1}$ for $j=N+1,N+2,\cdots,N+M$.
Using  Lemma \ref{lemma 2} and relation \eqref{wron-cond-x} we have
\[-2i \,\mathrm{Tr} (A^{-2}) |\t{N}; \W{M-1}| =|0,\b{N}; \W{M-1}|-|\t{N}; -1,\t{M-1}|,\]
where Tr$(A)$ stands for the trace of matrix $A$.
In a similar way, we have
\begin{align*}
&-2i \,\mathrm{Tr} (A^{-2}) |\W{N}; \t{M-1}|
=|-1,\t{N}; \t{M-1}|-|\W{N}; 0,\b{M-1}|,\\
&-2i \,\mathrm{Tr} (A^{-2}) |\t{N-1},N+1; \W{M-1}|
=|0,\b{N-1},N+1; \W{M-1}|+|\t{N}; \W{M-1}|\\
&~~~~~~~~~~~~~~~~~~~~~~~~~~~~~~~~~~~~~~~~~~~~~~~~~~ -|\t{N-1},N+1; -1,\t{M-1}|,\\
&-2i \,\mathrm{Tr} (A^{-2}) |\t{N}; \W{M-2},M|=|0,\b{N}; \W{M-2},M|-|\t{N}; \W{M-1}|
-|\t{N}; -1,\t{M-2},M|,\\
&-2i \,\mathrm{Tr} (A^{-2}) |\W{N-1},N+1; \t{M-1}|=|-1,\t{N-1},N+1; \t{M-1}|
+|\W{N}; \t{M-1}|\\
&~~~~~~~~~~~~~~~~~~~~~~~~~~~~~~~~~~~~~~~~~~~~~~~~~~-|\W{N-1}, N+1;0,\b{M-1}|,\\
&-2i \,\mathrm{Tr} (A^{-2}) |\W{N}; \t{M-2},M|=|-1,\t{N}; \t{M-2},M|-|\W{N}; \t{M-1}|
-|\W{N}; 0,\b{M-2},M|.
\end{align*}
From these relations we have
\begin{equation*}
\begin{array}{rl}
~& |\t{N}; \W{M-1}|(|-1,\t{N-1},N+1; \t{M-1}|+|-1,\t{N}; \t{M-2},M|-|\W{N}; 0,\b{M-2},M|\vspace{1ex}\\
~&-|\W{N-1},N+1; 0,\b{M-1}|)\vspace{1ex}\\
=&(|\W{N-1},N+1; \t{M-1}|+|\W{N}; \t{M-2},M|)(|0,\b{N}; \W{M-1}|-|\t{N}; -1,\t{M-1}|),\vspace{1ex}\\
~&|\W{N}; \t{M-1}|(|0,\b{N-1},N+1; \W{M-1}|+|0,\b{N}; \W{M-2},M|-|\t{N-1},N+1; -1,\t{M-1}|\vspace{1ex}\\
~&-|\t{N}; -1,\t{M-2},M|)\vspace{1ex}\\
=&(|\t{N-1}, N+1; \W{M-1}|+|\t{N}; \W{M-2},M|)(|-1,\t{N}; \t{M-1}|+|\W{N}; 0,\b{M-1}|),
\end{array}
\end{equation*}
by which we can reduce \eqref{eqd} to
\begin{equation*}
\begin{array}{rl}
~&g_{xt}f-g_xf_t-g_tf_x+gf_{xt}+gf \vspace{1ex}\\
=&-\frac{1}{2}|\t{N}; \W{M-1}|(|\W{N-1},N+1; 0,\b{M-1}|+|\W{N}; 0,\b{M-2},M|)\vspace{1ex}\\
&+\frac{1}{2}|\t{N}; -1,\t{M-1}|(|\W{N-1},N+1; \t{M-1}|+|\W{N}; \t{M-2},M|)\vspace{1ex}\\
&+\frac{1}{2}|\W{N}; 0,\b{M-1}|(|\t{N-1}, N+1; \W{M-1}|+|\t{N}; \W{M-2},M|)\vspace{1ex}\\
&-\frac{1}{2}|\W{N}; \t{M-1}|(|\t{N-1},N+1; -1,\t{M-1}|+|\t{N}; -1,\t{M-2},M|),
\end{array}
\end{equation*}
in which some terms can vanish by using   Lemma \ref{lemma 1} and we then come to
\begin{equation}\label{eqd2}
\begin{array}{rl}
~&g_{xt}f-g_xf_t-g_tf_x+gf_{xt}+gf \vspace{1ex}\\
 = & \frac{1}{2}(-|\W{N-1}; \W{M-1}||\t{N+1}; 0,\b{M-1}|+|\W{N}; \W{M-2}||\t{N}; 0,\b{M}|\vspace{1ex}\\
~&+|\t{N+1}; \t{M-1}||\W{N-1}; -1,\t{M-1}|-|\t{N}; \t{M}||\W{N}; -1,\t{M-2}|).
\end{array}
\end{equation}
To show the right hand side being zero, let us employ relation \eqref{wron-cond-x} to rewrite
some double Wronskians as
\begin{align*}
&|\W{N-1}; \W{M-1}|= (-1)^{M}(-2i)^{N+M}|A|^{-2}|\t{N}; \t{M}|,\\
&|\W{N}; \W{M-2}|= (-1)^{M-1}(-2i)^{N+M}|A|^{-2}|\t{N+1}; \t{M-1}|,\\
&|\W{N-1}; -1,\t{M-1}|=(-1)^{M}(-2i)^{N+M}|A|^{-2}|\t{N}; 0,\b{M}|,\\
&|\W{N}; -1,\t{M-2}|=(-1)^{M-1}(-2i)^{N+M}|A|^{-2}|\t{N+1}; 0,\b{M-1}|.
\end{align*}
Substituting them into \eqref{eqd2} we immediately find the right hand side vanished.
Thus, we have completed the proof for Eq.\eqref{bilinear-a}.
Eqs.\eqref{bilinear-b},  \eqref{bilinear-c} and  \eqref{bilinear-d} can be proved similarly.

\end{appendix}


\begin{thebibliography}{99}

\bibitem{Fokas-PD-1995}A.S. Fokas,
                    On a class of physically important integrable equations,
                    Physica D, 87 (1995) 145-150.

\bibitem{Lenells-F-Non-2009}  J. Lenells, A.S. Fokas,
                    On a novel integrable generalization of the nonlinear Schr\"odinger equation,
                    Nonlinearity, 22 (2009) 11-27.

\bibitem{Lenells-SAPM-2009}  J. Lenells,
                        Exactly solvable model for nonlinear pulse propagation in optical fibers,
                        Stud. Appl. Math., 123 (2009) 215-232.


\bibitem{GIK-1980} V.S. Gerdjikov, M.I. Ivanov, P.P. Kulish,
           Quadratic bundle and nonlinear equations,
           Theor. Math. Phys., 44 (1980) 784-795.
\bibitem{GI-1982} V.S. Gerdjikov, M.I. Ivanov,
           The quadratic pencil of general type and the nonlinear evolution. Hierarchies of Hamiltonian structures,
           JINR preprint E2-82-595, Dubna, USSR (1982) (17pp).
\bibitem{M-1976} A.V. Mikhailov,
                        Integrability of the two-dimensional Thirring model,
                        JETP Lett., 23 (1976) 320-323.
\bibitem{KN-1977} D.J. Kaup, A.C. Newell,
           On the Coleman correspondence and the solution of the massive Thirring model,
           Lett. AL Nuovo Cimento, 20 (1977) 325-331.
\bibitem{NCQV-1983} F.W. Nijhoff, H.W. Capel, G.R.W. Quispel, J. van der Linden,
           The derivative nonlinear Schr\"odinger equation and the massive Thirring model,
           Phys. Lett. A, 93 (1983) 455-458.

\bibitem{Ai-Xu-AML-2019} L.P. Ai, J. Xu,
                   On a Riemann-Hilbert problem for the Fokas-Lenells equation,
                   Appl. Math. Lett., 87 (2019) 57-63.

\bibitem{Zhao-F-JNMP-2021} Y. Zhao,  E.G. Fan,
             Inverse scattering transformation for the Fokas-Lenells equation with nonzero boundary conditions,
             J. Nonl. Math. Phys., 28  (2021)  38-52.

\bibitem{Lenells-JNS-2010}  J. Lenells,
                   Dressing for a novel integrable generalization of the nonlinear Schr\"odinger equation,
                   J. Nonl. Sci., 20 (2010) 709-722.

\bibitem{Zhao-F-JNMP-2013}P. Zhao, E.G. Fan, Y. Hou,
                   Algebro-geometric solutions and their reductions for the Fokas-Lenells hierarchy,
                   J. Nonl. Math. Phys., 20 (2013) 355-393.

\bibitem{He-XP-JPSJ-2012} J.S. He, S.W. Xu, K. Porsezian,
                  Rogue waves of the Fokas-Lenells equation,
                  J. Phys. Soc. Japan, 81 (2012) 124007 (4pp).

\bibitem{Xu-HCP-MMAS-2014} S.W. Xu, J.S. He, Y. Cheng, K. Porseizan,
                 The n-order rogue waves of Fokas-Lenells equation,
                 Math. Meth. Appl. Sci., 38 (2015) 1106-1126.

\bibitem{Wang-XL-AML-2020}Y. Wang, Z.J. Xiong, L. Ling,
                Fokas-Lenells equation: Three types of Darboux transformation and multi-soliton solutions,
                Appl. Math. Lett., 107 (2020) 106441 (8pp).

\bibitem{Wang-QGM-ND-2019}Z. Wang, L. He, Z. Qin, R. Grimshaw, G. Mu,
           High-order rogue waves and their dynamics of the Fokas-Lenells equation revisited: a variable separation technique,
           Nonlinear Dyn., 98 (2019) 2067-2077.

\bibitem{Matsuno-JPA-2012a} Y. Matsuno,
                         A direct method of solution for the Fokas-Lenells derivative nonlinear
                         Schr\"odinger equation: I. Bright soliton solutions,
                        J. Phys. A: Math. Theor., 45 (2012) 235202 (19pp).

\bibitem{Matsuno-JPA-2012b} Y. Matsuno,
                          A direct method of solution for the Fokas-Lenells derivative nonlinear
                          Schr\"odinger equation: II. Dark soliton solutions,
                          J. Phys. A: Math. Theor., 45 (2012) 475202 (31pp).

\bibitem{Vek-Non-2011} V.E. Vekslerchik,
            Lattice representation and dark solitons of the Fokas-Lenells equation,
            Nonlinearity, 24 (2011) 1165-1175.

\bibitem{MRT-IP-1994} I. Merola, O. Ragnisco, G.Z. Tu,
             A novel hierarchy of integrable lattices,
            Inverse Probl., 10 (1994) 1315-1334.

\bibitem{LiuZLX-Opt-2020} F.F. Liu, C.C. Zhou, X. L\"u, H.T. Xu,
            Dynamic behaviors of optical solitons for Fokas-Lenells equation in optical fiber,
            Optik,   224 (2020) 165237 (9pp).


\bibitem{ZhaZSZ-RMP-2014} D.J Zhang, S.L. Zhao, Y.Y. Sun, J. Zhou,
         Solutions to the modified Korteweg-de Vries equation,
         Rev. Math. Phys., 26 (2014) No.14300064 (42pp).

\bibitem{Zha-Wro-2019}	D.J. Zhang,
         Wronskian solutions of integrable systems,
         in Nonlinear Systems and Their Remarkable Mathematical Structures,
         Volume 2, Eds. N. Euler, M.C. Nucci, CRC Press,  Taylor \& Francis,  Boca Raton, (2020) 415-444.

\bibitem{ChenDLZ-SAPM-2018}K. Chen, X. Deng, S.Y. Lou, D.J. Zhang,
                 Solutions of nonlocal equations reduced from the AKNS hierarchy,
                 Stud. Appl. Math., 141 (2018) 113-141.

\bibitem{KakeiSS-JPSJ-1995} S. Kakei, N. Sasa, J. Satsuma,
           Bilinearization of a generalized derivative nonlinear Schr\"odinger equation,
           J. Phys. Soc. Japan, 64 (1995) 1519-1523.

\bibitem{ZhaiC-PLA-2008} W. Zhai, D.Y. Chen,
           Rational solutions of the general nonlinear Schr\"odinger equation with derivative,
           Phys. Lett. A,  372 (2008) 4217-4221.

\bibitem{AblM-PRL-2013} M.J. Ablowitz, Z.H. Musslimani,
                        Integrable nonlocal nonlinear Schr\"odinger equation,
                        Phys. Rev. Lett., 110 (2013) 064105 (5pp).

\bibitem{Zhou-SAPM-2018} Z.X. Zhou,
          Darboux transformations and global explicit solutions for nonlocal Davey-Stewartson I equation,
         Stud. Appl. Math., 141 (2018) 186-204.

\bibitem{AblFLM-SAPM} M.J. Ablowitz, B.F. Feng, X.D. Luo, Z.H. Musslimani,
         Reverse space-time nonlocal sine-Gordon/sinh-Gordon equations with nonzero boundary conditions,
         Stud. Appl. Math., 141 (2018) 267-307.

\bibitem{YY-SAPM-2018} B. Yang, J.K. Yang,
        Transformations between nonlocal and local integrable equations,
        Stud. Appl. Math., 140 (2018) 178-201.

\bibitem{Caud-SAPM-2018} V. Caudrelier,
        Interplay between the inverse scattering method and Fokas's unified transform with an application,
        Stud. Appl. Math., 140 (2018) 3-26.

\bibitem{AblM-JPA-2019} M.J. Ablowitz, Z.H. Musslimani,
        Integrable nonlocal asymptotic reductions of physically significant nonlinear equations,
        J. Phys. A: Math. Theor., 52 (2019) 15LT02 (8pp)

\bibitem{YY-LMP-2019} B. Yang, J.K. Yang,
        PT-symmetric nonlinear Schr\"odinger equation,
        Lett. Math. Phys., 109 (2019) 945-973.

\bibitem{Lou-SAPM-2019} S.Y. Lou,
         Prohibitions caused by nonlocality for nonlocal Boussinesq-KdV type systems,
         Stud. Appl. Math., 143 (2019) 123-138.

\bibitem{BioW-SAPM-2019} G. Biondini, Q. Wang,
          Discrete and continuous coupled nonlinear integrable systems via the dressing method,
           Stud. Appl. Math., 142 (2019) 139-161.

\bibitem{Lou-CTP-2020} S.Y. Lou,
          Multi-place physics and multi-place nonlocal systems,
          Commun. Theor. Phys., 72 (2020) 057001 (13pp).

\bibitem{AblLM-Nonl-2020} M.J. Ablowitz, X.D. Luo, Z.H. Musslimani,
          Discrete nonlocal nonlinear Schr\"odinger systems: Integrability, inverse scattering and solitons,
          Nonlinearity, 33 (2020) 3653-707.

\bibitem{RaoCPMH-PD-2020} J.G. Rao, Y. Cheng, K. Porsezian, D. Mihalache, J.S. He,
          PT-symmetric nonlocal Davey-Stewartson I equation: Soliton solutions with nonzero background,
          Physica D, 401 (2020) 132180 (28pp).

\bibitem{GurPKZ-PLA-2020} M. G\"urses, A. Pekcan, K. Zheltukhin,
           Discrete symmetries and nonlocal reductions,
           Phy. Lett. A, 384 (2020) 126065 (5pp).

\bibitem{LiFW-SAPM-2020} M. Li, H.M. Fu, C.F. Wu,
          General soliton and (semi-)rational solutions to the nonlocal Mel'nikov equation on the periodic background,
           Stud. Appl. Math., 145 (2020) 97-136.

\bibitem{RybS-JDE-2021} Y. Rybalko, D. Shepelsky,
       Long-time asymptotics for the nonlocal nonlinear Schr\"odinger equation with step-like initial data,
      J. Diff. Equ., 270 (2021) 694-724.

\bibitem{RybS-CMP-2021} Y. Rybalko, D. Shepelsky,
       Long-time asymptotics for the integrable nonlocal focusing nonlinear Schr\"odinger equation
       for a family of step-like initial data,
       Commun. Math. Phys., 382 (2021) 87-121.



\bibitem{KaupN-JMP-1978}  D.J. Kaup, A.C. Newell,
             An exact solution for a derivative nonlinear Schr\"odinger equation,
            J. Math. Phys., 19 (1978) 798-801.

\bibitem{Nimmo-NLS-1983} J.J.C. Nimmo,
        A bilinear B\"{a}cklund transformation for the nonlinear Schr\"{o}dinger equation,
        Phys. Lett. A, 99 (1983) 279-280.

\bibitem{ChenZ-AML-2018}K. Chen, D.J. Zhang,
                       Solutions of the nonlocal nonlinear Schr\"odinger hierarchy via reduction,
                       Appl. Math. Lett., 75 (2018) 82-88.

\bibitem{Deng-AMC-2018} X. Deng, S.Y. Lou, D.J. Zhang,
         Bilinearisation-reduction approach to the nonlocal discrete nonlinear Schr\"odinger equations,
         Appl. Math. Comput., 332 (2018) 477-483.

\bibitem{FreN-PLA-1983} N.C. Freeman, J.J.C. Nimmo,
         Soliton solutions of the KdV and KP equations: the Wronskian technique,
         Phys. Lett. A, 95 (1983) 1-3.

\bibitem{ZDJ-arxiv} D.J. Zhang,
         Notes on solutions in Wronskian form to soliton equations: Korteweg de Vries-type,
          arXiv:nlin.SI/0603008.

\bibitem{Hirota-1974} R. Hirota,
          A new form of B\"acklund transformations and its relation to the inverse scattering problem,
          Prog. Theor. Phys., 52 (1974) 1498-1512.


\bibitem{MatS-Book-1991}  V.B. Matveev, M.A. Salle,
         Darboux Transformations and Solitons,
         Springer-Verlag, Berlin, 1991.

\bibitem{ZhangY-AML} Q.Y. Zhang, Y. Zhang, R.S. Ye,
          Exact solutions of nonlocal Fokas-Lenells equation,
          Appl. Math. Lett., 98 (2019) 336-343.

\bibitem{Liu-ROMP-2020} S.M. Liu, H. Wu, D.J. Zhang,
          New results on the classical and nonlocal Gross-Pitaevskii equation with a parabolic potential,
          Rep. Math. Phys., 86  (2020) 271-292.


\bibitem{SY-2003} V.S. Shchesnovich, J.K. Yang,
          Higher-order solitons in the $N$-wave system,
          Stud. Appl. Math., 110 (2003) 297-332.
\bibitem{ZhangZ-SAPM-2013} D.J. Zhang, S.L. Zhao,
          Solutions to the ABS lattice equations via generalized Cauchy matrix approach,
          Stud. Appl. Math., 131 (2013) 72-103.
\bibitem{T-1958}  W.E. Thirring,
         A soluble relativistic field theory,
         Ann. Phys., 3 (1958) 91-112.
\bibitem{W-1967} A.S. Wightman,
         Introduction to some aspects of the relativistic dynamics of quantized fields,
         in ``1964 Carg$\grave{\mathrm{e}}$se Summer School Lectures" (M. Levy, Ed.),
         Gordon and Breach, New York, 1967, page 171-291.
\bibitem{KM-1977} E.A. Kuznetsov, A.V. Mikhailov,
        On the complete integrability of the two-dimensional classical Thirring model,
        Theore. Math. Phys., 30  (1977) 193-200.
\bibitem{WS-1983} M. Wadati, K. Sogo,
        Gauge transformations in soliton theory,
        J. Phys. Soc. Japan, 52 (1983) 394-398.




\end{thebibliography}
\end{document}